
\documentclass[11pt]{article}



\usepackage{fullpage}  
\usepackage{float}
\usepackage{pb-diagram}  		

\usepackage{url}

\usepackage{algorithm} 
\usepackage{algorithmic} 
\usepackage{multirow} 
\usepackage{hhline}  
\usepackage{amsmath}
\usepackage{xcolor}

\usepackage{graphicx}
\usepackage{amscd}
\usepackage{amssymb,mathrsfs}
\input{epsf.sty}
\usepackage{amsthm,amscd}
\usepackage{color}
\usepackage{latexsym}
\usepackage{epic}
\usepackage{appendix}
\usepackage{enumerate}
\usepackage{longtable}
\usepackage{lscape}
\usepackage{extarrows}
\usepackage{epstopdf}
\usepackage[driverfallback=dvipdfm]{hyperref}

\usepackage{verbatim}

\newlength\myindent


\newcommand{\whcomm}[2]{{}{#2}}
\newcommand{\whfirrev}[2]{{}{#2}}

\newcommand{\comm}[1]{}

\newcommand{\pacomm}[1]{{}}

\newcommand{\delete}[1]{}

\newcommand{\Rmnum}[1]{\expandafter\@slowromancap\romannumeral #1@}
\newcommand{\inner}[3][]{{\left\langle #2,#3 \right\rangle_{#1}}}

\newtheorem{theo}{Theorem}[section]
\newtheorem{lemm}{Lemma}[section]

\newtheorem{rema}{Remark}[section]

\newtheorem{cond}{Condition}[section]

\numberwithin{equation}{section}

\DeclareMathOperator{\T}{\mathrm{T}}
\DeclareMathOperator{\Hess}{\mathrm{Hess}}
\DeclareMathOperator{\grad}{\mathrm{grad}}

\DeclareMathOperator{\D}{\mathrm{D}}

\DeclareMathOperator{\trace}{\mathrm{trace}}

\DeclareMathOperator{\diag}{\mathrm{diag}}

\DeclareMathOperator{\GL}{\mathrm{GL}}

\DeclareMathOperator{\rank}{\mathrm{rank}}

\begin{document}

\title{Blind Deconvolution by a Steepest Descent Algorithm on a Quotient Manifold}
\author{
Wen Huang 
    \thanks{
            huwst08@gmail.com,
            Department of Computational and Applied Mathematics,
            Rice University,
            Houston, TX 77005-1827, USA
    } 
              \and 
Paul Hand
    \thanks{Department of Computational and Applied Mathematics,
            Rice University,
            Houston, TX 77005-1827, USA
    } 
}

\maketitle


\begin{abstract}
In this paper, we propose a Riemannian steepest descent method for solving a blind deconvolution problem. We prove that the proposed algorithm with an appropriate initialization will recover the exact solution with high probability when the number of measurements is, up to log-factors, the information-theoretical minimum scaling. The quotient structure in our formulation yields a simpler penalty term in the cost function compared to~\cite{LLSW2016}, which eases the convergence analysis and yields a natural implementation. Empirically, the proposed algorithm has better performance than the Wirtinger gradient descent algorithm and an alternating minimization algorithm in the sense that i) it needs fewer operations, such as DFTs and matrix-vector multiplications, to reach a similar accuracy, and ii) it has a higher probability of successful recovery in synthetic tests. \whcomm{}{An image deblurring problem is also used to demonstrate the efficiency and effectiveness of the proposed algorithm.}
\end{abstract}

\section{Introduction}

We consider the problem of separating two unknown signals $\mathbf{w} \in \mathbb{C}^L$ and $\mathbf{x} \in \mathbb{C}^L$ given their circular convolution $\mathbf{y} \in \mathbb{C}^L$, which is an instance of blind deconvolution. Blind deconvolution is of interest due to many applications, e.g., astronomy, medical imaging, optics and communications engineering~\cite{JC1993,WP1998,CamEgi2007,LWDF2011,WBSJ2015}. The blind deconvolution problem is ill-posed without any additional assumptions. One commonly-used assumption is to suppose that the two signals $\mathbf{w}$ and $\mathbf{x}$ belong to known subspaces~\cite{ARR2014,LLJB2015,LingStro2015}. Specifically, the signal $\mathbf{w}$ and $\mathbf{x}$ can be written as
\begin{equation*} \label{BD:e6}
\mathbf{w} = \mathbf{B} h, h \in \mathbb{C}^K \hbox{ and } \mathbf{x} = \overline{\mathbf{C} m}, m \in \mathbb{C}^N,
\end{equation*}
for some matrices $\mathbf{B} \in \mathbb{C}^{L \times K}$ and $\mathbf{C} \in \mathbb{C}^{L \times N}$, where overbar denotes the complex conjugate.
Therefore, the blind deconvolution model with noise $\mathbf{e} \in \mathbb{C}^L$ is to find $h$ amd $m$ such that
\begin{equation} \label{BD:e7}
\mathbf{y} = \mathbf{w} * \mathbf{x} + \whfirrev{}{\mathbf{e}} = \mathbf{B} h * \overline{\mathbf{C} m} + \mathbf{e},
\end{equation}
given $\mathbf{y} \in \mathbb{C}^L, \mathbf{B} \in \mathbb{C}^{L \times K}$ and $\mathbf{C} \in \mathbb{C}^{L \times N}$, where $*$ denotes circular convolution, i.e., $\mathbf{y}_i = \sum_{j = 1}^{L} \mathbf{w}_j \mathbf{x}_{i + 1 - j}$ and the subscript $i + 1 - j$ of $\mathbf{x}$ is understood to be modulo $\{1, 2, \ldots, L\}$. For theoretical and numerical purposes, we express~\eqref{BD:e7} in the Fourier domain, see e.g.,~\cite{ARR2014,LLSW2016}. Let $\mathbf{F}$ denote the $L \times L$ unitary Discrete Fourier Transform (DFT) matrix.
Taking Fourier transform for both sides of~\eqref{BD:e7} yields
\begin{equation} \label{BD:e8}
\frac{\mathbf{F} \mathbf{y}}{\sqrt{L} } = (\mathbf{F} \mathbf{B} h) \odot (\mathbf{F} \overline{\mathbf{C} m}) + \frac{\mathbf{F} \mathbf{e}}{\sqrt{L}},
\end{equation}
where $\odot$ denotes the Hadamard product. Throughout this paper, we denote $\mathbf{F} \mathbf{y} / \sqrt{L}$, $\mathbf{F} \mathbf{B}$, $\mathbf{F} \bar{\mathbf{C}}$, and $\mathbf{F} \mathbf{e} / \sqrt{L}$ by $y$, $B$, $\bar{C}$, and $e$ respectively. Therefore, \eqref{BD:e8} becomes $y = B h \odot \overline{C m} + e$ and the blind deconvolution problem is equivalent to
\begin{equation} \label{BD:e9}
\hbox{find $h \in \mathbb{C}^K, m \in \mathbb{C}^N$ such that } y = B h \odot \overline{C m} + e.
\end{equation}

In recent years, two important frameworks~\cite{ARR2014,LLSW2016} have been proposed to solve~\eqref{BD:e9} and
to admit a rigorous recovery guarantee. In~\cite{ARR2014}, a convex optimization framework is proposed for the blind deconvolution problem. Specifically, the problem~\eqref{BD:e9} is reformulated into a problem of recovering a rank-one matrix from an underdetermined system of linear equations by the well-known lifting trick\footnote{In~\cite{ARR2014}, the vectors $h$ and $m$ are in real domain, i.e., $h \in \mathbb{R}^K$ and $m \in \mathbb{R}^N$.}.
Replacing the rank-one constraint with a nuclear norm penalty converts the original problem into a semidefinite program.
Theoretical results given in the paper show that
this program enjoys a recovery guarantee under reasonable condition, such as appropriate sample complexity. The primary drawback is that the computational cost of the semidefinite program is high for large-scale problems since the lifting trick squares the dimension of the problem.
The paper indeed proposes an algorithm which is based on the matrix-factorization in~\cite{BurMon03}; however, the convergence guarantee to the solution is not provided therein.
The paper~\cite{LLSW2016} overcomes the difficulty by resorting to a nonconvex approach directly. The idea closely follows from~\cite{CLS2016}, which shows that with high probability i) a point sufficiently close to the solution can be computed, and ii) the proposed algorithm converges to the desired solution with the computed point as the initial iterate.
The cost function in~\cite{LLSW2016} is
\begin{equation} \label{BD:e10}
F(h, m) = \|y - B h \odot \overline{C m}\|_2^2 = \|y - \diag(B h m^* C^*)\|_2^2.
\end{equation}
The minimizer of $F(h, m)$ is not unique since a solution $(h_{\sharp}, m_{\sharp})$ implies that $(h_{\sharp} p^{-1}, m_{\sharp} p)$, for any $p \in \mathbb{C}$, $p\neq 0$ is also a solution. To ensure numerical stability, the authors in~\cite{LLSW2016} add a penalty function to~\eqref{BD:e10} so that $\|h_{\sharp}\|_2$ and $\|m_{\sharp}\|_2$ are roughly equal in magnitude.\footnote{The penalty term in~\cite{LLSW2016} also includes a term for coherence between $h$ and the rows of $B$. This will be discussed later.} Note that $h m^*, h \neq 0, m \neq 0$ is a rank-one matrix. The cost function~\eqref{BD:e10} is well-defined as a function:
\begin{align*}
f:&\mathbb{C}_1^{K \times N} \rightarrow \mathbb{R}\\
&X \mapsto \|y - \diag(B X C^*)\|_2^2,
\end{align*}
where $\mathbb{C}_1^{K \times N} = \{X \in \mathbb{C}^{K \times N} \mid \rank(X) = 1\}$ denotes the set of rank-one matrices.
In practice, working $\mathbb{C}^{K \times N}$ is usually inefficient especially when $K$ and $N$ are large.
Therefore, we resort to the quotient manifold $\mathbb{C}_*^{K} \times \mathbb{C}_*^{N} / \mathbb{C}_*$ since this manifold is a diffeomorphism to $\mathbb{C}_1^{K \times N}$, where $\mathbb{C}_*^n$ denotes a non-zero $n$-dimensional vector (the subscript $n$ is dropped when $n = 1$), and the definition of $\mathbb{C}_*^{K} \times \mathbb{C}_*^{N} / \mathbb{C}_*$ is given in Sections~\ref{BD:s11} and~\ref{BD:s4}. 
A representation of a point in $\mathbb{C}_*^{K} \times \mathbb{C}_*^{N} / \mathbb{C}_*$ is in the space $\mathbb{C}_*^{K} \times \mathbb{C}_*^{N}$, which requires much smaller storage than $\mathbb{C}^{K \times N}$. By slightly abusing notation, the cost function defined on the quotient manifold is
\begin{align} \label{BD:e14}
f:&\mathbb{C}_*^{K} \times \mathbb{C}_*^{N} / \mathbb{C}_* \rightarrow \mathbb{R} \nonumber \\
&\pi(h, m) \mapsto \|y - \diag(B h m^* C^*)\|_2^2,
\end{align}
where $\pi(h, m)$ denotes the point in the quotient manifold corresponding to $(h, m)$. We refer to Sections~\ref{BD:s11} and~\ref{BD:s4} for the more detailed discussions about the quotient manifold.
One can apply manifold optimization techniques to find the minimizer of~\eqref{BD:e14} directly.

In this paper, we propose a Riemannian steepest descent method for solving the blind deconvolution problem~\eqref{BD:e7}. We prove that the proposed algorithm with an appropriate initialization will recover the exact solution with high probability when the number of measurements is, up to log-factors, the information-theoretical minimum scaling. The quotient structure in our formulation yields a simpler penalty term in the cost function compared to~\cite{LLSW2016}, which eases the convergence analysis and yields a natural implementation. In addition, unlike the cost function in~\cite{LLSW2016}, the objective $f$ is strongly convex in a neighborhood of the solution, which yields efficiency in optimization algorithms since the convergence rate of a steepest descent method is related to the condition number of the Hessian at the minimizer~\cite{NocWri2006,AMS2008}.
Empirically, the proposed algorithm has better performance than the Wirtinger gradient descent algorithm in~\cite{LLSW2016} and an alternating minimization algorithm, see e.g.,~\cite{You1994} in the sense that i) it needs fewer operations, such as DFTs and matrix-vector multiplications, to reach a similar accuracy, and ii) it has a higher probability of successful recovery.


Applying Riemannian optimization methods for solving signal processing problems, of course, is not new. To name a few, a Riemannian trust-region Newton method is used in~\cite{Boumal2013} to solve the synchronization of rotations problem. Various Riemannian methods are used to solve the matrix completion problem~\cite{BouAbs2011,Vandereycken2013,Mishra2014,HAGH2016,WCCL2016}. A limited memory version of Riemannian BFGS method with adaptive rank strategy is used to solve the phase retrieval problem~\cite{HGZ2017}. It has been shown that Riemannian methods perform very well and are competitive, and sometimes the best, methods in many applications.

In this paper, we will use the following notation.
$\mathbb{C}_r^{K \times N} = \{X \in \mathbb{C}^{K \times N} \mid \rank(X) = r\}$ denotes the set of fixed rank matrices. 
$\mathbb{C}_*^{n \times p} = \{X \in \mathbb{C}^{n \times p} \mid \rank(X) = p\}$ denotes the noncompact Stiefel manifold. In particular, $\mathbb{C}_*^n$ denotes the set of non-zero vectors. $\GL(r) = \{X \in \mathbb{C}^{r \times r} \mid \rank(X) = r\}$ denotes the generalized linear group.
Superscript $*$ denotes the conjugate transpose operator. Overbar $\bar{\cdot}$ denotes the complex conjugate. $B$ is denoted by $[b_1 b_2 \ldots b_L]^*$ and $[\mathfrak{b}_1 \mathfrak{b}_2 \ldots \mathfrak{b}_K]$; $C$ denotes $[c_1 c_2 \ldots c_L]^*$ and $[\mathfrak{c}_1 \mathfrak{c}_2 \ldots \mathfrak{c}_N]$;
$\mathcal{A}$ denote the linear operator $\mathcal{A}(Z) = \{b_l^* Z c_l\}_{l=1}^L = \diag(B Z C^*)$. The adjoint operator $\mathcal{A}^*$ therefore is $\mathcal{A}^*(z) = \sum_{l = 1}^L z_l b_l c_l^* = B^* \diag(z) C$.
Given a function $\mathbf{f}: \mathcal{M} \rightarrow \mathbb{R}$ where $\mathcal{M} \subset \mathbb{C}^{n}$ is a Riemannian manifold, $\nabla \mathbf{f}$ denotes the Euclidean gradient of $\mathbf{f}$, i.e., with respect to the metric $\inner[2]{\eta}{\xi} = \mathrm{Re}(\trace(\eta^* \xi))$ and $\grad \mathbf{f}$ denotes the Riemannian gradient of $\mathbf{f}$, i.e., with respect to the Riemannian metric of $\mathcal{M}$. Given a function $\mathbf{h}: \mathcal{M}_1 \rightarrow \mathcal{M}_2$, notation $\D \mathbf{h}(x) [\eta_x]$ denotes the directional derivative of $h$ at $x$ along direction $\eta_x \in \T_x \mathcal{M}_1$, where $\mathcal{M}_1$ and $\mathcal{M}_2$ are two manifolds, and $\T_x \mathcal{M}_1$ is the tangent space of $\mathcal{M}_1$ at $x$.

This paper is organized as follows. Section~\ref{BD:s11} presents the Riemannian framework. Section~\ref{BD:s14} gives the initialization, proposed Riemannian method and its convergent results. Section~\ref{BD:s12} derives the Riemannian ingredients of the quotient manifold. Section~\ref{BD:s15} reports numerical experiments and Section~\ref{BD:s16} states the conclusion.

\section{Problem Statement} \label{BD:s11}

In order to state the method, we first introduce coherence.
As observed in~\cite{ARR2014,LLSW2016}, the number of measurements required for solving the blind deconvolution problem depends on how much $h_{\sharp}$ is correlated with the rows of the matrix $B$. This phenomenon also holds for the Riemannian framework. The coherence between the rows of $B$ and $h_{\sharp}$ is defined as\footnote{In~\cite[(3.2)]{LLSW2016}, the authors use the term 'incoherence' rather than 'coherence'.}
\begin{equation*}
\mu_h^2 = \frac{L \|B h_{\sharp}\|_{\infty}^2}{\|h_{\sharp}\|_2^2}.
\end{equation*}
Note that a small coherence $\mu_h$ is preferred.


For a fixed-rank manifold in real space, multiple representations have been proposed to represent a point in $\mathbb{R}_r^{K \times N}$, see~\cite{Vandereycken2013,Mishra2014}. The same idea can be applied for the fixed-rank manifold in complex space.
In this paper, the quotient manifold $\mathbb{C}_*^{K} \times \mathbb{C}_*^{N} / \GL(1) = \mathbb{C}_*^{K} \times \mathbb{C}_*^{N} / \mathbb{C}_*$ is used, where $\mathbb{C}_*^{K} \times \mathbb{C}_*^{N} / \mathbb{C}_* = \{[(h, m)] \mid (h, m) \in \mathbb{C}_*^{K} \times \mathbb{C}_*^{N}\}$, and $[(h, m)] = \{(h p^{-1}, m p^*) \mid p \in \mathbb{C}_*\}$ is the orbit of an element $(h, m) \in \mathbb{C}_*^{K} \times \mathbb{C}_*^{N}$ under the the group $\mathbb{C}_*$. The group action of $\mathbb{C}_*$ on $\mathbb{C}_*^{K} \times \mathbb{C}_*^{N}$ is $(h, m) \bullet p = (h p^{-1}, m p^*)$, where $p \in \mathbb{C}_*$ and $(h, m) \in \mathbb{C}_*^{K} \times \mathbb{C}_*^{N}$. Note that the manifolds here are over $\mathbb{R}$, i.e., it has a real parameterization.

The cost function that we consider in the Riemannian framework is
\begin{align}
\tilde{f}:&\mathbb{C}_*^{K} \times \mathbb{C}_*^{N} / \mathbb{C}_* \rightarrow \mathbb{R} \nonumber \\
&\pi(h, m) \mapsto \|y - \diag(B h m^* C^*)\|_2^2 + G(\pi(h, m)), \label{BD:e13} 
\end{align}
where
\begin{equation} \label{BD:e36}
G(\pi(h, m)) = \rho \sum_{i = 1}^L G_0\left(\frac{L |b_i^* h|^2 \|m\|_2^2}{8 d^2 \mu^2}\right),
\end{equation}
$G_0(t) = \max(t - 1, 0)^2$ is a $C^1$ function, $\rho \geq d^2 + 2.5 \|e\|_2^2$, $\frac{9}{10} d_* \leq d \leq \frac{11}{10} d_*$, $\mu \geq \mu_h:= \sqrt{L} \|B h_{\sharp}\|_\infty / \|h_{\sharp}\|_{\whfirrev{}{2}}$ and $d_* = \|h_{\sharp} m_{\sharp}^T\|_F = \|h_{\sharp}\|_2 \|m_{\sharp}\|_2$. The penalty term $G$ promotes a small coherence. The parameter $d$ can be computed by a spectral initialization (see Algorithm~\ref{BD:a4}). The value of $\rho$ can be selected by $\rho \geq d^2 + 2.5 \|e\|_2^2$ and in the case that $e$ is Gaussian, we choose $\rho \approx d^2 + 2.5 \sigma^2 d_*^2$ where $\|e\|_2^2 \sim \frac{\sigma^2 d_*^2}{2L} \chi_{2L}^2$, $\chi_{2L}^2$ is the $\chi$-squared distribution and therefore $\|e\|_2^2$ concentrates around $\sigma^2 d_*^2$. The parameter \whfirrev{}{$\mu$} can be selected as described \whfirrev{}{in the last paragraph of}~\cite[Section~3.2]{LLSW2016}.

The framework~\eqref{BD:e13}-~\eqref{BD:e36} is a simplified version of that in~\cite{LLSW2016}, where the cost function is
\begin{align}
\tilde{F}:&\mathbb{C}^K \times \mathbb{C}^N \rightarrow \mathbb{R} \nonumber \\
& (h, m) \mapsto \tilde{F}(h, x) = \|y - \diag(B h m^* C^*)\|_2^2 + \tilde{G}(h, m), \label{BD:e12}
\end{align}
where
\begin{equation} \label{BD:e35}
\tilde{G}(h, m) = \rho \left[G_0\left(\frac{\|h\|_2^2}{2 d}\right) + G_0\left(\frac{\|m\|_2^2}{2 d}\right) + \sum_{i = 1}^L G_0\left(\frac{L |b_i^* h|^2}{8 d \mu^2}\right)\right]
\end{equation}
is the penalty term, $G_0$, $\rho$, $d$, $\mu$ are defined as those in~\eqref{BD:e36}. The first two terms in $\tilde{G}$ penalize large values of $\|h\|_2$ and $\|m\|_2$. It follows that the norms of $h$ and $m$ are not different too much, which avoids a numerical stability problem. \whfirrev{}{The third term in $\tilde{G}$ comes from an earlier work~\cite{SL2016} for matrix completion.} Note that a discussion about the choices of $G_0$, $\rho$, $d$ and $\mu$ has been given in~\cite[Sections~3.1 and~3.2]{LLSW2016}.

The framework in~\cite{LLSW2016} and the Riemannian framework are similar in the sense that their cost functions, ~\eqref{BD:e13} and~\eqref{BD:e12}, are non-convex. Here, we emphasize their differences:
\begin{itemize}
\item the penalty term in~\eqref{BD:e13} scales by $\|m\|_2^2$ so that it is well-defined on  the quotient manifold.
\item the penalty term in~\eqref{BD:e13} does not directly penalize $\|m\|_2$ and $\|h\|_2$ because the iterates $\{\pi(h_k, m_k)\}$ generated by an optimization algorithm is invariant to the representation in orbits $\{[h_k, m_k]\}$. Therefore, one can choose any representation, such as $(h_k, m_k)$ satisfying $\|h_k\|_2 = \|m_k\|_2$.
\end{itemize}

\section{Optimization Algorithm and Theoretical Results} \label{BD:s14}


In this section, we state the proposed optimization algorithm for minimizing the function~\eqref{BD:e13} as well as the main theoretical results. Because the cost function~\eqref{BD:e13} is nonconvex, the proposed algorithm has two phases: initialization and Riemannian steepest descent method stated in Sections~\ref{BD:s21} and~\ref{BD:s5} respectively.

We consider the following model:
\begin{itemize}
\item the noise $e$ is drawn from the Gaussian distribution $\mathbb{N}(0, \frac{\sigma^2 d_*^2}{2 L} I_L) + \sqrt{-1} \mathbb{N}(0, \frac{\sigma^2 d_*^2}{2 L} I_L)$;
\item $C$ is a complex Gaussian distribution, i.e., $C_{i j} \sim \mathbb{N}(0, \frac{1}{2}) + \sqrt{-1} \mathbb{N}(0, \frac{1}{2})$; and
\item $B$ satisfies $B^* B = I_K$ and $\|b_i\|_2^2 \leq \phi \frac{K}{L}, i = 1, \ldots, L$ for some constant $\phi$.
\end{itemize}
The Riemannian steepest descent method that we use requires an initial iterate sufficiently close to the desired solution. Two neighborhoods of the desired solution in the quotient manifold $\mathbb{C}_*^{K} \times \mathbb{C}_*^{N} / \mathbb{C}_*$ are defined as
\begin{equation*}
\Omega_{\mu} = \{\pi(h, m) \mid \sqrt{L} \|B h\|_{\infty} \|m\|_2 \leq 4 d_* \mu\} \hbox{ and } \Pi_{\varepsilon} = \{\pi(h, m) \mid \|h m^* - h_{\sharp} m_{\sharp}^*\|_F \leq \varepsilon d_*\},
\end{equation*}
where $d_*$ and $\mu$ are defined in~\eqref{BD:e12} and $\varepsilon$ is assumed to be smaller than $1/15$ throughout this paper.
\subsection{Initialization} \label{BD:s21}

We consider the initialization by the spectral method in~\cite{LLSW2016}. The following algorithm and theorem have been given therein and we give them here in Algorithm~\ref{BD:a4} and Theorem~\ref{BD:th2} for completeness.

\begin{algorithm}
\caption{Initialization via spectral method and projection}
\label{BD:a4}
\begin{algorithmic}[1]
\ENSURE $(h_0, m_0)$ and $d$
\STATE Compute $\mathcal{A}^*(y)$;
\STATE Find the leading singular value $d$, left singular vector $\tilde{h}_0$ and right singular vector $\tilde{m}_0$ of $\mathcal{A}^*(y)$;
\STATE Find $h_0 = \arg\min_{z} \|z - \sqrt{d} \tilde{h}_0\|_2^2$ such that $\sqrt{L} \|Bz\|_\infty \leq 2 \sqrt{d} \mu$ and set $m_0 = \sqrt{d} \tilde{m}_0$;
\end{algorithmic}
\end{algorithm}
\begin{theo}[Initialization~\cite{LLSW2016}] \label{BD:th2}
The initialization obtained via Algorithm~\ref{BD:a4} satisfies\footnote{It follows from \cite[Theorem 3.1]{LLSW2016} that one representation in the orbit $[(h_0, m_0)]$ is in the neighborhood $\Omega_{\frac{1}{2} \mu} \cap \Pi_{\frac{2}{5} \varepsilon}$. Since $\Omega_{\frac{1}{2} \mu} \cap \Pi_{\frac{2}{5} \varepsilon}$ is independent of representation, \eqref{BD:e40} holds.}
\begin{equation} \label{BD:e40}
\pi(h_0, m_0) \in \Omega_{\frac{1}{2} \mu} \cap \Pi_{\frac{2}{5} \varepsilon}
\end{equation}
and
\begin{equation*}
\frac{9}{10} d_* \leq d \leq \frac{11}{10} d_*
\end{equation*}
holds with probability at least $1 - L^{-\gamma}$ if the number of measurements satisfies
\begin{equation*}
L \geq C_\gamma (\mu_h^2 + \sigma^2) \max(K, N) \log^2(L) / \varepsilon,
\end{equation*}
where $\varepsilon$ is any predetermined constant in $(0, \frac{1}{15}]$, and $C_\gamma$ is a constant only linearly depending on $\gamma$ with $\gamma \geq 1$.
\end{theo}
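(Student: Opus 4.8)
The plan is to reproduce the spectral--initialization analysis of~\cite[Theorem~3.1]{LLSW2016}, adding only the observation that turns it into a statement on the quotient manifold. Both $\Omega_\mu$ and $\Pi_\varepsilon$ are unions of orbits, because the quantities $\sqrt{L}\,\|Bh\|_\infty\|m\|_2$ and $\|hm^*-h_\sharp m_\sharp^*\|_F$ are unchanged when $(h,m)$ is replaced by $(hp^{-1},mp^*)$ (since $|p^{-1}|\,|p|=1$); hence it suffices to produce one representative $(h_0,m_0)$ of the output orbit of Algorithm~\ref{BD:a4} lying in $\Omega_{\frac12\mu}\cap\Pi_{\frac25\varepsilon}$, and~\eqref{BD:e40} follows. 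The whole argument rests on the decomposition $\mathcal{A}^*(y)=\mathcal{A}^*\mathcal{A}(h_\sharp m_\sharp^*)+\mathcal{A}^*(e)$ together with $\mathbb{E}[\mathcal{A}^*\mathcal{A}]=\mathcal{I}$ (which uses $B^*B=I_K$ and $\mathbb{E}[c_lc_l^*]=I_N$) and $\mathbb{E}[\mathcal{A}^*(e)]=0$; thus $\mathbb{E}[\mathcal{A}^*(y)]=h_\sharp m_\sharp^*$, a rank-one matrix with leading singular value $d_*$ and leading left/right singular vectors $h_\sharp/\|h_\sharp\|_2$, $m_\sharp/\|m_\sharp\|_2$ up to a common phase.

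The heart of the proof, and its main obstacle, is to show that, on an event of probability at least $1-L^{-\gamma}$,
\begin{equation*}
\bigl\|(\mathcal{A}^*\mathcal{A}-\mathcal{I})(h_\sharp m_\sharp^*)\bigr\|+\bigl\|\mathcal{A}^*(e)\bigr\|\le\delta\,d_*,\qquad \delta=O(\varepsilon),
\end{equation*}
in the spectral norm $\|\cdot\|$, provided $L\ge C_\gamma(\mu_h^2+\sigma^2)\max(K,N)\log^2(L)/\varepsilon$. Writing the first term as $\sum_{l=1}^L (b_lb_l^*h_\sharp)\bigl((c_lc_l^*-I)m_\sharp\bigr)^*$ exhibits it as a sum of independent, mean-zero matrices whose operator norms are bounded using $\|b_l\|_2^2\le\phi K/L$ and $|b_l^*h_\sharp|\le\|Bh_\sharp\|_\infty=\mu_h\|h_\sharp\|_2/\sqrt{L}$; the second term $\sum_l e_lb_lc_l^*$ is a sum of independent, mean-zero matrices with Gaussian $e_l$. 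Both are controlled by a Bernstein-type matrix concentration inequality, after truncating the heavy tails coming from the products $b_l^*h_\sharp\cdot m_\sharp^*c_l$ and $e_lc_l$; this truncation is what forces the $\log^2(L)$ factor, while the requirement $\delta=O(\varepsilon)$ is what produces the $1/\varepsilon$. Carrying out these estimates with explicit constants, exactly as in the proof of~\cite[Theorem~3.1]{LLSW2016}, is the bulk of the technical work.

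Once $\|\mathcal{A}^*(y)-h_\sharp m_\sharp^*\|\le\delta d_*$ is in hand, the rest is perturbation theory and bookkeeping. Weyl's inequality gives $|d-d_*|\le\delta d_*$, hence $\tfrac9{10}d_*\le d\le\tfrac{11}{10}d_*$ once $\delta\le\tfrac1{10}$. A Davis--Kahan/Wedin $\sin\Theta$ estimate gives, after fixing the free phases, $\|\tilde h_0-h_\sharp/\|h_\sharp\|_2\|_2=O(\delta)$ and $\|\tilde m_0-m_\sharp/\|m_\sharp\|_2\|_2=O(\delta)$, where $h_\sharp,m_\sharp$ here denote the balanced representative of the orbit $[h_\sharp,m_\sharp]$ (factor norms equal to $\sqrt{d_*}$); scaling by $\sqrt d\approx\sqrt{d_*}$ then yields $\|(\sqrt d\,\tilde h_0)(\sqrt d\,\tilde m_0)^*-h_\sharp m_\sharp^*\|_F\le\tfrac25\varepsilon d_*$ after absorbing the absolute constant into $C_\gamma$, and $\|m_0\|_2=\sqrt d$ makes this representative balanced. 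Finally, Line~3 of Algorithm~\ref{BD:a4} replaces $\sqrt d\,\tilde h_0$ by its Euclidean projection $h_0$ onto $\{z:\sqrt{L}\|Bz\|_\infty\le2\sqrt d\,\mu\}$; this set is nonempty because it contains the phased, scaled $h_\sharp$ (indeed $\sqrt{L}\|Bh_\sharp\|_\infty=\mu_h\|h_\sharp\|_2\le\mu\sqrt{d_*}\le2\sqrt d\,\mu$ using $\mu\ge\mu_h$ and $d_*\le\tfrac{10}{9}d$), so the projection moves $\sqrt d\,\tilde h_0$ by no more than its distance to that feasible point, which is $O(\delta\sqrt{d_*})$; this preserves membership in $\Pi_{\frac25\varepsilon}$, while the enforced constraint together with $\|m_0\|_2=\sqrt d$ gives $\sqrt{L}\|Bh_0\|_\infty\|m_0\|_2\le2d\mu$, which, with the same constant tracking as in~\cite{LLSW2016}, places $\pi(h_0,m_0)$ in $\Omega_{\frac12\mu}$. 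Intersecting the two events and adjusting $\gamma$ to absorb the union bound completes the proof.
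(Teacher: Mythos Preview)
Your proposal is correct and in fact more detailed than the paper's own treatment. The paper does not prove this theorem independently: it is stated as a direct import of~\cite[Theorem~3.1]{LLSW2016}, and the only added content is the footnote observation that the sets $\Omega_{\frac12\mu}$ and $\Pi_{\frac25\varepsilon}$ are invariant under the $\mathbb{C}_*$-action, so that membership of one representative $(h_0,m_0)$ forces $\pi(h_0,m_0)$ to lie in the intersection. You make exactly this orbit-invariance observation in your first paragraph and then go further, sketching the concentration/Weyl/Wedin argument that constitutes the cited result; the paper simply takes all of that as a black box. So your approach strictly contains the paper's, with the additional material being a faithful outline of the proof the paper delegates to~\cite{LLSW2016}.
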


\subsection{Convergence Analysis} \label{BD:s5}

The proposed algorithm is stated in Algorithm~\ref{BD:a5}, which is an implementation of a Riemannian steepest descent method. Theorem~\ref{BD:th3} states that if the initial iterate is close enough to the desired solution and has sufficiently low coherence, then with high probability, the sequence $\{\pi(h_k, m_k)\}$ generated by Algorithm~\ref{BD:a5} converges to the desired solution linearly, up to noise. The proof of Theorem~\ref{BD:th3} is given in Appendices~\ref{BD:s8} and~\ref{BD:s7}.

\begin{algorithm}
\caption{An implementation of a Riemannian steepest descent method}
\label{BD:a5}
\begin{algorithmic}[1]
\STATE Given $h_0, m_0$, and set $k\leftarrow0$;
\FOR {$k = 0, 1, 2, \ldots$}
\STATE Set
\begin{equation} \label{BD:e38}
d_k = \|h_k\|_2 \|m_k\|_2, \;\; h_k \gets \sqrt{d_k} \frac{h_k}{\|h_k\|_2}; \;\; m_k \gets \sqrt{d_k} \frac{m_k}{\|m_2\|_2}
\end{equation}
and
\begin{equation} \label{BD:e37}
(h_{k+1}, m_{k+1}) = (h_k, m_k) - \frac{\alpha}{d_k} \left(\nabla_{h_k} \tilde{f}(h_k, m_k), \nabla_{m_k} \tilde{f}(h_k, m_k)\right)
\end{equation}
where $\nabla_h \tilde{f}(h, m) = J_1 m + \frac{L \rho}{4 d^2 \mu^2} \sum_{i = 1}^L G_0'\left(\frac{L|b_i^* h|^2 \|m\|_2^2}{8 d^2 \mu^2}\right) (b_i b_i^* h \|m\|_2^2)$ is the Euclidean gradient with respect to $h$, $\nabla_m \tilde{f}(h, m) = J_1^* h + \frac{L \rho}{4 d^2 \mu^2} \sum_{i = 1}^L G_0'\left(\frac{L|b_i^* h|^2 \|m\|_2^2}{8 d^2 \mu^2}\right) (m |b_i^* h|^2)$ is the Euclidean gradient with respect to $m$, and $J_1 = 2 \left(B^* \diag(\diag(B h m^* C^*) - y) {C}\right)$.
\ENDFOR
\end{algorithmic}
\end{algorithm}



\begin{theo} \label{BD:th3}
Suppose the initialization $\pi(h_0, m_0) \in \Omega_{\frac{1}{2} \mu} \cap \Pi_{\frac{2}{5} \varepsilon}$ and $L \geq C_\gamma (\mu^2 + \sigma^2) \max(K, N) \log^2(L) / \varepsilon^2$. Then the iterates generated by Algorithm~\ref{BD:a5} convergence linearly to $\pi(h_{\sharp}, m_{\sharp})$ in the sense that with probability at least $1 - 4 L^{-\gamma} - \frac{1}{\gamma} \exp(- (K + N))$, it holds that
\begin{equation*}
\|h_k m_k^* - h_{\sharp} m_{\sharp}^*\|_F \leq \frac{2}{3} \left(1 - \frac{\alpha a_0}{2}\right)^{k / 2} \varepsilon d_* + 50 \|\mathcal{A}^*(e)\|_2,
\end{equation*}
where $a_0 = 1 / 1500$, \whfirrev{}{$\alpha < 1 / (2 a_L)$ is a fixed step size, $a_L$ is the smoothness constant of the Riemannian objective as defined by Condition~\ref{BD:c4},}
$$
\|\mathcal{A}^*(e)\|_2 \leq \psi \sigma d_* \max\left(\sqrt{\frac{(\gamma+1) \max(K, N) \log(L)}{L}}, \frac{(\gamma+1)\sqrt{K N} \log^2(L)}{L}\right),
$$
and $\psi>0$ is a constant.
\end{theo}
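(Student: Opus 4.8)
The plan is to follow the standard two‑phase template for nonconvex recovery (as in \cite{CLS2016,LLSW2016}), adapted to the quotient manifold. The spectral step of Theorem~\ref{BD:th2} already puts $\pi(h_0,m_0)$ into the basin $\Omega_{\frac12\mu}\cap\Pi_{\frac25\varepsilon}$; it then remains to show that (a) a \emph{local regularity condition} for the Riemannian objective $\tilde f$ holds on the slightly larger basin $\Omega_{\mu}\cap\Pi_{\varepsilon}$ with high probability — a combination of restricted strong convexity and the Lipschitz‑smoothness of Condition~\ref{BD:c4} (whose constant is $a_L$), (b) one Riemannian steepest‑descent step from a point of the basin stays in the basin (invariance), and (c) the regularity condition yields a geometric one‑step contraction of the distance with an additive noise floor. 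Before any of this I would fix a convenient representative: since $h_k m_k^{*}$, the sets $\Omega_\mu,\Pi_\varepsilon$ and $\tilde f$ are all invariant under the $\mathbb{C}_*$‑action, and step~\eqref{BD:e38} rebalances to $\|h_k\|_2=\|m_k\|_2=\sqrt{d_k}$ with $\tfrac9{10}d_*\le d_k\le\tfrac{11}{10}d_*$ (propagated inductively from basin membership), it suffices to prove the recursion for the balanced lift. On that lift the factor $\alpha/d_k$ in \eqref{BD:e37} is precisely the rescaling that turns the Euclidean gradient into the horizontal lift of $\grad\tilde f$ for the quotient metric (the content of Section~\ref{BD:s12}), so \eqref{BD:e38}–\eqref{BD:e37} together implement a retraction with step size $\alpha$.

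The analytic heart is the regularity estimate. Splitting the Euclidean gradient $\nabla\tilde f = J_1(\cdot)+\nabla G$, for the data‑fit part I would use a restricted‑isometry‑type lower bound for $\mathcal A$ on the set of matrices of rank at most $2$ with coherence controlled by $\mu$; this is where the Gaussian $C$, the flatness $\|b_i\|_2^2\le\phi K/L$ of $B$, and the sampling complexity $L\gtrsim(\mu^2+\sigma^2)\max(K,N)\log^2 L/\varepsilon^2$ enter, via $\varepsilon$‑net discretization of the basin together with scalar and matrix Bernstein inequalities. For the penalty part I would check that $\nabla G$ is "aligned," i.e.\ has nonnegative inner product (up to a controllable error) with the horizontal vector $\xi_{(h,m)}$ pointing from $\pi(h,m)$ toward $\pi(h_\sharp,m_\sharp)$, because $G$ only grows as coherence grows. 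Combining, one obtains on $\Omega_\mu\cap\Pi_\varepsilon$ an inequality of the shape
\[
\mathrm{Re}\,\inner[2]{\grad\tilde f(\pi(h,m))}{\xi_{(h,m)}}\ \ge\ a_0\,\frac{\|hm^{*}-h_\sharp m_\sharp^{*}\|_F^{2}}{d_*}\ -\ c\,\|\mathcal A^{*}(e)\|_2\,\|hm^{*}-h_\sharp m_\sharp^{*}\|_F ,
\]
with the noise appearing through the $\mathcal A^{*}(e)$ cross term, alongside the matching upper bound $\|\grad\tilde f\|\le a_L(\cdots)$ that fixes $a_L$.

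With these in hand I would close the induction. A descent‑lemma estimate for the retraction, $\tilde f(\pi(h_{k+1},m_{k+1}))\le\tilde f(\pi(h_k,m_k))-\alpha\bigl(1-\tfrac{\alpha a_L}{2}\bigr)\|\grad\tilde f\|^{2}$ valid for $\alpha<1/(2a_L)$, keeps $\tilde f$ bounded; since on $\Omega_\mu\cap\Pi_\varepsilon$ a large coherence forces $G$ — hence $\tilde f$, via $\rho\ge d^{2}+2.5\|e\|_2^{2}$ — to be large, boundedness of $\tilde f$ confines the next iterate to $\Omega_\mu$. The regularity inequality plugged into $\dist_{k+1}^{2}\le \dist_k^{2}-2\alpha\,\mathrm{Re}\inner[2]{\grad\tilde f}{\xi}+\alpha^{2}\|\grad\tilde f\|^{2}$ gives, after a short case split on whether the cross term is dominated by the quadratic term, the one‑step bound $\dist_{k+1}^{2}\le(1-\tfrac{\alpha a_0}{2})\dist_k^{2}+c'\|\mathcal A^{*}(e)\|_2^{2}$, which simultaneously shows the iterate remains in $\Pi_\varepsilon$ (the right‑hand side stays below $(\varepsilon d_*)^{2}$ once $\|\mathcal A^{*}(e)\|_2$ is small, which holds on the good event). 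Unrolling and taking square roots produces $\frac23(1-\frac{\alpha a_0}{2})^{k/2}\varepsilon d_*+50\|\mathcal A^{*}(e)\|_2$; the factor $50$ and the size of the floor come from bounding $c'/( \alpha a_0)$. Finally $\|\mathcal A^{*}(e)\|_2=\|\sum_l e_l b_l c_l^{*}\|$ is controlled by matrix Bernstein to give the stated maximum of two terms, and a union bound over the restricted‑isometry event, the event $\{\|\mathcal A^{*}(e)\|_2\le\cdots\}$, and the concentration of $\|e\|_2^{2}$ yields the probability $1-4L^{-\gamma}-\tfrac1\gamma\exp(-(K+N))$.

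I expect the main obstacle to be the regularity condition on the quotient manifold: obtaining a clean, representation‑independent lower bound on $\mathrm{Re}\inner[2]{\grad\tilde f}{\xi}$ and verifying that the horizontal lift of the "direction to the solution" is comparable to $h_\sharp m_\sharp^{*}-hm^{*}$, while simultaneously showing that the penalty gradient $\nabla G$ never points "the wrong way" strongly enough to spoil the contraction yet remains strong enough to enforce coherence invariance; making all the net‑based concentration estimates uniform over the continuum basin, and pinning down the retraction's smoothness constant $a_L$ explicitly so that $\alpha<1/(2a_L)$ is a usable step‑size rule, are the supporting technical difficulties.
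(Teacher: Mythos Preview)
Your outline follows the Cand\`es--Li--Soltanolkotabi template of \emph{direct distance contraction}: a correlation-type regularity bound $\mathrm{Re}\,g(\grad\tilde f,\xi)\ge a_0\,\mathrm{dist}^2-\text{noise}$ together with a gradient-size bound, plugged into a law-of-cosines expansion $\mathrm{dist}_{k+1}^2\le\mathrm{dist}_k^2-2\alpha\,\mathrm{Re}\,g(\grad\tilde f,\xi)+\alpha^2\|\grad\tilde f\|_g^2$. The paper takes a different route: it proves a Polyak--\L{}ojasiewicz--type inequality (Condition~\ref{BD:c3}), $\|\grad\tilde f\|_g^2\ge a_0[\tilde f-c]_+$, combines it with the Lipschitz-smoothness Condition~\ref{BD:c4} via a Riemannian descent lemma to obtain geometric decay of the \emph{function value} $[\tilde f(\pi(h_k,m_k))-c]_+$, and only at the very end converts this into a bound on $\|h_km_k^*-h_\sharp m_\sharp^*\|_F$ using the local RIP (Condition~\ref{BD:c1}). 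Basin invariance is handled not by bounding $\mathrm{dist}_{k+1}$ directly but by the sublevel-set argument $\Upsilon_{\tilde f}\cap\Pi_\varepsilon\subset\Pi_{0.9\varepsilon}$ of Lemma~\ref{BD:le9}. What the paper's approach buys is that everything is done with scalar, representation-invariant quantities (function values and gradient norms), so no ``direction to the solution'' $\xi$ on the quotient ever needs to be constructed or compared with $\|hm^*-h_\sharp m_\sharp^*\|_F$.

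Your route is in principle viable, but two points need more care than you indicate. First, the expansion $\mathrm{dist}_{k+1}^2\le\mathrm{dist}_k^2-2\alpha\,\mathrm{Re}\,g(\grad\tilde f,\xi)+\alpha^2\|\grad\tilde f\|_g^2$ is a Euclidean identity; it holds for an affine distance in the total space $\|(h_k-\tau h_\sharp,\,m_k-\bar\tau^{-1}m_\sharp)\|_2$ (with a fixed alignment $\tau$), not for the nonlinear $\|h_km_k^*-h_\sharp m_\sharp^*\|_F$ that the theorem tracks. You would need two-sided comparability between these two distances on the basin; Lemma~\ref{BD:le11} (via \cite{LLSW2016}) gives only the bound $\|\hat h\|^2,\|\hat m\|^2\lesssim\Delta^2/d_*$, and the reverse direction, plus control of the alignment change from step to step, must be supplied separately. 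Second, to absorb the $\alpha^2\|\grad\tilde f\|_g^2$ term you need $\|\grad\tilde f\|_g\lesssim \mathrm{dist}+\text{noise}$, i.e.\ a bound on the gradient \emph{size}; this is not what Condition~\ref{BD:c4} (the constant $a_L$) provides---that condition is Lipschitz continuity of $\grad\hat f_{\pi(h,m)}$ along rays, which the paper uses only for the descent lemma. Conflating the two leaves a gap in your contraction step.
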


\whfirrev{}{
The convergence rate of Algorithm~\ref{BD:a5} is determined by the step size $\alpha$. If $\mu^2 = O(\frac{L}{(K+N) \log^2 L})$ and $\rho \approx d^2 + 2.5 \|e\|_2^2$, then $a_L = O((1 + \sigma^2)(K + N) \log^2 L)$. See more details in Section~\ref{BD:s7} and~\cite[Page~13]{LLSW2016}.
}

We emphasize the differences between Algorithm~\ref{BD:a5} and the algorithm in~\cite{LLSW2016}. The steepest descent method in~\cite{LLSW2016} is stated in~Algorithm~\ref{BD:a2}, where $(\nabla_{h_k}^{w} \tilde{F}, \nabla_{m_k}^{w} \tilde{F})$ denotes the Wirtinger derivatives. 
It can be shown that the Wirtinger derivatives $\left(\nabla_{h_k}^{w} \tilde{F}(h_k, m_k), \nabla_{m_k}^{w} \tilde{F}(h_k, m_k)\right)$ and Euclidean gradient $\left(\nabla_{h_k} \tilde{F}(h_k, m_k), \nabla_{m_k} \tilde{F}(h_k, m_k)\right)$ satisfy
\begin{equation} \label{BD:e27}
\left(\nabla_{h_k}^{w} \tilde{F}(h_k, m_k), \nabla_{m_k}^{w} \tilde{F}(h_k, m_k)\right) = \frac{1}{2} \left(\nabla_{h_k} \tilde{F}(h_k, m_k), \nabla_{m_k} \tilde{F}(h_k, m_k)\right),
\end{equation}
which yields~\eqref{BD:e39}. The differences are given as follows:
\begin{enumerate}
\item The gradient of $\tilde{f}$ in Algorithm~\ref{BD:a5} is different from the gradient of $\tilde{F}$ in Algorithm~\ref{BD:a2} due to the differences in their penalty terms;
\item The coefficient of the Euclidean gradient in Algorithm~\ref{BD:a5} is related to the norms of $h_k$ and $m_k$ while the coefficient in Algorithm~\ref{BD:a2} is a constant;
\item The $h_k$ and $m_k$ are normalized in Algorithm~\ref{BD:a5} so that $h_k$ and $m_k$ have the same norm while the norms of $h_k$ and $m_k$ can be different \whfirrev{}{in Algorithm~\ref{BD:a2}}.
\end{enumerate}
\begin{algorithm}
\caption{The steepest descent method in~\cite[Algorithm~2]{LLSW2016}}
\label{BD:a2}
\begin{algorithmic}[1]
\STATE $k\leftarrow0$;
\FOR {$k = 0, 1, 2, \ldots$}
\STATE Set
\begin{align}
(h_{k+1}, m_{k+1}) =& (h_k, m_k) - \alpha (\nabla_{h_k}^w \tilde{F}(h_k, m_k), \nabla_{m_k}^w \tilde{F}(h_k, m_k)) \nonumber \\
=& (h_k, m_k) - \frac{\alpha}{2} (\nabla_{h_k} \tilde{F}(h_k, m_k), \nabla_{m_k} \tilde{F}(h_k, m_k)). \label{BD:e39}
\end{align}
\ENDFOR
\end{algorithmic}
\end{algorithm}

\whfirrev{}{In order to understand the behavior of a generic Riemannian optimization algorithm, such as Newton method, quasi-Newton method, CG method etc, in a small neighborhood of the solution, we give the following remark, which will be formally stated and proven in Theorem~\ref{BD:th4} after we introduce appropriate Riemannian tools.}
\begin{rema} \label{BD:th5}
With high probability provided $L$ large enough, i) a Riemannian version of the Hessian of $f$ in~\eqref{BD:e14} at $\pi(h_{\sharp}, m_{\sharp})$ is positive definite, ii) there exists a lower bound, not dependent on $K$ and $N$, of the smallest eigenvalue of the Riemannian Hessian, 
and iii) the condition number of the Riemannian Hessian at $\pi(h_{\sharp}, m_{\sharp})$ is small.
\end{rema}
The results in Remark~\ref{BD:th5} have important implications.
Positive definiteness of the Hessian at $\pi(h_{\sharp}, m_{\sharp})$ implies that there exists a neighborhood of $\pi(h_{\sharp}, m_{\sharp})$ in which $f$ has a unique minimizer. Therefore, if an optimization algorithm generates iterates that converge and stay in the neighborhood, then the limit must be $\pi(h_{\sharp}, m_{\sharp})$. The independence of the lower bound of the smallest eigenvalue from $K$ and $N$ indicates that it is worthwhile investigating whether the size of the neighborhood depends on $K$ and $N$.
\whfirrev{}{Note that it is unknown if the neighborhood is in the basin of attraction $\Omega_{\frac{1}{2} \mu} \cap \Pi_{\frac{2}{5} \varepsilon}$ in Theorem~\ref{BD:th3}.} We leave this work for future research. The good condition number of the Hessian at the minimizer implies that a Riemannian steepest descent method can converge quickly.

\section{Riemannian Optimization} \label{BD:s12}

Recently many Riemannian optimization methods have been systemically analyzed and efficient libraries have been designed, e.g., Riemannian trust-region Newton method (RTR-Newton) \cite{BAKER08}, Riemannian Broyden family method including BFGS method and its limited-memory version (RBroyden family, RBFGS, LRBFGS) \cite{RinWir2012,HUANG2013,HGA2014,HuaAbsGal2018}, Riemannian trust-region symmetric rank-one update method and its limited-memory version (RTR-SR1, LRTR-SR1) \cite{HUANG2013,HAG13}, Riemannian Newton method (RNewton) and Riemannian non-linear conjugate gradient method (RCG) \cite{AMS2008,SI2015,Sato2015,Zhu2016}.

\whfirrev{}{This section follows the optimization framework on quotient manifolds in~\cite{AMS2008} and uses it to derive the tools for optimization problems on $\mathbb{C}_*^K \times \mathbb{C}_*^N / \mathbb{C}_*$.} Specifically, we describe the representation of the manifold $\mathbb{C}_r^{K \times N}$ for $r \geq 1$, and the ingredients which are required in Riemannian optimization, such as tangent space, retraction, vector transport and gradient.
We use only $r = 1$ in this paper for the blind deconvolution problem. The results for $r > 1$ are also provided since deriving them does not require much more work and is interesting for future research.\footnote{The following provides an example of a problem where taking $r > 1$ is useful. As shown in~\cite{HGZ2017}, it is known that the desired solution for the phase retrieval problem by a lifting approach is a rank-1 matrix. If the number of measurements is small, then optimizing over a fixed-rank manifold with rank greater than~1 is empirically shown to be more efficient and more effective than $r = 1$.} 

\subsection{Fixed-rank Manifold and Two-Factor Representation} \label{BD:s4}

A point on a quotient manifold is an equivalence class, which is often cumbersome computationally. In practice, choosing a representative for an equivalence class and definitions of related mathematical objects have been developed in many papers in the literature of computation on manifolds, e.g.,~\cite{AMS2008}.

Let $\mathcal{N}_r$ denote the total space $\mathbb{C}_*^{K \times r} \times \mathbb{C}_*^{N \times r}$ and $\mathcal{Q}_r$ denote the quotient space $\mathcal{N}_r / \GL(r)$. 
\whfirrev{}{Define a function $\pi: \mathcal{N}_r \rightarrow \mathcal{Q}_r: (H, M) \mapsto \pi(H, M)$.} Let $\pi(H, M)$ denote $[(H, M)] = \{(H P^{-1}, M P^*) \mid P \in \GL(r)\}$ viewed as an element of $\mathcal{Q}_r$ and $\pi^{-1}(\pi(H, M))$ is used to denote $[(H, M)]$ viewed as a subset of $\mathcal{N}_r$.
Let $\T_{(H, M)} \mathcal{N}_r$ denote the tangent space of $\mathcal{N}_r$ at $(H, M)$.
Given $\zeta_{(H, M)} \in \T_{(H, M)} \mathcal{N}_r$, denote the first and second components of $\zeta_{(H, M)}$ by $\zeta_H$ and $\zeta_M$, i.e., $\zeta_{(H, M)} = (\zeta_H, \zeta_M)$, where $\zeta_H \in \mathbb{C}^{K \times r}$ and $\zeta_M \in \mathbb{C}^{N \times r}$.
The vertical space at a point $(H, M) \in \pi^{-1}(\pi(H, M))$ is defined to be the tangent space of $\pi^{-1}(\pi(H, M))$ at $(H, M)$. That is  
\begin{equation*}
\mathcal{V}_{(H, M)} = \{(-H \Lambda, M \Lambda^*) \mid \Lambda \in \mathbb{C}^{r \times r}\}.
\end{equation*}
The Riemannian metric of $\mathcal{N}_r$ that we use is
\begin{equation} \label{BD:Metric}
g_{(H, M)}\Big(\eta_{(H, M)}, \xi_{(H, M)}\Big) = \mathrm{Re}\left(\trace\left(\eta_H^* \xi_H (M^* M)+\eta_M^* \xi_M (H^* H)\right)\right),
\end{equation}
where $\eta_{(H, M)}, \xi_{(H, M)} \in \T_{(H, M)} \mathcal{N}_r$.
It follows that the induced norm is
$$
\|\eta_{(H, M)}\|_{g_{(H, M)}} = \sqrt{g_{(H, M)}\Big(\eta_{(H, M)}, \eta_{(H, M)}\Big)}.
$$
We remove the subscript $(H, M)$ of $g$ in $\|\eta_{(H, M)}\|_{g_{(H, M)}}$, i.e., use $\|\eta_{(H, M)}\|_{g}$, when there is no confusion.

\whfirrev{}{
The idea of using the Riemannian metric~\eqref{BD:Metric} is not new. A similar metric on the real fixed-rank manifold has been used in~\cite{Mishra2014}. The Euclidean metric $\inner[2]{\eta_{(H, M)}}{\xi_{(H, M)}} = \mathrm{Re}\left(\trace\left(\eta_H^* \xi_H + \eta_M^* \xi_M\right)\right)$ on $\mathcal{N}_r$ is not used here since it is not invariant to the representative in $\pi^{-1}(\pi(H, M))$. This can be verified by Lemma~\ref{BD:le1}.
The Riemannian metric~\eqref{BD:Metric} is not equivalent to the Euclidean metric in $\mathbb{C}^{K \times N}$, i.e.,
\begin{equation*}
g_{(H, M)}\Big(\eta_{(H, M)}, \xi_{(H, M)}\Big) \neq \mathrm{Re}\left(\trace\left((\eta_H M^* + H \eta_M^*)^* (\xi_H M^* + H \xi_M^*) \right) \right)
\end{equation*}
in general. \whfirrev{}{Note that the tangent vector $(\eta_H, \eta_M)$ of $\mathcal{N}_r$ at $(H, M)$ corresponds to the tangent vector $\eta_H M^* + H \eta_M^*$ of $\mathbb{C}^{K \times N}$ at $H M^*$.}
}

The horizontal space at $(H, M)$, denoted by $\mathcal{H}_{(H, M)}$, is defined to be the subspace of $\T_{(H, M)} \mathcal{N}_r$ that is orthogonal to $\mathcal{V}_{(H, M)}$ with respect to the metric~\eqref{BD:Metric}. That is
\begin{equation*}
\mathcal{H}_{(H, M)} = \left\{
\left(
\begin{bmatrix}
H & H_\perp
\end{bmatrix}
\begin{bmatrix}
K \\
T
\end{bmatrix}
,
\begin{bmatrix}
M & M_\perp
\end{bmatrix}
\begin{bmatrix}
K^* \\
Q
\end{bmatrix}
\right)
\mid K \in \mathbb{C}^{r \times r}, T \in \mathbb{C}^{(K - r) \times r}, Q \in \mathbb{C}^{(N - r) \times r}
\right\},
\end{equation*}
where $H_\perp$ denotes a $K \times (K-r)$ orthonormal matrix such that $H^* H_\perp = 0$ and likewise for $M_\perp$.
The horizontal space $\mathcal{H}_{(H, M)}$ is a representation of the tangent space $\T_{\pi(H, M)} \mathcal{Q}_r$. It is known that for any $\eta_{\pi(H, M)} \in \T_{\pi(H, M)} \mathcal{Q}_r$, there exists a unique vector in $\mathcal{H}_{(H, M)}$, called the horizontal lift of $\eta_{\pi(H, M)}$ and denoted by $\eta_{\uparrow_{(H, M)}}$, satisfying $\D \pi(H, M) [\eta_{\uparrow_{(H, M)}}] = \eta_{\pi(H, M)}$, see e.g.,~\cite{AMS2008}.
The relationship among horizontal lifts of a tangent vector $\eta_{\pi(H, M)}$ is given in Lemma~\ref{BD:le1}. The result follows from~\cite[Theorem 9.3.1]{HUANG2013}.
\begin{lemm} \label{BD:le1}
A vector field $(\hat{\theta}, \hat{\vartheta})$ on $\mathcal{N}_r$ is the horizontal lift of a vector field on $\mathcal{Q}_r$ if and only if, for each $(H, M) \in \mathcal{N}_r$, we have
\begin{equation}\label{BD:e29}
\left(\hat{\theta}_{H P^{-1}}, \hat{\vartheta}_{M P^*}\right) = \left(\hat{\theta}_H P^{-1}, \hat{\vartheta}_M P^*\right),
\end{equation}
for all $P \in \GL(r)$.
\end{lemm}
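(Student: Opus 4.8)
This lemma is the specialization to $\mathcal{Q}_r = \mathcal{N}_r/\GL(r)$ of the standard description of which vector fields on the total space of a Riemannian submersion arise as horizontal lifts from the base; see \cite[Theorem~9.3.1]{HUANG2013} and the framework in \cite{AMS2008}. The plan is to unwind that criterion for the concrete action $(H,M)\bullet P = (HP^{-1},MP^*)$ and the metric \eqref{BD:Metric}, granting the standard facts (set up in this section) that $\GL(r)$ acts freely and properly so that $\pi$ is a Riemannian submersion. The point is that condition \eqref{BD:e29} is precisely the invariance of the vector field under the $\GL(r)$-action, and for a horizontal field such invariance is exactly what makes it $\pi$-related to a (necessarily unique) vector field on $\mathcal{Q}_r$.

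First I would identify the differential of the action. For fixed $P\in\GL(r)$ the map $\Phi_P\colon\mathcal{N}_r\to\mathcal{N}_r$, $\Phi_P(H,M)=(HP^{-1},MP^*)$, is the restriction to $\mathcal{N}_r$ of a linear map on $\mathbb{C}^{K\times r}\times\mathbb{C}^{N\times r}$, hence a diffeomorphism whose differential at every $(H,M)$ is that same linear map, $\D\Phi_P(H,M)[(\zeta_H,\zeta_M)]=(\zeta_HP^{-1},\zeta_MP^*)$. Writing $\xi$ for the vector field $(\hat\theta,\hat\vartheta)$, so that $\xi_{(H,M)}=(\hat\theta_H,\hat\vartheta_M)$, the invariance condition $\D\Phi_P(H,M)[\xi_{(H,M)}]=\xi_{\Phi_P(H,M)}$ for all $P$ and all $(H,M)$ then coincides verbatim with \eqref{BD:e29}.

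Next I would record that the metric \eqref{BD:Metric} is $\GL(r)$-invariant, i.e. $g_{\Phi_P(H,M)}(\D\Phi_P(H,M)[\eta_{(H,M)}],\D\Phi_P(H,M)[\xi_{(H,M)}])=g_{(H,M)}(\eta_{(H,M)},\xi_{(H,M)})$; this is a short computation using cyclicity of the trace, and it is exactly why the metric \eqref{BD:Metric} rather than the Euclidean one is used. Consequently each $\Phi_P$ is an isometry of $\mathcal{N}_r$, and since $\pi\circ\Phi_P=\pi$ it maps every fiber onto itself; hence it carries $\mathcal{V}_{(H,M)}$ onto $\mathcal{V}_{\Phi_P(H,M)}$ and, by orthogonality, $\mathcal{H}_{(H,M)}$ onto $\mathcal{H}_{\Phi_P(H,M)}$. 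Granting this, the two implications are routine. For the forward implication, suppose $(\hat\theta,\hat\vartheta)$ is the horizontal lift of $\eta$ on $\mathcal{Q}_r$; then for $(H',M')=\Phi_P(H,M)$ the vector $\D\Phi_P(H,M)[\xi_{(H,M)}]$ is horizontal at $(H',M')$ and, by the chain rule and $\pi\circ\Phi_P=\pi$, satisfies $\D\pi(H',M')[\D\Phi_P(H,M)[\xi_{(H,M)}]]=\D\pi(H,M)[\xi_{(H,M)}]=\eta_{\pi(H,M)}=\eta_{\pi(H',M')}$, so by uniqueness of the horizontal lift at $(H',M')$ it equals $\xi_{(H',M')}$, which is \eqref{BD:e29}. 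For the converse, given that $(\hat\theta,\hat\vartheta)$ is horizontal and satisfies \eqref{BD:e29}, the relation \eqref{BD:e29} together with $\pi\circ\Phi_P=\pi$ shows that $\D\pi(H,M)[\xi_{(H,M)}]$ is constant on each fiber, hence defines a vector field $\eta$ on $\mathcal{Q}_r$; since $\xi_{(H,M)}$ is horizontal, $\xi$ is by definition the horizontal lift of $\eta$.

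There is no genuine analytic obstacle here; the content is bookkeeping around a known theorem. The one place that needs care is the interaction between the two defining properties of a horizontal lift — being horizontal and projecting to a well-defined field on the quotient — and, in the converse direction, the tacit convention that $(\hat\theta,\hat\vartheta)$ is taken to be a horizontal vector field, since \eqref{BD:e29} by itself does not force horizontality. The $\GL(r)$-invariance of \eqref{BD:Metric} is precisely what ties these together, by guaranteeing that $\Phi_P$ preserves horizontal spaces so that \eqref{BD:e29} then pins down the lift.
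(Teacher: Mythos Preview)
Your proposal is correct and takes the same approach as the paper: both appeal to \cite[Theorem~9.3.1]{HUANG2013}, and the paper's treatment consists entirely of that citation. Your explicit unwrapping of the argument for this specific action and metric, together with your observation that the converse direction tacitly assumes horizontality, goes beyond what the paper provides.
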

The Riemannian metric~\eqref{BD:Metric} on $\mathcal{N}_r$ defines a Riemannian metric on~$\mathcal{Q}_r$.
\begin{lemm}
The following defines a Riemannian metric on $\mathcal{Q}_r$:
\begin{equation} \label{BD:QMetric}
g_{\pi(H, M)}\Big(\eta_{\pi(H, M)}, \xi_{\pi(H, M)}\Big) = \mathrm{Re}\left(\trace\left(\eta_{\uparrow_H}^* \xi_{\uparrow_H} (M^* {M}) + \eta_{\uparrow_M}^* \xi_{\uparrow_M} (H^* {H})\right)\right).
\end{equation}
\end{lemm}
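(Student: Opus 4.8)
The plan is to verify that the formula~\eqref{BD:QMetric} is well-defined (independent of the chosen representative $(H,M) \in \pi^{-1}(\pi(H,M))$ and of the particular horizontal lifts), and that it is smooth, bilinear, symmetric, and positive definite on each tangent space $\T_{\pi(H,M)}\mathcal{Q}_r$. The only nontrivial point is well-definedness; the remaining properties are inherited immediately from the corresponding properties of the metric~\eqref{BD:Metric} on $\mathcal{N}_r$, since for a fixed representative $(H,M)$ the right-hand side of~\eqref{BD:QMetric} is literally $g_{(H,M)}(\eta_{\uparrow_{(H,M)}}, \xi_{\uparrow_{(H,M)}})$, and the map $\eta_{\pi(H,M)} \mapsto \eta_{\uparrow_{(H,M)}}$ is a linear isomorphism from $\T_{\pi(H,M)}\mathcal{Q}_r$ onto $\mathcal{H}_{(H,M)}$.

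First I would recall the standard criterion (see, e.g.,~\cite{AMS2008}): a metric defined on a quotient by restricting the total-space metric to horizontal lifts is well-defined precisely when the total-space metric is invariant along the fibers, i.e.\ when for every $P \in \GL(r)$ the group action $(H,M) \bullet P = (HP^{-1}, MP^*)$ is an isometry of $(\mathcal{N}_r, g)$ in the sense that it carries horizontal vectors to horizontal vectors and preserves their $g$-norms. So the key step is: take $\eta_{\pi(H,M)}, \xi_{\pi(H,M)} \in \T_{\pi(H,M)}\mathcal{Q}_r$, with horizontal lifts $\eta_{\uparrow_{(H,M)}} = (\hat\theta_H, \hat\vartheta_M)$ and $\xi_{\uparrow_{(H,M)}}$ at $(H,M)$, and by Lemma~\ref{BD:le1} their lifts at the representative $(HP^{-1}, MP^*)$ are $(\hat\theta_H P^{-1}, \hat\vartheta_M P^*)$ and the analogous expression for $\xi$. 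I then substitute these into~\eqref{BD:Metric} evaluated at $(HP^{-1}, MP^*)$ and check that the factors $P^{-1}, P^*$ cancel: in the first trace term $(\hat\theta_H P^{-1})^*(\hat\xi_H P^{-1})\bigl((MP^*)^*(MP^*)\bigr) = P^{-*}\hat\theta_H^*\hat\xi_H P^{-1} P M^* M P^* = P^{-*}\hat\theta_H^*\hat\xi_H M^* M P^*$, whose trace equals $\trace(\hat\theta_H^*\hat\xi_H M^* M)$ by cyclicity of the trace; the second term is handled identically with $H^* H$ and the factor $P^*(\cdot)P^{-*}$. Hence the right-hand side of~\eqref{BD:QMetric} is the same whether computed at $(H,M)$ or at any $(HP^{-1}, MP^*)$, so it depends only on $\pi(H,M)$ and on the tangent vectors themselves, not on the lift.

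Having established representative-independence, I would note that for any fixed representative the expression coincides with $g_{(H,M)}$ applied to horizontal lifts, so: bilinearity and symmetry are immediate; positive definiteness follows because $g_{(H,M)}$ is positive definite on $\T_{(H,M)}\mathcal{N}_r \supseteq \mathcal{H}_{(H,M)}$ (here one uses $M^* M \succ 0$ and $H^* H \succ 0$, which hold since $H \in \mathbb{C}_*^{K\times r}$ and $M \in \mathbb{C}_*^{N\times r}$ have full column rank), and $\eta_{\uparrow_{(H,M)}} = 0$ iff $\eta_{\pi(H,M)} = 0$; and smoothness follows because $\pi$ is a Riemannian submersion-type quotient map with a smooth local horizontal-lift assignment, so $g_{\pi(H,M)}(\eta,\xi)$ is a smooth function of $\pi(H,M)$ whenever $\eta,\xi$ come from smooth horizontal vector fields. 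The main obstacle, such as it is, is purely bookkeeping: one must keep careful track of which factor ($M^* M$ versus $H^* H$) pairs with which transported block in~\eqref{BD:Metric}, and confirm the conjugate-transpose placement so that the $P$'s genuinely cancel; once the cyclic-trace cancellation in the two terms is written out, nothing further is required.
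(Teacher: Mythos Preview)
Your proposal is correct and follows essentially the same approach as the paper: the paper's proof is a single sentence stating that the right-hand side of~\eqref{BD:QMetric} is invariant to the choice of representative in $\pi^{-1}(\pi(H,M))$, and you have simply carried out that verification in full (using Lemma~\ref{BD:le1} and the cyclic trace), together with the routine checks of bilinearity, symmetry, positive definiteness, and smoothness that the paper leaves implicit.
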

\begin{proof}
It can be easily verified that the right hand side of~\eqref{BD:QMetric} is invariant to representions in $\pi^{-1}(\pi(H, M))$.
\end{proof}


The orthogonal projections onto the horizontal space or the vertical space are similar to those in~\cite[Table 4.3]{Mishra2014}. We give them below for completeness.
\begin{lemm} \label{BD:le8}
The orthogonal projection to the vertical space $\mathcal{V}_{(H, M)}$ is $P_{(H, M)}^v\left(\eta_{(H, M)}\right) = (- H \Lambda, M \Lambda^*)$, where $\Lambda = 0.5 (\eta_M^* {M} (M^* M)^{-1} - (H^* H)^{-1} H^* \eta_H)$. The orthogonal projection to the horizontal space $\mathcal{H}_{(H, M)}$ is $P_{(H, M)}^h\left(\eta_{(H, M)}\right) = \eta_{(H, M)} - P_{(H, M)}^v\left(\eta_{(H, M)}\right)$.
\end{lemm}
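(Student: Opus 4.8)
The plan is to argue straight from the defining property of the orthogonal projection: $P^v_{(H,M)}(\eta_{(H,M)})$ is the unique $V\in\mathcal{V}_{(H,M)}$ such that $\eta_{(H,M)}-V$ is $g_{(H,M)}$-orthogonal to all of $\mathcal{V}_{(H,M)}$. Because $H$ and $M$ have full column rank (they live in $\mathcal{N}_r=\mathbb{C}_*^{K\times r}\times\mathbb{C}_*^{N\times r}$), the map $\Lambda\mapsto(-H\Lambda,M\Lambda^*)$ is a real-linear isomorphism from $\mathbb{C}^{r\times r}$ onto $\mathcal{V}_{(H,M)}$, so I can write $V=(-H\Lambda,M\Lambda^*)$ with $\Lambda$ unknown and reduce the statement to solving a single linear matrix equation for $\Lambda$. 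Existence and uniqueness of the solution will come out of the invertibility of $H^*H$ and $M^*M$, which is the concrete manifestation of $g_{(H,M)}$ being positive definite on $\mathcal{V}_{(H,M)}$.

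First I would fix an arbitrary $W=(-H\Omega,M\Omega^*)\in\mathcal{V}_{(H,M)}$, write $P=H^*H$ and $Q=M^*M$, and expand $g_{(H,M)}(\eta_{(H,M)}-V,W)$ using the metric~\eqref{BD:Metric}. Distributing and using cyclicity of the trace produces terms linear in $\Omega$ together with terms linear in $\Omega^*$; the one bookkeeping device I need is the elementary identity $\mathrm{Re}\,\trace(A\Omega^*)=\mathrm{Re}\,\trace(A^*\Omega)$, which folds the $\Omega^*$-terms back into $\Omega$-terms so that the whole pairing becomes $\mathrm{Re}\,\trace\big((-Q\eta_H^*H+M^*\eta_M P-2Q\Lambda^*P)\Omega\big)$. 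Since $\mathrm{Re}\,\trace(X\Omega)=0$ for every $\Omega\in\mathbb{C}^{r\times r}$ forces $X=0$ (take $\Omega=X^*$), the orthogonality condition is equivalent to $2Q\Lambda^*P=-Q\eta_H^*H+M^*\eta_M P$. Left-multiplying by $Q^{-1}$, right-multiplying by $P^{-1}$, and taking conjugate transposes (using that $P,Q$ are Hermitian) then yields $\Lambda=\tfrac12\big(\eta_M^*M(M^*M)^{-1}-(H^*H)^{-1}H^*\eta_H\big)$, which is the claimed formula.

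The horizontal projection then requires no further computation: $\mathcal{H}_{(H,M)}$ is by definition the $g_{(H,M)}$-orthogonal complement of $\mathcal{V}_{(H,M)}$ in $\T_{(H,M)}\mathcal{N}_r$, so $\T_{(H,M)}\mathcal{N}_r=\mathcal{V}_{(H,M)}\oplus\mathcal{H}_{(H,M)}$ and hence $P^h_{(H,M)}=\id-P^v_{(H,M)}$; if one wishes, one can additionally verify that $\eta_{(H,M)}-P^v_{(H,M)}(\eta_{(H,M)})$ can be put in the parametrized form displayed for $\mathcal{H}_{(H,M)}$ (with $K$ coming from the $H$- and $M$-column components and $T,Q$ from the orthogonal complements), but this is not needed for the statement. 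I expect the only genuine obstacle to be the trace algebra in the middle step — keeping the conjugate-transpose structure straight so that the $\Omega^*$-dependence is correctly absorbed into a single coefficient of $\Omega$; once the pairing is in the form $\mathrm{Re}\,\trace(X\Omega)$ the conclusion is forced.
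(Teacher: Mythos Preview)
Your argument is correct. The trace algebra you outline does produce the stated $\Lambda$, and the horizontal projection follows from the orthogonal decomposition $\T_{(H,M)}\mathcal{N}_r=\mathcal{V}_{(H,M)}\oplus\mathcal{H}_{(H,M)}$.

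The paper takes a shorter, verification-style route: rather than solving for $\Lambda$, it simply asserts that with the given formula one can check $g\big(P^v_{(H,M)}(\eta),\,\eta-P^v_{(H,M)}(\eta)\big)=0$ and that the two pieces sum to $\eta$. Your derivation instead starts from the defining orthogonality condition and solves the resulting matrix equation, which is more informative in that it explains where the factor $\tfrac12$ and the particular combination $\eta_M^*M(M^*M)^{-1}-(H^*H)^{-1}H^*\eta_H$ come from. The paper's approach buys brevity; yours buys transparency and would be the natural argument if the formula were not known in advance. Both rest on the same fact --- that $H^*H$ and $M^*M$ are invertible --- so neither is logically stronger than the other.
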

\begin{proof}
It is easy to verify that $g\left(P_{(H, M)}^v\left(\eta_{(H, M)}\right), P_{(H, M)}^h\left(\eta_{(H, M)}\right)\right) = 0$ and $P_{(H, M)}^v\left(\eta_{(H, M)}\right) + P_{(H, M)}^h\left(\eta_{(H, M)}\right) = \eta_{(H, M)}$.
\end{proof}

\subsection{Retraction and Vector transport}

Retraction is used to update iterates in a Riemannian algorithm and vector transport is used to compare tangent vectors in different tangent spaces. Specifically, a retraction $R$ is a smooth mapping from the tangent bundle $\T \mathcal{M}$, which is the set of all tangent vectors to $\mathcal{M}$, onto a manifold $\mathcal{M}$ such that (i) $R(0_x)=x$ for all $x\in\mathcal{M}$ (where $0_x$ denotes the origin of $\T_x \mathcal{M}$) and (ii) $\frac{d}{dt}R(t\xi_x)|_{t=0} = \xi_x$ for all $\xi_x\in\T_x\mathcal{M}$. The restriction of $R$ to $\T_x\mathcal{M}$ is denoted by $R_x$. A vector transport $\mathcal{T}: \T\mathcal{M}\oplus\T\mathcal{M} \to \T \mathcal{M}, (\eta_x,\xi_x)\mapsto \mathcal{T}_{\eta_x} \xi_x$ with associated retraction $R$ is a smooth mapping such that, for all $(x, \eta_x)$ in the domain of $R$ and all $\xi_x \in \T_x\mathcal{M}$, it holds that (i) $\mathcal{T}_{{\eta_x}}\xi_x \in \T_{R(\eta_x)}\mathcal{M}$, (ii) $\mathcal{T}_{{0_x}}\xi_x = \xi_x$, (iii) $\mathcal{T}_{{\eta_x}}$ is a linear map.
A retraction in the total space $\mathcal{N}_r$ is
\begin{equation}\label{BD:e1}
R_{(H, M)}(\eta_{(H, M)}) = (H + \eta_{H}, M + \eta_{M}).
\end{equation}
It follows from Lemma~\ref{BD:le1} that $\pi(R_{(H, M)}(\eta_{(H, M)})) = \pi(R_{(HP^{-1}, M P^*)}(\eta_{(HP^{-1}, MP^*)}))$. Therefore, \eqref{BD:e1} defines a retraction in the quotient manifold $\mathcal{Q}_r$, i.e.,
\begin{equation} \label{BD:e34}
\tilde{R}_{\pi(H, M)} (\eta_{\pi(H, M)}) := \pi(R_{(H, M)}(\eta_{\uparrow_{(H, M)}})).
\end{equation}
The vector transport used is the vector transport by parallelization \cite{HAG2016VT}:
\begin{equation} \label{BD:e19}
\mathcal{T}_{\eta_x} = \mathcal{B}_y \mathcal{B}_x^\dag,
\end{equation}
where $y = R_x(\eta_x)$, $\mathcal{B}$ is a smooth tangent basis field defined on an open set of a manifold $\mathcal{M}$ and $\mathcal{B}_x^\dag$ denotes the pseudo-inverse of $\mathcal{B}_x$.
A smooth orthonormal basis of $\mathcal{H}_{(H, M)}$ is 
\begin{gather*}
\left\{\left( \frac{1}{\sqrt{2}} H L_H^{-*} e_i e_j^T L_M^{-1}, \frac{1}{\sqrt{2}} M L_M^{-*} e_j e_i^T L_H^{-1} \right), i = 1, \ldots, r, j = 1, \ldots, r \right\} \\
\bigcup \left\{\left( \frac{1}{\sqrt{2}} H L_H^{-*} e_i e_j^T L_M^{-1} \sqrt{-1}, -\frac{1}{\sqrt{2}} M L_M^{-*} e_j e_i^T L_H^{-1} \sqrt{-1} \right), i = 1, \ldots, r, j = 1, \ldots, r \right\} \\
\bigcup \left\{ \left(H_\perp \tilde{e}_i e_j^T L_M^{-1}, 0_{N \times r} \right), i = 1, \ldots K - r, j = 1, \ldots r \right\} \\
\bigcup \left\{ \left(H_\perp \tilde{e}_i e_j^T L_M^{-1} \sqrt{-1}, 0_{N \times r} \right), i = 1, \ldots K - r, j = 1, \ldots r \right\} \\
\bigcup \left\{ \left(0_{K \times r}, M_\perp \hat{e}_i e_j^T L_H^{-1} \right), i = 1, \ldots N - r, j = 1, \ldots r \right\} \\
\bigcup \left\{ \left(0_{K \times r}, M_\perp \hat{e}_i e_j^T L_H^{-1} \sqrt{-1} \right), i = 1, \ldots N - r, j = 1, \ldots r \right\},
\end{gather*}
where $(e_1, \ldots, e_r)$ is the canonical basis of $\mathbb{R}^r$, $(\tilde{e}_1, \ldots, \tilde{e}_{(K - r)})$ is the canonical basis of $\mathbb{R}^{K - r}$, $(\hat{e}_1, \ldots, \hat{e}_{(N - r)})$ is the canonical basis of $\mathbb{R}^{N - r}$, and $H^* H = L_H L_H^*$ and $M^* M = L_M L_M^*$ are Cholesky decompositions. Therefore, the vector transport~\eqref{BD:e19} with the above orthonormal basis yields locally smooth, linear mapping from $\mathcal{H}_{(H, M)}$ to $\mathcal{H}_{R_{(H, M)}(\eta_{(H, M)})}$. In the case of $r = 1$, one can choose $h_\perp$ to be the last $K-1$ columns of the Householder matrix $I - 2 v v^* / \|v\|_2^2$, where $v = h - \|h\| e_1$, and similarly for $m_\perp$. It follows that any corresponding pair of tangent vectors $\eta_{(h, m)}$ and $(\eta_{hp^{-1}}, \eta_{mp^*})$ in bases $B_{(h, m)}$ and $B_{(h p^{-1}, m p^*)}$ satisfies~\eqref{BD:e29}, i.e., $(\eta_{hp^{-1}}, \eta_{mp^*}) = (\eta_h p^{-1}, \eta_m p^*)$. Therefore, the mapping
\begin{equation} \label{BD:e30}
\left(\mathcal{T}_{\eta_{\pi(h, m)}} \xi_{\pi(h, m)} \right)_{\uparrow_{(h, m)}} := B_{(\tilde{h}, \tilde{m})} B_{(h, m)}^\dagger
\end{equation}
defines a vector transport on $\mathcal{Q}_1$, where $(\tilde{h}, \tilde{m}) = R_{(h, m)}\left(\eta_{\uparrow_{(h, m)}}\right)$.

\subsection{Riemannian Gradient}

The Riemannian gradient of a function $\mathbf{f}: \mathcal{M} \to \mathbb{R}: x \mapsto \mathbf{f}(x)$ is defined to be the unique tangent vector $\grad \mathbf{f} \in \T_x \mathcal{M}$ satisfying
\begin{equation*}
\mathrm{D} \mathbf{f}(x) [\eta_x] = g(\grad \mathbf{f}(x), \eta_x), \;\; \forall \eta_x \in \T_x \mathcal{M},
\end{equation*}
where $g$ is the Riemannian metric of $\mathcal{M}$. The Riemannian gradient of~\eqref{BD:e13} is derived in Lemma~\ref{BD:le14}.
\begin{lemm} \label{BD:le14}
Given any $\pi(h, m) \in \mathcal{Q}_1$, the gradient of $\tilde{f}_{\pi(h, m)}$ is
\begin{gather} \label{BD:e18}
\left(\grad \tilde{f}(\pi(h, m))\right)_{\uparrow_{(h, m)}} = P_{(h, m)}^h \Big(\nabla_h \tilde{f}(h, m) (m^* m)^{-1}, \nabla_m \tilde{f}(h, m) (h^* h)^{-1} \Big),
\end{gather}
where
$
\nabla_h \tilde{f}(h, m) = J_1 m + \frac{L \rho}{4 d^2 \mu^2} \sum_{i = 1}^L G_0'\left(\frac{L|b_i^* h|^2 \|m\|_2^2}{8 d^2 \mu^2}\right) (b_i b_i^* h \|m\|_2^2)$ is the Euclidean gradient with respect to $h$,
$\nabla_m \tilde{f}(h, m) = J_1^* h + \frac{L \rho}{4 d^2 \mu^2} \sum_{i = 1}^L G_0'\left(\frac{L|b_i^* h|^2 \|m\|_2^2}{8 d^2 \mu^2}\right) (m |b_i^* h|^2)$ is the Euclidean gradient with respect to $m$, and $J_1 = 2 \left(B^* \diag(\diag(B h m^* C^*) - y) {C}\right)$. Moreover, if the representation $(h, m)$ satisfies $\|h\|_2 = \|m\|_2$, then
\begin{gather} \label{BD:e28}
\left(\grad \tilde{f}(\pi(h, m))\right)_{\uparrow_{(h, m)}} = \Big(\nabla_h \tilde{f}(h, m) (m^* m)^{-1}, \nabla_m \tilde{f}(h, m) (h^* h)^{-1} \Big).
\end{gather}
\end{lemm}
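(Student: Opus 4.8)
The plan is to derive \eqref{BD:e18} from the defining property of the Riemannian gradient together with the general fact that, for a quotient manifold, the horizontal lift of the Riemannian gradient is the horizontal projection of the Riemannian gradient of the lifted cost on the total space. First I would observe that $\tilde f \circ \pi$ equals the Euclidean cost plus penalty $\tilde f(h,m)$ on $\mathcal N_1 = \mathbb C_*^K \times \mathbb C_*^N$, which is invariant along orbits by construction of $G$ in \eqref{BD:e36} (since $|b_i^* h|^2\|m\|_2^2$ and $\diag(Bhm^*C^*)$ are invariant under $(h,m)\mapsto(hp^{-1},mp^*)$). Hence its Euclidean-gradient-based Riemannian gradient on $(\mathcal N_1, g)$ is the horizontal lift of $\grad\tilde f(\pi(h,m))$ once we project to the horizontal space, by the standard submersion argument in~\cite{AMS2008}.

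Second I would compute the Riemannian gradient of the lifted cost on $(\mathcal N_1, g)$ with the metric \eqref{BD:Metric}. Writing $\zeta = \grad_g(\tilde f\circ\pi)$, the defining equation is $\mathrm{Re}\,\trace\big((\nabla_h\tilde f)^*\eta_h + (\nabla_m\tilde f)^*\eta_m\big) = \mathrm{Re}\,\trace\big(\zeta_h^*\eta_h (m^*m) + \zeta_m^*\eta_m(h^*h)\big)$ for all $\eta=(\eta_h,\eta_m)$; matching the $\eta_h$ and $\eta_m$ parts separately and using that $m^*m$ and $h^*h$ are invertible scalars (since $r=1$) gives $\zeta_h = \nabla_h\tilde f\,(m^*m)^{-1}$ and $\zeta_m = \nabla_m\tilde f\,(h^*h)^{-1}$. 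This is precisely the argument inside $P_{(h,m)}^h(\cdot)$ in \eqref{BD:e18}; applying $P_{(h,m)}^h$ to land in $\mathcal H_{(h,m)}$ completes the derivation of \eqref{BD:e18}. The explicit forms of $\nabla_h\tilde f$, $\nabla_m\tilde f$, and $J_1$ are routine Wirtinger-calculus computations: differentiate $\|y-\mathcal A(hm^*)\|_2^2$ and each term $\rho\,G_0\big(L|b_i^*h|^2\|m\|_2^2/(8d^2\mu^2)\big)$ using the chain rule with $G_0'$, which I would state but not belabor.

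Third, for \eqref{BD:e28}, I would show that when $\|h\|_2=\|m\|_2$ the argument of $P^h_{(h,m)}$ already lies in $\mathcal H_{(h,m)}$, so the projection acts as the identity. By Lemma~\ref{BD:le8}, this reduces to checking $\Lambda = 0.5\big(\zeta_m^* m (m^*m)^{-1} - (h^*h)^{-1}h^*\zeta_h\big) = 0$ with $\zeta_h = \nabla_h\tilde f\,(m^*m)^{-1}$ and $\zeta_m = \nabla_m\tilde f\,(h^*h)^{-1}$. Substituting the explicit gradients, the penalty contributions to $\zeta_m^*m$ and $h^*\zeta_h$ are both real multiples of $\sum_i G_0'(\cdot)|b_i^*h|^2$ (the $h$-term gives $h^*b_ib_i^*h\|m\|_2^2 = |b_i^*h|^2\|m\|_2^2$ and the $m$-term gives $m^*m\,|b_i^*h|^2$), and after dividing by $(m^*m)^{-1}$ resp.\ $(h^*h)^{-1}$ and using $\|h\|_2^2 = \|m\|_2^2$ these cancel; similarly $m^*J_1^*h\,(h^*h)^{-1}(m^*m)^{-1}$ and $h^*J_1 m\,(m^*m)^{-1}(h^*h)^{-1}$ cancel since $m^*J_1^*h = \overline{h^* J_1 m}$ and the scalar coefficients agree. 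So $\Lambda=0$ and \eqref{BD:e28} follows.

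The main obstacle I anticipate is the bookkeeping in verifying $\Lambda = 0$ under $\|h\|_2 = \|m\|_2$: one must be careful that the data-fidelity part contributes $m^*J_1^*h/(h^*h)$ against $h^*J_1 m/(m^*m)$ and these are complex conjugates of each other times equal real scalars, so their difference need not obviously vanish unless one tracks that $\Lambda$ itself is what it is — but since $\zeta_m^* m$ and $h^*\zeta_h$ are honest scalars and $\zeta_m^*m = \overline{m^*\zeta_m}$, the combination in $\Lambda$ is real only after the norm-equality is imposed. I would handle this by writing everything in terms of the scalar $\tau := h^*J_1 m \in\mathbb C$ and the real penalty scalar, checking that the coefficient of $\tau$ in $\zeta_m^*m(m^*m)^{-1}$ equals the coefficient of $\bar\tau$ in $(h^*h)^{-1}h^*\zeta_h$, so that $\Lambda$ is a difference of a number and its conjugate — which is purely imaginary, not zero, in general. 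This suggests the statement as written requires the additional (correct) observation that the Euclidean gradient of the invariant function $\tilde f\circ\pi$ must satisfy $\mathrm{Re}(h^*\nabla_h\tilde f) = \mathrm{Re}(m^*\nabla_m\tilde f)$ (differentiating invariance along the orbit at $p=1$ in the real direction) and the imaginary parts are killed by the normalization; I would invoke orbit-invariance of $\tilde f\circ\pi$ directly to get $\D(\tilde f\circ\pi)(h,m)[\,\text{vertical vector}\,] = 0$, i.e., $g$-orthogonality of $\zeta$ to $\mathcal V_{(h,m)}$, which is exactly $\Lambda = 0$, and then separately note that under $\|h\|_2=\|m\|_2$ this horizontal $\zeta$ coincides with the displayed formula.
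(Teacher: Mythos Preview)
Your approach to \eqref{BD:e18} is essentially the paper's: compute the directional derivative of the lifted cost, identify the Euclidean gradients $\nabla_h\tilde f,\nabla_m\tilde f$, match against the metric \eqref{BD:Metric}, and project horizontally. The paper does exactly this (writing the directional derivative term by term and then invoking \cite[(3.31)]{AMS2008} together with \eqref{BD:QMetric}).

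For \eqref{BD:e28} the paper simply asserts, without computation, that the pair $\big(\nabla_h\tilde f\,(m^*m)^{-1},\,\nabla_m\tilde f\,(h^*h)^{-1}\big)$ is already horizontal when $\|h\|_2=\|m\|_2$. Your orbit-invariance argument at the end is the right way to justify this and is cleaner than either your direct computation or the paper's bare assertion: since $\tilde f\circ\pi$ is constant along fibers, its $g$-gradient on $\mathcal N_1$ is $g$-orthogonal to $\mathcal V_{(h,m)}$, which is precisely $\Lambda=0$ in Lemma~\ref{BD:le8}. Note that this actually shows the projection is trivial for \emph{every} representative $(h,m)$, so the hypothesis $\|h\|_2=\|m\|_2$ is not needed; the lemma is stated under that hypothesis only because that is the representative used in the algorithm.

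Your ``main obstacle'' paragraph is a red herring caused by a small bookkeeping slip: in the $\Lambda$-formula you need $\zeta_m^*m = (h^*h)^{-1}(J_1^*h)^*m = (h^*h)^{-1}h^*J_1 m$, not $m^*J_1^*h$. With the correct expression, both data terms in $\Lambda$ equal $\tau\,(h^*h)^{-1}(m^*m)^{-1}$ with $\tau=h^*J_1m$ and cancel exactly (no conjugate mismatch), and the penalty contributions likewise cancel since $h^*b_ib_i^*h\|m\|_2^2 = |b_i^*h|^2\|m\|_2^2 = (\nabla_m G_{\mathrm T,i})^*m$. So the direct computation would have closed immediately; your detour through orbit-invariance is nonetheless a valid and conceptually tidier route.
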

\begin{proof}
Taking directional derivative along any direction $\eta_{\pi(h, m)} \in \T_{\pi(h, m)} \mathcal{Q}_r$ yields
\begin{align*}
\D \tilde{f}(\pi(h, m)) [\eta_{\pi(h, m)}] =& \D \left( \|y - \diag(B h m^* C^*)\|_2^2 + \rho \sum_{i = 1}^L G_0\left(\frac{L |b_i^* h|^2 \|m\|_2^2}{8 d^2 \mu^2}\right) \right) \left[ \left( \eta_{\uparrow_h}, \eta_{\uparrow_m}\right) \right] \\
=& 2 \mathrm{Re} \trace\left((\diag(B h m^* C^*) - y)^T \diag(B \eta_{\uparrow_h} m^* C^*)\right) \\
&+ 2 \mathrm{Re} \trace\left((\diag(B h m^* C^*) - y)^T \diag(B h \eta_{\uparrow_m}^* C^*)\right) \\
&+ \mathrm{Re} \left( \frac{L \rho}{4 d^2 \mu^2} \sum_{i = 1}^L G_0'\left(\frac{L|b_i^* h|^2 \|m\|_2^2}{8 d^2 \mu^2}\right) (\eta_{\uparrow_h}^* b_i b_i^* h \|m\|_2^2) \right) \\
&+ \mathrm{Re} \left( \frac{L \rho}{4 d^2 \mu^2} \sum_{i = 1}^L G_0'\left(\frac{L|b_i^* h|^2 \|m\|_2^2}{8 d^2 \mu^2}\right) (\eta_{\uparrow_m}^* m |b_i^* h|^2) \right) \\
=& \mathrm{Re} \left(\eta_{\uparrow_h}^* \nabla_h \tilde{f}(h, m) + \eta_{\uparrow_m}^* \nabla_m \tilde{f}(h, m)\right).
\end{align*}
Combining the above equation with the definition~\cite[(3.31)]{AMS2008} and~\eqref{BD:QMetric} yields~\eqref{BD:e18}. If $\|h\|_2 = \|m\|_2$, then the projection is not necessary since $\Big(\nabla_h \tilde{f}(h, m) (m^* m)^{-1}, \nabla_m \tilde{f}(h, m) (h^* h)^{-1} \Big)$ is in $\mathcal{H}_{(h, m)}$ already.
\end{proof}

%

\subsection{A Riemannian Steepest Descent Algorithm} \label{BD:s13}

Algorithm~\ref{BD:a5} is an instance of a Riemannian steepest descent method.
A Riemannian steepest descent iteration for optimizing a real-valued function $\mathbf{f}$ on $\mathcal{M}$ is $x_{k+1} = R_{x_k}(- \alpha \grad \mathbf{f}(x_k))$, where $\alpha>0$ is a step size and $R$ is a retraction. In the case of optimizing $\tilde{f}$ in~\eqref{BD:e13}, the iteration is $\pi(h_{(k + 1)}, m_{(k + 1)}) = R_{\pi(h_k, m_k)} (\grad \tilde{f}(\pi(h_k, m_k)))$, which yields implementations stated in Algorithm~\ref{BD:a1}. The update \eqref{BD:e31} is the generic Riemannian steepest descent update formula and~\eqref{BD:e32} is for this specifical problem. Since the iteration $\pi(h_{(k + 1)}, m_{(k + 1)}) = R_{\pi(h_k, m_k)} (\grad \tilde{f}(\pi(h_k, m_k)))$ is independent of representations chosen in equivalence class, we can choose $h_k$ and $m_k$ such that they always have the same norm.
In addition, the projection onto the horizontal space is not necessary due to~\eqref{BD:e28}. It follow that we have the update formula~\eqref{BD:e33}, which is used in Algorithm~\ref{BD:a5}. 


\begin{algorithm}
\caption{A Riemannian steepest descent method}
\label{BD:a1}
\begin{algorithmic}[1]
\STATE $k\leftarrow0$;
\FOR {$k = 0, 1, 2, \ldots$}
\STATE Set
\begin{equation}\label{BD:e31}
(h_{k+1}, m_{k+1}) = \tilde{R}_{(h_k, m_k)}\left(- \alpha \left(\grad \tilde{f}(\pi(h, m))\right)_{\uparrow_{(h, m)}} \right);
\end{equation}
or equivalently
\begin{equation}\label{BD:e32}
(h_{k+1}, m_{k+1}) = (h_k, m_k) - \alpha P_{(h, m)}^h\left(\nabla_{h_k} \tilde{f}(h_k, m_k) (m_k^* m_k)^{-1}, \nabla_{m_k} \tilde{f}(h_k, m_k) (h_k^* h_k)^{-1}\right);
\end{equation}
or equivalently $d_k = \|h_k\|_2 \|m_k\|_2$, $h_k \gets \sqrt{d_k} \frac{h_k}{\|h_k\|_2}$; $m_k \gets \sqrt{d_k} \frac{m_k}{\|m_2\|_2}$ and
\begin{equation} \label{BD:e33}
(h_{k+1}, m_{k+1}) = (h_k, m_k) - \frac{\alpha}{d_k} \left(\nabla_{h_k} \tilde{f}(h_k, m_k), \nabla_{m_k} \tilde{f}(h_k, m_k)\right);
\end{equation}
\ENDFOR
\end{algorithmic}
\end{algorithm}

\subsection{Local Convexity}  \label{BD:s6}

In this section, we give the complete version of Remark~\ref{BD:th5} in Theorem~\ref{BD:th4}. We consider the Riemannian Hessian of the function $f \circ \tilde{R}_{\pi(h_{\sharp}, m_{\sharp})}$, which is defined on the tangent space $\T_{\pi(h_{\sharp}, m_{\sharp})} \mathcal{Q}_1$. Since a tangent space is a vector space, the Riemannian Hessian with respect to the metric~\eqref{BD:QMetric} is
\begin{align*}
&\left(\Hess f \circ \tilde{R}_{\pi(h, m)}(\xi_{\pi(h, m)})  [\eta_{\pi(h, m)}]\right)_{\uparrow_{(h, m)}} \\
= &P_{(h, m)}^h \left((J_3 (m + \xi_{\uparrow_m}) + J_1 \eta_{\uparrow_{m}}) (m^* m)^{-1}, (J_3^* (h + \xi_{\uparrow_h}) + J_1^* \eta_{\uparrow_{h}}) (h^* h)^{-1} \right),
\end{align*}
where $J_1$ is defined in~\eqref{BD:e18} and $J_3 = 2 \mathrm{Re}\left(B^* \diag(\diag(B\left(\eta_{\uparrow_{h}} (m + \eta_{\uparrow_{m}})^* + (h + + \xi_{\uparrow_h}) \eta_{\uparrow_{m}}^* \right)C^*)) C\right)$. 
The complete version of Remark~\ref{BD:th5} is given below.
\begin{theo} \label{BD:th4}
With probability at least $1 - L^{-\gamma}$, the eigenvalues of the Hessian of $f \circ \tilde{R}_{\pi(h_{\sharp}, m_{\sharp})}$ at $0_{\pi(h_{\sharp}, m_{\sharp})}$ are between $9 d_*^2 / 5$ and $22 d_*^2 / 5$ up to noise, i.e.,
\begin{equation*}
\frac{22 d_*^2}{5} + 4 \|\mathcal{A}^*(e)\|_2 d_* \geq \frac{g\left(\eta_{\pi(h_{\sharp}, m_{\sharp})}, \Hess f \circ \tilde{R}_{\pi(h_{\sharp}, m_{\sharp})}(0_{\pi(h_{\sharp}, m_{\sharp})})  [\eta_{\pi(h_{\sharp}, m_{\sharp})}]\right)}{g(\eta_{\pi(h_{\sharp}, m_{\sharp})}, \eta_{\pi(h_{\sharp}, m_{\sharp})})} \geq \frac{9 d_*^2}{5} - 4 \|\mathcal{A}^*(e)\|_2 d_*,
\end{equation*}
for all $\eta_{\pi(h_{\sharp}, m_{\sharp})} \in \T_{\pi(h_{\sharp}, m_{\sharp})} \mathcal{Q}_r$, where $\tilde{R}$ is defined in~\eqref{BD:e34}.
\end{theo}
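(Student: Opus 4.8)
The plan is to evaluate the displayed formula for $\Hess\big(f\circ\tilde R_{\pi(h,m)}\big)$ at the solution, reduce the Rayleigh quotient in the claim to a uniform near-isometry estimate for $\mathcal{A}$ on the tangent space of the rank-one matrices at $h_\sharp m_\sharp^*$, and absorb everything else into an $\|\mathcal{A}^*(e)\|_2$-sized correction. Throughout I would work in the balanced representative $(h_\sharp,m_\sharp)$ with $\|h_\sharp\|_2=\|m_\sharp\|_2=\sqrt{d_*}$; this is legitimate because the Rayleigh quotient on the right of the claim depends only on $\pi(h_\sharp,m_\sharp)$ and on $\eta_{\pi(h_\sharp,m_\sharp)}$, not on the chosen lift.

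First I would simplify the Hessian at $\xi_{\pi(h_\sharp,m_\sharp)}=0$. Since $y=\diag(B h_\sharp m_\sharp^* C^*)+e$, the residual at the solution is $-e$, so $J_1=2B^*\diag(-e)C=-2\mathcal{A}^*(e)$. Moreover $\mu\ge\mu_h$ and $d\ge\tfrac{9}{10}d_*$ force each penalty argument $\tfrac{L|b_i^*h_\sharp|^2\|m_\sharp\|_2^2}{8d^2\mu^2}\le\tfrac{\mu_h^2 d_*^2}{8d^2\mu^2}<1$, so $G_0=G_0'=G_0''=0$ at the solution and the penalty $G$ contributes nothing to value, gradient, or Hessian there (consistent with phrasing the statement for $f$ rather than $\tilde f$). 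With $\xi=0$ the displayed $J_3$ becomes $J_3=2\,\mathrm{Re}\big(B^*\diag(\diag(B\Delta C^*))C\big)=2\,\mathrm{Re}\big(\mathcal{A}^*\mathcal{A}(\Delta)\big)$, where $\Delta:=\eta_{\uparrow_h}m_\sharp^*+h_\sharp\eta_{\uparrow_m}^*$ is the horizontal lift of $\eta_{\pi(h_\sharp,m_\sharp)}$ viewed in $\T_{h_\sharp m_\sharp^*}\mathbb{C}_1^{K\times N}$.

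Next I would compute the quadratic form. Because $\eta_{\uparrow_{(h_\sharp,m_\sharp)}}$ is horizontal and $P^h$ is orthogonal for the metric~\eqref{BD:Metric}, pairing $\Hess(0)[\eta]$ with $\eta$ erases $P^h$, and the $(h_\sharp^*h_\sharp)$, $(m_\sharp^*m_\sharp)$ factors in~\eqref{BD:QMetric} cancel the corresponding inverses in the Hessian formula; equivalently one may twice differentiate $t\mapsto f\big(\tilde R_{\pi(h_\sharp,m_\sharp)}(t\eta)\big)$ at $t=0$, using that the retraction~\eqref{BD:e1}--\eqref{BD:e34} makes the quadratic remainder of $t\mapsto(h_\sharp+t\eta_{\uparrow_h})(m_\sharp+t\eta_{\uparrow_m})^*$ equal to $t^2\eta_{\uparrow_h}\eta_{\uparrow_m}^*$. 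Either route shows that the numerator reduces to a fixed positive multiple of $\|\mathcal{A}(\Delta)\|_2^2$ together with a term bounded by $O\!\big(\|\mathcal{A}^*(e)\|_2\,\|\eta_{\uparrow_h}\|_2\,\|\eta_{\uparrow_m}\|_2\big)$ coming from $J_1$, the precise powers of $d_*$ in the final display being pinned down by carrying the balanced normalization through. The denominator $g(\eta,\eta)$ is comparable to $\|\Delta\|_F^2$: the horizontal condition forces $\eta_{\uparrow_h}^*h_\sharp=m_\sharp^*\eta_{\uparrow_m}$, so the cross term in $\|\Delta\|_F^2$ equals $2|\eta_{\uparrow_h}^*h_\sharp|^2\in\big[0,\,d_*(\|\eta_{\uparrow_h}\|_2^2+\|\eta_{\uparrow_m}\|_2^2)\big]$, giving
\[
g(\eta,\eta)\;\le\;\|\Delta\|_F^2\;\le\;2\,g(\eta,\eta).
\]

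Finally I would bound the two pieces. The noise term is controlled by rank-one/operator-norm duality, $|\mathrm{Re}\langle\mathcal{A}^*(e),\eta_{\uparrow_h}\eta_{\uparrow_m}^*\rangle|\le\|\mathcal{A}^*(e)\|_2\|\eta_{\uparrow_h}\|_2\|\eta_{\uparrow_m}\|_2\le\tfrac12\|\mathcal{A}^*(e)\|_2(\|\eta_{\uparrow_h}\|_2^2+\|\eta_{\uparrow_m}\|_2^2)$, which yields the $\pm 4\|\mathcal{A}^*(e)\|_2 d_*$ correction to the Rayleigh quotient. For the main term I would invoke the near-isometry estimate $(1-\tfrac{1}{10})\|\Delta\|_F^2\le\|\mathcal{A}(\Delta)\|_2^2\le(1+\tfrac{1}{10})\|\Delta\|_F^2$, valid uniformly over the $(K+N-1)$-dimensional subspace $\T_{h_\sharp m_\sharp^*}\mathbb{C}_1^{K\times N}$ with probability at least $1-L^{-\gamma}$ under the standing model and sample complexity (this is the restricted-isometry-type bound on the Gaussian operator $\mathcal{A}$, whose failure probability and $\mu_h$-dependent sample size are as in~\cite{LLSW2016}; it may be imported as the event already underlying Theorem~\ref{BD:th2}). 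Combining this with $g(\eta,\eta)\le\|\Delta\|_F^2\le2g(\eta,\eta)$ squeezes the noiseless part of the quotient into $\big[2(1-\tfrac1{10}),\,4(1+\tfrac1{10})\big]=\big[\tfrac95,\tfrac{22}5\big]$ (up to the $d_*$-scaling), which is the assertion. The main obstacle is establishing — or correctly citing, with the right probability and sample-complexity constants — the uniform near-isometry of $\mathcal{A}$ on this tangent space; the only other subtlety is the bookkeeping that erases $P^h$ and the metric factors so that the quotient collapses cleanly to $\|\mathcal{A}(\Delta)\|_2^2/\|\Delta\|_F^2$ times an explicit constant plus the controlled noise term.
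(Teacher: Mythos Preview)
Your proposal is correct and follows essentially the same route as the paper: reduce the Hessian quadratic form at the solution to $2\|\mathcal{A}(\Delta)\|_2^2$ plus a noise term bounded via $\|\mathcal{A}^*(e)\|_2$, invoke the near-isometry \cite[Lemma~5.12]{LLSW2016} on the rank-two tangent $\Delta=\eta_{\uparrow_h}m_\sharp^*+h_\sharp\eta_{\uparrow_m}^*$, and compare $\|\Delta\|_F^2$ with $g(\eta,\eta)$ using the horizontal constraint. The paper obtains that last comparison by the explicit orthogonal decomposition $\eta_{\uparrow_h}=ah_\sharp+b\eta_{h_\perp}$, $\eta_{\uparrow_m}=a^*m_\sharp+\beta\eta_{m_\perp}$, which is exactly your cross-term identity $2|\eta_{\uparrow_h}^*h_\sharp|^2\in[0,g(\eta,\eta)]$ written out in coordinates.
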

\begin{proof}
Without loss of generality, we assume $\|h_{\sharp}\|_2 = \|m_{\sharp}\|_2 = \sqrt{d_*}$.
We have
\begin{align*}
&g\left(\eta_{\pi(h_{\sharp}, m_{\sharp})}, \Hess f \circ \tilde{R}_{\pi(h_{\sharp}, m_{\sharp})}(0_{\pi(h_{\sharp}, m_{\sharp})})  [\eta_{\pi(h_{\sharp}, m_{\sharp})}] \right) = \frac{d^2}{d t^2} f \circ \tilde{R}_{\pi(h_{\sharp}, m_{\sharp})} (t \eta_{\pi(h_{\sharp}, m_{\sharp})}) \vert_{t = 0} \\
=& \frac{d^2}{d t^2} \|y - \diag(B (h_{\sharp} + t \eta_{\uparrow_{h_{\sharp}}}) (m_{\sharp} + t \eta_{\uparrow_{m_{\sharp}}})^* C^*)\|_2^2 \vert_{t = 0} \\
=& \mathrm{Re}\left(\trace\left(\eta_{\uparrow_{h_{\sharp}}}^* (J_2 m_{\sharp} + J_1 \eta_{\uparrow_{m_{\sharp}}}) + \eta_{\uparrow_{m_{\sharp}}^*} (J_2^* h_{\sharp} + J_1^* \eta_{\uparrow_{h_{\sharp}}})\right)\right),
\end{align*}
where $J_1$ is defined in~\eqref{BD:e18} and $J_2 = 2 \mathrm{Re}\left(B^* \diag(\diag(B\left(\eta_{\uparrow_{h_{\sharp}}} m^* + h \eta_{\uparrow_{m_{\sharp}}}^* \right)C^*)) C\right)$. It follows from the definition of $\pi(h_{\sharp}, m_{\sharp})$ that $J_1 = 2 \mathcal{A}^*(e)$. Thus,
\begin{align}
&\frac{d^2}{d t^2} f \circ \tilde{R}_{\pi(h_{\sharp}, m_{\sharp})} (t \eta_{\pi(h_{\sharp}, m_{\sharp})}) \vert_{t = 0} = \mathrm{Re}(\trace(2 \eta_{\uparrow_{h_{\sharp}}} J_1 \eta_{\uparrow_{m_{\sharp}}}+ J_2^*(\eta_{\uparrow_{h_{\sharp}}} m^* + h_{\sharp} \eta_{\uparrow_{m_{\sharp}}}^*))) \nonumber \\
=& 4 \mathrm{Re} \trace\left( \eta_{\uparrow_{h_{\sharp}}}^* \mathcal{A}^*(e)\eta_{\uparrow_{m_{\sharp}}}\right) + 2 \|\mathcal{A}(\eta_{\uparrow_{h_{\sharp}}} m^* + h_{\sharp} \eta_{\uparrow_{m_{\sharp}}}^*)\|_2^2. \label{BD:e22}
\end{align}
It holds that
\begin{align}
|4 \mathrm{Re} \trace\left(\eta_{\uparrow_{h_{\sharp}}}^* \mathcal{A}^*(e)\eta_{\uparrow_{m_{\sharp}}} \right)| \leq& 4 \|\mathcal{A}^*(e)\|_2 \|\eta_{\uparrow_{m_{\sharp}}}\|_2 \|\eta_{\uparrow_{h_{\sharp}}}\|_2 \leq 4 \|\mathcal{A}^*(e)\|_2 \left(\|\eta_{\uparrow_{m_{\sharp}}}\|_2^2 + \|\eta_{\uparrow_{h_{\sharp}}}^2\|_2 \right) \nonumber \\
\leq& 4 \|\mathcal{A}^*(e)\|_2 d_* g(\eta_{\pi(h_{\sharp}, m_{\sharp})}, \eta_{\pi(h_{\sharp}, m_{\sharp})}). \label{BD:e20}
\end{align}
Using~\cite[Lemma~5.12]{LLSW2016} yields
\begin{equation}
\frac{11}{5} \|\eta_{\uparrow_{h_{\sharp}}} m_{\sharp}^* + h_{\sharp} \eta_{\uparrow_{m_{\sharp}}}^*\|_F^2 \geq 2 \|\mathcal{A}(\eta_{\uparrow_{h_{\sharp}}} m^* + h_{\sharp} \eta_{\uparrow_{m_{\sharp}}}^*)\|_2^2 \geq \frac{9}{5} \|\eta_{\uparrow_{h_{\sharp}}} m_{\sharp}^* + h_{\sharp} \eta_{\uparrow_{m_{\sharp}}}^*\|_F^2, \label{BD:e23}
\end{equation}
with probability at least $1 - L^{-\gamma}$ provided $L \geq C_\gamma \max(K, \mu_h^2 N) \log^2(L)$. Note that this probability is independent of $\eta_{\pi(h_{\sharp}, m_{\sharp})}$.

Decompose $\eta_{\uparrow_{h_{\sharp}}} = a h_{\sharp} + b \eta_{\uparrow_{h_\perp}}$ and $\eta_{\uparrow_{m_{\sharp}}} = a^* m_{\sharp} + \beta \eta_{\uparrow_{m_\perp}}$, where $h_{\sharp}^* \eta_{\uparrow_{h_\perp}} = 0$, $\|\eta_{\uparrow_{h_\perp}}\|_2=1$, $m_{\sharp}^* \eta_{\uparrow_{m_\perp}} = 0$, and $\|\eta_{\uparrow_{m_\perp}}\|_2=1$. Therefore, we have
\begin{align}
\|\eta_{\uparrow_{h_{\sharp}}} m^* + h_{\sharp} \eta_{\uparrow_{m_{\sharp}}}^*\|_F^2 =& \|2 a h_{\sharp} m_{\sharp}^*\|_F^2 + \|b \eta_{\uparrow_{h_\perp}} m_{\sharp}^*\|_F^2 + \|\beta h_{\sharp} \eta_{\uparrow_{m_\perp}}^*\|_F^2 = |2a|^2 d_*^2 + b^2 d_* + \beta^2 d_* \nonumber \\
\geq& 2 |a|^2 d_*^2 + b^2 d_* + \beta^2 d_* = (|a|^2 d_* + b^2 + |a|^2 d_* + \beta^2) d_* \nonumber \\
=& d_* (\|\eta_{\uparrow_{h_{\sharp}}}\|_2^2 + \|\eta_{\uparrow_{m_{\sharp}}}\|_2^2) = d_*^2 g(\eta_{\pi(h_{\sharp}, m_{\sharp})}, \eta_{\pi(h_{\sharp}, m_{\sharp})}). \label{BD:e24}
\end{align}
Similarly, we have
\begin{equation} \label{BD:e25}
\|\eta_{\uparrow_{h_{\sharp}}} m^* + h_{\sharp} \eta_{\uparrow_{m_{\sharp}}}^*\|_F^2 = |2a|^2 d_*^2 + b^2 d_* + \beta^2 d_* \leq 2 (|a|^2 d_* + b^2 + |a|^2 d_* + \beta^2) d_* \leq 2 d_*^2 g(\eta_{\pi(h_{\sharp}, m_{\sharp})}, \eta_{\pi(h_{\sharp}, m_{\sharp})}).
\end{equation}
Combining~\eqref{BD:e22}, \eqref{BD:e20}, \eqref{BD:e23}, \eqref{BD:e24} and~\eqref{BD:e25} yields
\begin{equation*}
\frac{22 d_*^2}{5} + 4 \|\mathcal{A}^*(e)\|_2 d_* \geq \frac{g\left(\eta_{\pi(h_{\sharp}, m_{\sharp})}, \Hess f \circ \tilde{R}_{\pi(h_{\sharp}, m_{\sharp})}(0_{\pi(h_{\sharp}, m_{\sharp})})  [\eta_{\pi(h_{\sharp}, m_{\sharp})}] \right)}{g(\eta_{\pi(h_{\sharp}, m_{\sharp})}, \eta_{\pi(h_{\sharp}, m_{\sharp})})} \geq \frac{9 d_*^2}{5} - 4 \|\mathcal{A}^*(e)\|_2 d_*.
\end{equation*}
\end{proof}


\section{Experiments} \label{BD:s15}

In this section, numerical simulations are used to illustrate the performance of the proposed method. 
Section~\ref{BD:s17} gives the experimental environment, synthetic problem settings, parameters and complexities.
The synthetic problems are used in Sections~\ref{BD:s18}--\ref{BD:s20}. Specifically, 
Section~\ref{BD:s18} compares the efficiency of the proposed method to the method in~\cite{LLSW2016} and an alternating minimization method; Section~\ref{BD:s19} presents an empirical estimation for the probability of successful recovery against the number of measurements; and Section~\ref{BD:s20} shows the robustness of the proposed method. In Section~\ref{BD:s23}, an image from FLAVIA dataset~\cite{WBXWCX07}, rather than a synthetic data, is used to show the performance of the proposed method in an image deblurring problem.

\subsection{Environment, Step Size, Problem Setting and Complexities} \label{BD:s17}

The codes of Algorithm~\ref{BD:a1} are written in C++ using the library ROPTLIB~\cite{HAGH2016} through its Matlab interface. All experiments are performed in Matlab R2016b on a 64bit Windows system with 3.4GHz CPU (Intel(R) Core(TM) i7-6700). The DFT is performed using the library FFTW~\cite{FFTW05} with one thread. The code is available at \url{www.math.fsu.edu/~whuang2/papers/BDSDAQM.htm}.

In signal recovery problems, including the blind deconvolution problem, theoretical results usually require the step size to be a sufficiently small constant. However, in practice, a too small step size slows down algorithm significantly while a too large step size makes algorithm fail to converge. Therefore, heuristic ideas have been used. In~\cite{CLS2016}, the step size is given by a predetermined nonincreasing sequence. In~\cite{LLSW2016}, the step size is chosen by backtracking with initial step size $1 / d$, where $d$ is the singular value in Algorithm~\ref{BD:a4}. In this paper, the step size is given by the backtracking algorithm 
with BB initial step size~\cite{BB1988}. Specifically, suppose the iterates generated by Algorithm~\ref{BD:a5} is $\{x_k\}$ and $\eta_k = R_{x_k}^{-1}(x_{k+1})$. We use the initial step size $g(s_k, y_k) / g(y_k, y_k)$, where $s_k = \mathcal{T}_{\eta_k} \eta_k$ and $y_k = \grad f(x_{k+1}) - \mathcal{T}_{\eta_k} \grad f(x_k)$ and the vector transport $\mathcal{T}$ is defined in~\eqref{BD:e19}. \whfirrev{}{Note that this initial step size may or may not be smaller than $1 / (2 a_L)$, which is assumed by Theorem~\ref{BD:th3}.}

The matrix $B$ is the first $K$ columns of a unitary $L \times L$ DFT matrix. The matrix $A$ is a Gaussian random matrix. The initial iterate, $\rho$, and $\mu$ are given using the same method as~\cite{LLSW2016}, i.e., the initial iterate is the normalized leading singular vectors of $\mathcal{A}^*(y)$, $\rho = d^2 / 100$ and $\mu = 6 \sqrt{L / (K + N)} / \log(L)$. Unless otherwise indicated, the measurements $y$ are noiseless, i.e., $y = \mathcal{A}(h_{\sharp} m_{\sharp}^*)$, where $h_{\sharp}$ and $m_{\sharp}$ are Gaussian random vectors. The stopping criterion requires $\|y - \mathcal{A}(h m^*)\|_2 / \|y\|_2 \leq 10^{-8}$.

Since $B = \sqrt{L} \mathbf{F} \mathbf{B}$ and $C = \sqrt{L} \bar{\mathbf{F}} \mathbf{C}$, the cost function $\tilde{f}$ is\footnote{$\tilde{F}$ can be done similarly and therefore its complexity is not discussed here.}
\begin{align*}
\tilde{f}(\pi(h, m)) =& \|y - L \diag\left((\mathbf{F} \mathbf{B} h) (\bar{\mathbf{F}} \mathbf{C} m)^*\right)\|_2^2  + \rho \sum_{i = 1}^L G_0\left(\frac{L |b_i^* h|^2 \|m\|_2^2}{8 d^2 \mu^2}\right).
\end{align*}
If $B$ has $K$ nonzero entries, $C$ is a random dense matrix and the order of computations of multiplications is to compute $\mathbf{B} h$ and then apply FFT to the resulting vector $\mathbf{F}(\mathbf{B} h)$ and likewise for $\bar{\mathbf{F}} (\mathbf{C} m)$, then the complexity of a function evaluation is $2 \mathrm{FFT} + 2 L N + O(L)$ flops, where a flop is a float point operation~\cite[Section 1.2.4]{GV96}. Using the same idea, the complexity of a gradient evaluation after evaluating the function value at the same point is $3 \mathrm{FFT} + 2 L N + O(L)$ flops. The complexity of the vector transport~\eqref{BD:e19} is $O(K + N)$ using the intrinsic representation (see detailed discussions about intrinsic representation in~\cite{HAG2016VT}).

%
%

\subsection{Efficiency} \label{BD:s18}

Algorithm~\ref{BD:a5} is compared to the algorithm in~\cite{LLSW2016} and an alternating minimization algorithm. The alternating minimization algorithm is stated in Algorithm~\ref{BD:a6}, which approximately optimizes over $h$ and $m$ alternatively. Specifically, when one of $h$ and $m$ is fixed, the function $F$ in~\eqref{BD:e10} is quadratic and the step size in Steps~\ref{BD:a6:st1} and~\ref{BD:a6:st2} has a closed form which can be compute cheaply.
\begin{algorithm}
\caption{An alternating minimization algorithm}
\label{BD:a6}
\begin{algorithmic}[1]
\STATE $k\leftarrow0$;
\FOR {$k = 0, 1, 2, \ldots$}
\STATE $t_* = \arg\min_{t > 0} F(h_k, m_k - t \nabla_{m_k} F(h_k, m_k))$ and set $m_{k+1} = m_k - t_* \nabla_{m_k} F(h_k, m_k)$; \label{BD:a6:st1}
\STATE $t_* = \arg\min_{t > 0} F(h_k - t \nabla_{h_k} F(h_k, m_{k+1}), m_{k+1})$ and set $h_{k+1} = h_k - t_* \nabla_{h_k} F(h_k, m_{k+1})$; \label{BD:a6:st2}
\ENDFOR
\end{algorithmic}
\end{algorithm}
Let NCBT denote the algorithm in~\cite{LLSW2016}, NCBB denote the algorithm in~\cite{LLSW2016} with modified initial step size: using BB step size rather than $1/d$; AMA denote the alternating minimization algorithm; and ROBB denote the proposed Algorithm~\ref{BD:a1} with the BB initial step size. NCBT, NCBB, and AMA are implemented in Matlab. The parameters $K$, $N$ are set to be $100$ and $100$ respectively.

Table~\ref{BD:t1} reports the results of an average of 100 random runs for $L = 400$ and $600$. Since the algorithms are performed on different languages, it is unfair to compare their computational time. Therefore, the machine-independent operations are reported. It is shown that using BB step size as the initial step size improves efficiency significantly. Increasing the number of measurements reduces the difficulty of the optimization problem in the sense that the numbers of various operations decrease. ROBB method outperforms the other methods in the sense that it needs fewer number of all the operations to reach a similar accuracy. 

\begin{table}[H]
\begin{center}
\scriptsize
\caption{An average of 100 random runs. $nFFT$ denotes the number of fast Fourier transforms (or inverst FFT). $nBh$ and $nCm$ denote the numbers of matrix vector multiplications $\mathbf{B} h$ and $\mathbf{C} m$ respectively. Note that $nBh = nCm$. $RMSE$ denotes the relative error $\frac{\|h m^T -h_{\sharp} m_{\sharp}^T\|}{\|h_{\sharp}\|\|m_{\sharp}\|}$. The subscript $k$ indicates a scale of $10^k$. The algorithms NCBT, NCBB, AMA, and ROBB are introduced in the first paragraph of Section~\ref{BD:s18}.}
\vspace{1em}
\label{BD:t1}
\begin{tabular}{c|cccc|cccc}
    \hline
    &  \multicolumn{4}{c|}{$L = 400$} &  \multicolumn{4}{c}{$L = 600$} \\
    \hline
Algorithms & NCBT& NCBB & AMA & ROBB & NCBT& NCBB & AMA & ROBB \\
\hline
$nBh$/$nCm$ & $1040$ & $349$ & $718$ & $208$ & $403$ & $162$ & $294$ & $122$ \\
\hline
$nFFT$ & $2533$ & $865$ & $1436$ & $518$ & $984$ & $401$ & $588$ & $303$ \\
\hline
$RMSE$ & $3.73_{-8}$ & $2.24_{-8}$ & $3.67_{-8}$ & $2.20_{-8}$ & $2.39_{-8}$ & $1.48_{-8}$ & $2.34_{-8}$ & $1.42_{-8}$ \\
\hline
\end{tabular}
\end{center}
\end{table}

\subsection{Number of Measurements vs Success Rate} \label{BD:s19}

Both parameters $K$ and $N$ are set to be $50$. $L/(K+L)$ takes 21 values: $\{1, 1.05, 1.1, \ldots, 1.45, 1.5, 1.6, \ldots, 2.4, 2.5\}$. We consider an algorithm to successfully recover $\pi(h_{\sharp}, m_{\sharp})$ if the RMSE of the final iterate $\pi(h, m)$ is less than $10^{-2}$, i.e., $\|h m^* - h_{\sharp} m_{\sharp}^*\|_F / \|h_{\sharp} m_{\sharp}^*\|_F \leq 10^{-2}$.

Figure~\ref{BD:f1} shows an empirical phase transition curves for the four algorithms: NCBT, NCBB, AMA, and ROBB. NCBT and NCBB perform similarly while AMA and ROBB outperform NCBT and NCBB. Given the same number of measurements, the Riemannian method ROBB has the largest successful recovery probability among the four methods.

\begin{figure}[ht]
\centering
\includegraphics[width=0.8\textwidth]{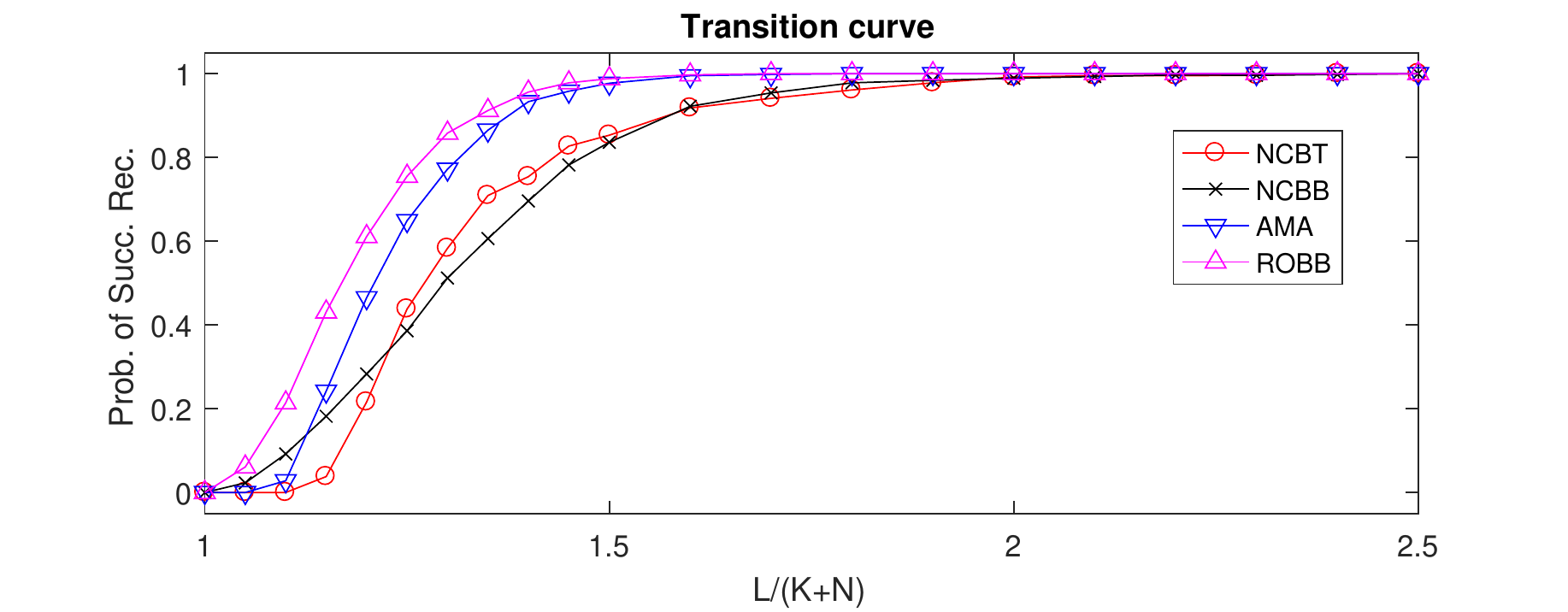}
\caption{Empirical phase transition curves for 1000 random runs.  The algorithms NCBT, NCBB, AMA, and ROBB are introduced in the first paragraph of Section~\ref{BD:s18}.}\label{BD:f1}
\end{figure}


\subsection{Noisy Measurements} \label{BD:s20}

Figure~\ref{BD:f3} shows the relationships between RMSE and SNR for Algorithm~\ref{BD:a1} with $L = 500$ and $1000$, and $K = N = 100$. The algorithm stops when the norm of the initial gradient over the norm of the last gradient is less than $10^{-12}$.
The noise measurements $y$ are given by $\mathcal{A}(h_{\sharp} m_{\sharp}^*) + e$, where the noise $e$ is $\frac{\tau w}{\|w\|_2 \|\mathcal{A}(h_{\sharp} m_{\sharp}^*)\|_2}$ for some positive value $\tau$.
Clearly, increasing the number of measurements improves the accuracy, i.e., reduces the RMSE. In addition, the curves indicate that inreasing SNR in dB reduces the RMSE in dB linearly.

\begin{figure}[ht]
\centering
\includegraphics[width=0.8\textwidth]{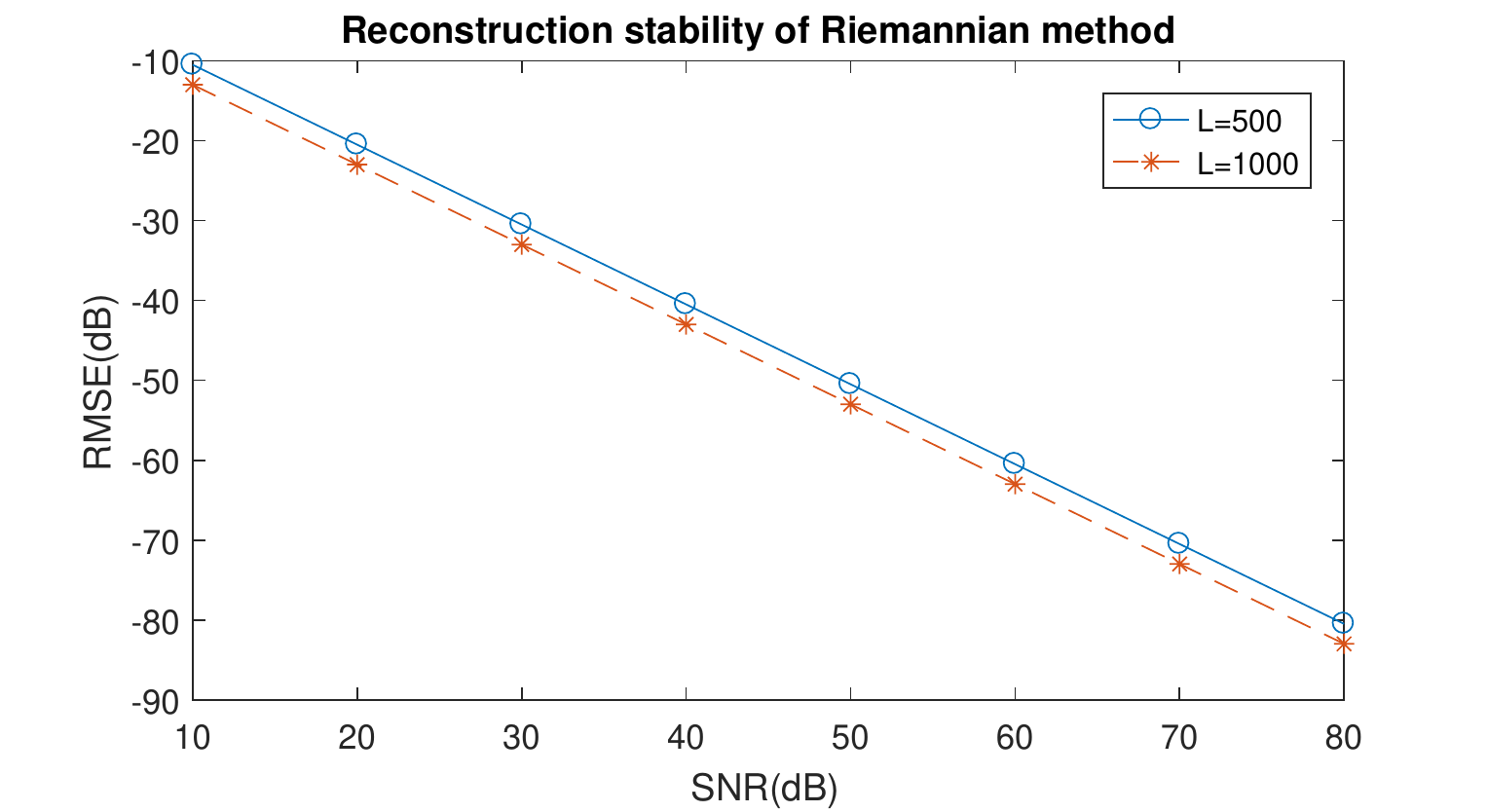}
\caption{Performance of Algorithm~\ref{BD:a5} under different SNR.}\label{BD:f3}
\end{figure}

\subsection{Natural Image} \label{BD:s23}
\whfirrev{}{
A leaf image (Figure~\ref{BD:f4}) with $1024 \times 1024$ pixels from FLAVIA dataset~\cite{WBXWCX07} is used to test the performance of Algorithm~\ref{BD:a2} on an image deblurring problem.

\begin{figure}[H]
\centering
\includegraphics[width=0.4\textwidth]{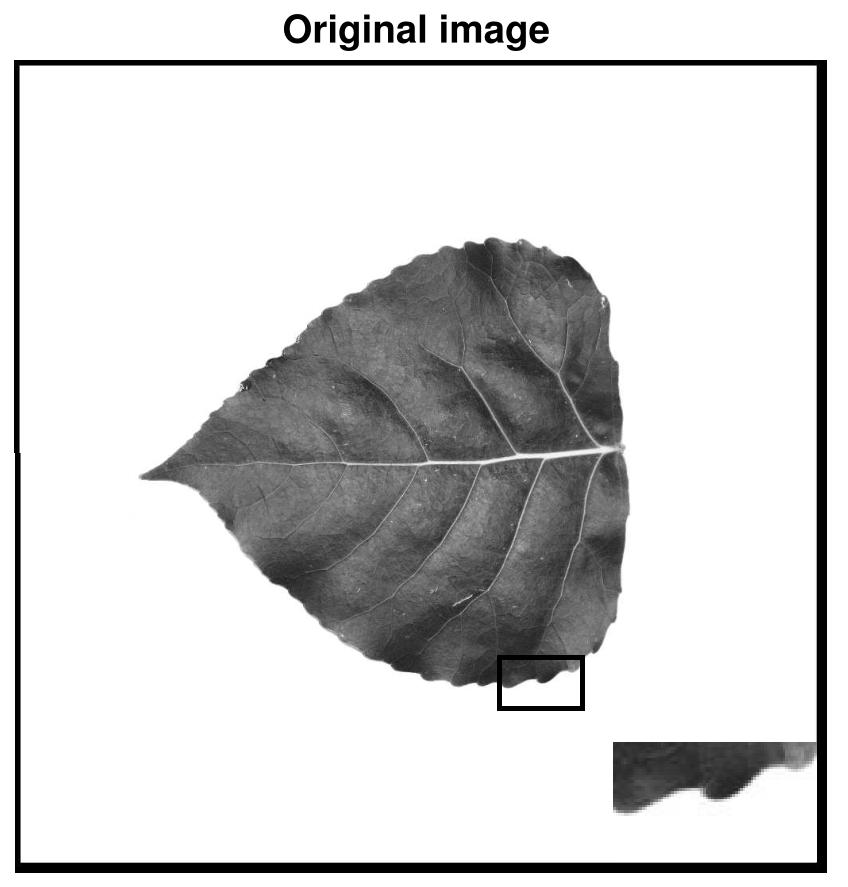}
\caption{The tested image}\label{BD:f4}
\end{figure}

A blurred image is a convolution of the original image with a blurring kernel.
Let $\mathbf{y}$, $\mathbf{\hat{y}}$ and $\kappa$ denote the vectors obtained by reshaping the blurred image, original image and the blurring kernel respectively. The measurement $y$ is therefore the vector $\mathbf{F} \mathbf{y} / \sqrt{L}$.  The matrix $\mathbf{B}$ is formed by a subset of columns of a reshaped 2D frequency Fourier matrix, where the columns correspond to the nonzero entries in $\kappa$.
Since most natural images are approximately sparse in Haar wavelet basis, the columns of $\mathbf{C} \in \mathbb{C}^{L \times N}$ are set as the $N$ most-significant columns in Haar wavelet matrix $\mathbf{W}$, where the $N$ most-significant columns denote the columns corresponding to the $N$ largest coefficients of Haar wavelet transform of the original image (the $N$ largest entries in $\mathbf{W}^T \mathbf{\hat{y}}$). However, the original image, or equivalently $\mathbf{\hat{y}}$, is unknown. Thus, we use $\mathbf{W}^T \mathbf{y}$ instead to form $\mathbf{C}$.

The number of measurements $L$ is $1024*1024 = 1048576$. The number of columns $N$ in $\mathbf{C}$ is chosen to be 20000. The number of nonzero entries in a blurring kernel is given later.

\paragraph{Motion kernel with known support:} The motion kernel\footnote{\whcomm{}{The blurring kernel is obtained by Matlab commands: ``\text{fspecial('motion', 50, 45)}''}.} and the corresponding blurred image are shown in (a) and (b) of Figure~\ref{BD:f5}. We assume that the support of the blurring kernel is known.

\begin{figure}[H]
\centering
\includegraphics[width=1\textwidth]{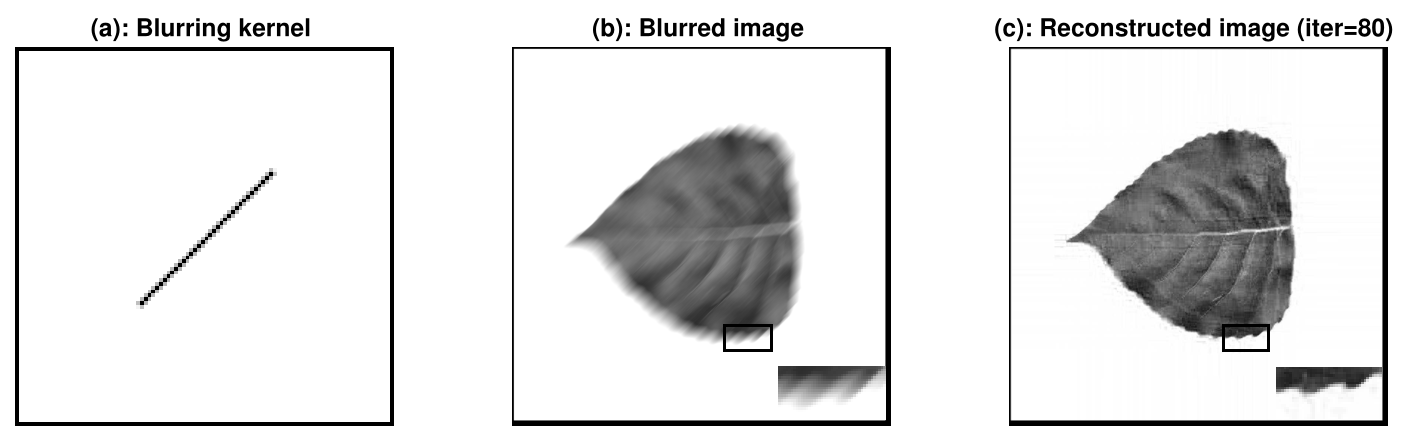}
\vspace{-2em}
\caption{Left: the blurring kernel; middle: the blurred image; right: the reconstructed image}\label{BD:f5}
\end{figure}

The number of nonzero entries in the blurring kernel is $K = 109$. The most-significant wavelet subspaces with $N = 20000$ are able to capture 97\% of the energy in the original image. Algorithm~\ref{BD:a2} stops when the number of iterations is $20, 40, 60, 80, 100, 120, 140$, $160$, or $2000$. The reconstructed image using 80 iterations is given in (c) of Figure~\ref{BD:f5}.  As shown in Table~\ref{BD:t3}, a higher accuracy does not improve the recovery performance in the sense that the relative error $relerr$, defined in Table~\ref{BD:t3}, does not necessarily decrease as the number of iterations increases. In the later experiments, the number of iterations is set to be 80.
\begin{table}[H]
\begin{center}
\scriptsize
\caption{The computational costs for multiple values of $N$ in Figure~\ref{BD:f5}. $nBh$, $nCm$, and $nFFT$ are defined in Table~\ref{BD:t1}. $t$ denotes the computational time in seconds.  $relres$ denotes $\|y - \diag(Bh m^* C^*)\|_2 / \|y\|_2$. $rellerr$ denotes $\left\|\mathbf{\hat{y}} - \frac{\|\mathbf{y}\|}{\|\mathbf{y}_f\|} \mathbf{y}_f\right\| / \|\mathbf{\hat{y}}\|$, where $\mathbf{y}_f$ denotes the vector obtained by reshaping a reconstructed image.}
\label{BD:t3}
\vspace{1em}
\begin{tabular}{c|ccccccccc}
    \hline
$num.$ $of$ $iter.$    & 20 & 40 & 60 & 80 & 100 & 120 & 140 & 160 & 2000 \\
\hline
$relres$ & $3.9_{-3}$ & $2.6_{-3}$ & $2.5_{-3}$ & $2.4_{-3}$ & $2.3_{-3}$ & $2.3_{-3}$ & $2.2_{-3}$ & $2.2_{-3}$ & $2.0_{-3}$ \\
\hline
$nBh/nCm$ & 43 & 85 & 125 & 168 & 210 & 253 & 294 & 335 & 4055\\
\hline
$nFFT$ & 107 & 211 & 311 & 417 & 521 & 627 & 729 & 831 & 10111\\
\hline
$relerr$ & $4.1_{-2}$ & $3.8_{-2}$ & $3.8_{-2}$ & $3.8_{-2}$ & $4.1_{-2}$ & $4.2_{-2}$ & $4.3_{-2}$ & $4.4_{-2}$ & $6.4_{-2}$ \\
\hline
\end{tabular}
\end{center}
\end{table}

\paragraph{Other kernels with known supports:} Two blurring kernels, their corresponding blurred images, and the reconstructed images are shown in Figures~\ref{BD:f6} and~\ref{BD:f7}. The blurring kernel in Figure~\ref{BD:f6} is from the function $\sin$ and has 153 nonzero entries. The covariance of the Gaussian kernel in Figure~\ref{BD:f7} is $V = \begin{bmatrix} 1 & 0.8 \\ 0.8 & 1 \end{bmatrix}$ and the number of nonzero entries is 181.

The number of iterations is 80 and their computational times are approximately 48 seconds. The relative errors $relerr$, defined in Table~\ref{BD:t3}, of the ``$\sin$'' kernel and the Gaussian kernel are $0.0398$ and $0.0890$, respectively. We can see that the Riemannian method is able to recover a reasonable image for a more complex kernel.

\begin{figure}[H]
\centering
\includegraphics[width=1\textwidth]{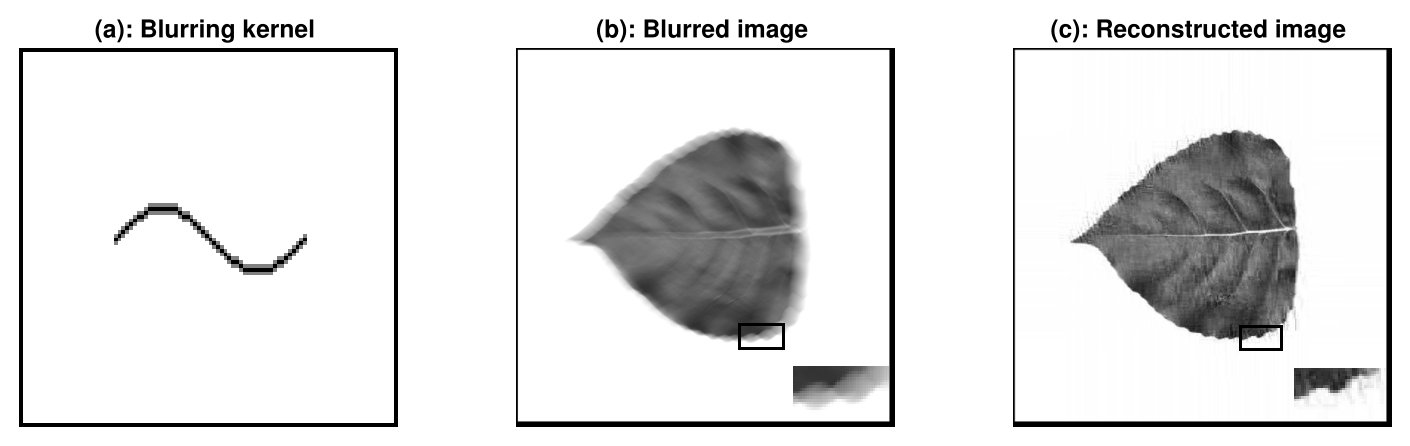}
\vspace{-2em}
\caption{Left: the blurring kernel; middle: the blurred image; right: the reconstructed image}\label{BD:f6}
\end{figure}

\begin{figure}[H]
\centering
\includegraphics[width=1\textwidth]{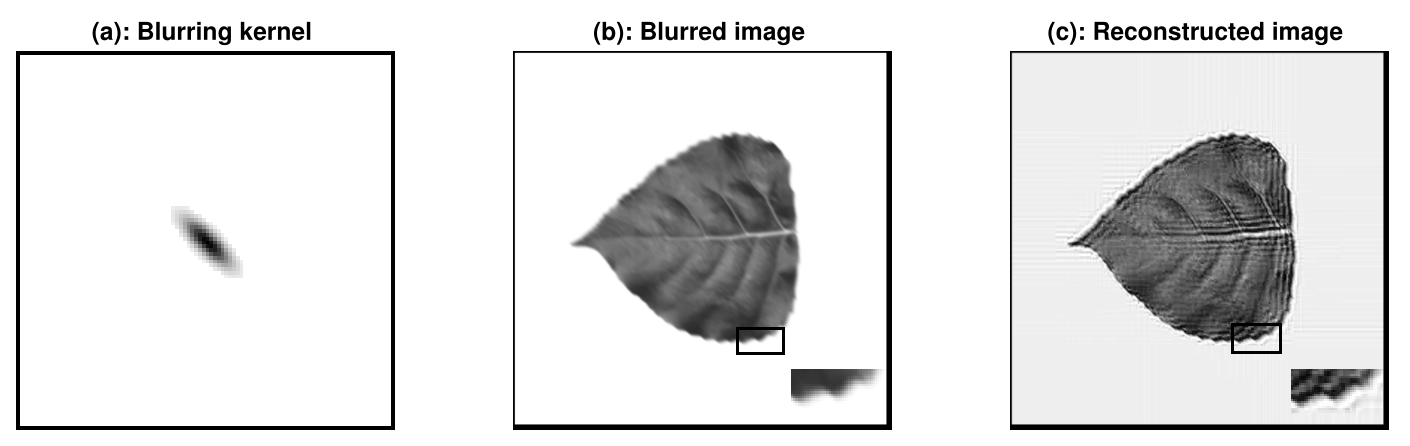}
\vspace{-2em}
\caption{Left: the blurring kernel; middle: the blurred image; right: the reconstructed image}\label{BD:f7}
\end{figure}

\paragraph{Motion kernel with unknown support:} The motion kernel in (a) of Figure~\ref{BD:f5} is used. In practice, the true support of the kernel is unknown and an estimation is usually not exact. Figure~\ref{BD:f8} shows the recovery performance of the Riemannian method when an inexact support is used. We use four inexact supports, which are obtained by enlarging the true support by 1, 2, 3, and 4 pixels. Unsurprisingly, the better the estimation of the support is, the better the reconstructed image is. The Riemannian method is able to recover the image reasonably in these tests.

\begin{figure}[H]
\centering
\includegraphics[width=1\textwidth]{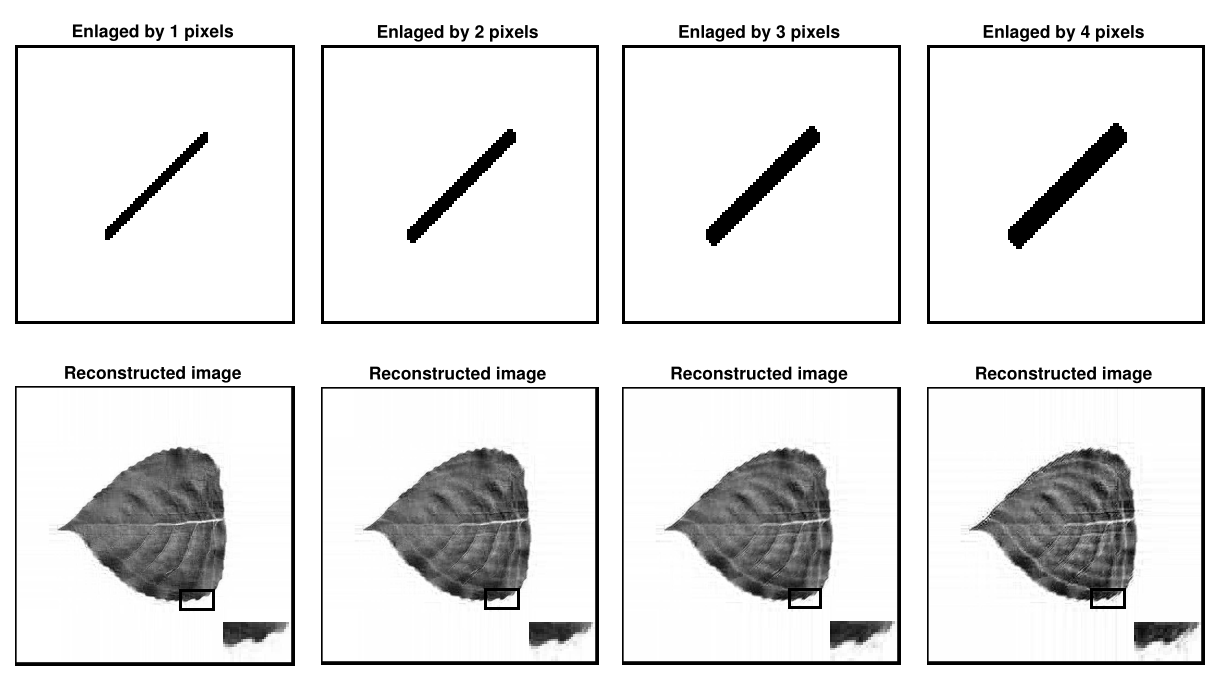}
\vspace{-2em}
\caption{Deconvolution results for unknown supports. Top: the estimated support of the blurring kernel. Bottom: the reconstructed images using the corresponding above supports. The relative error $relerr$ from left to right are $0.0440$, $0.0484$, $0.0522$, and $0.0673$, respectively.}\label{BD:f8}
\end{figure}
}
\section{Conclusion} \label{BD:s16}

In this paper, we proposed a Riemannian steepest descent method for blind deconvolution. By inspiring the proofs in~\cite{LLSW2016}, it is proven that the Riemannian method with an appropriate initialization has high probability to successfully recover the desired signal. Since a quotient manifold is considered, the penalty term that is used to control the norm of $h$ and $m$ is not necessary. The Hessian at the desired solution is proven to be well-conditioned with high probability and therefore the optimization is not difficult locally. Numerical experiments show that the Riemannian steepest descent method is robust to noise, is the most efficient algorithm, and has the empirical largest successful recovery probability among the alternating minimization algorithm and the algorithm in~\cite{LLSW2016}. 

%
%
%
%
%

\bibliographystyle{alpha}

\bibliography{WHlibrary}

\appendix

\section{Four Conditions and the Proof of Theorem~\ref{BD:th3}} \label{BD:s8}

The convergence analysis of the Riemannian steepest descent method follows the spirit of the analysis in~\cite{LLSW2016} and the analyses both rely on the four conditions: local RIP condition, robustness condition, local regularity condition and local smoothness condition. There exist differences and the main ones are highlighted as follows. The differences ease the proofs of the Riemannian method in general.
\begin{itemize}
\item The cost function $\tilde{f}$ does not include the penalty terms for the norm of $h$ and $m$. Therefore, the penalty terms are not considered in the convergence analysis.
\item Since any representation in $\pi^{-1}(\pi(h, m))$ can be used and this does not influence the sequence $\{\pi(h_k, m_k)\}$ of $\mathcal{Q}_1$ generated by Algorithm~\ref{BD:a1}, we can always assume $\|h_k\|_2 = \|m_k\|_2$ without loss of generality.
\item The Riemannian gradient is different from the Wirtinger derivative. For the cost function $F$ in~\eqref{BD:e10}, if $\|h\|_2 = \|m\|_2$, then by~\eqref{BD:e28} and~\eqref{BD:e27}, we have
\begin{equation*}
\grad F(h, m) = \frac{2}{\|h m^*\|_2} \nabla^w F(h, m).
\end{equation*}
\end{itemize}
Let $\pi(h_{\sharp}, m_{\sharp})$ denote the ground truth and $\Upsilon_{\tilde{f}}$ denote $\{\pi(h, m) \mid \tilde{f}(\pi(h, m)) \leq \frac{1}{3} \varepsilon^2 d_*^2 + \|e\|_2^2\}$.

\begin{cond}[Local RIP condition] \label{BD:c1}
The following local Restricted Isometry Property for $\mathcal{A}$ holds uniformly for all $(h, m) \in \Pi_{\varepsilon}$:
\begin{equation*}
\frac{3}{4} \|h m^* - h_{\sharp} m_{\sharp}^*\|_F^2 \leq \|\mathcal{A}(h m^* - h_{\sharp} m_{\sharp}^*)\|_2^2 \leq \frac{5}{4} \|h m^* - h_{\sharp} m_{\sharp}^*\|_F^2.
\end{equation*}
\end{cond}

\begin{cond} [Robustness condition] \label{BD:c2}
If $L \geq C_\gamma (\sigma^2 / \varepsilon^2 + \sigma / \varepsilon) \max(K, N) \log(L)$, then with high probability, it holds that
\begin{equation*}
\|\mathcal{A}^*(e)\|_2 \leq \frac{\varepsilon d_*}{10 \sqrt{2}}.
\end{equation*}
\end{cond}

\begin{cond} [Local regularity condition] \label{BD:c3}
There exists a regularity constant $a_0 > 0$ such that
\begin{equation*}
\|\grad \tilde{f}(\pi(h, m))\|_{g_{\pi(h, m)}}^2 \geq a_0 [\tilde{f}(\pi(h, m)) - c]_+
\end{equation*}
for all $\pi(h, m) \in \Omega_{\mu} \cap \Pi_{\varepsilon}$, where $c = \|e\|_2^2 + 1700 \|\mathcal{A}^*(e)\|_2^2$ and $a_0 = 1 / 1500$.
\end{cond}

\begin{cond} [Local smoothness condition] \label{BD:c4}
Define the lifting function $\hat{f}_{\pi(h, m)}: \T_{\pi(h, m)} \mathcal{Q}_1 \rightarrow \mathbb{R}: \eta_{\pi(h, m)} = \tilde{f} \circ \tilde{R}_{\pi(h, m)} (\eta_{\pi(h, m)})$. There exists a constant $a_L$ such that
\begin{equation*}
\|\grad \hat{f}_{\pi(h, m)}(t \eta_{\pi(h, m)}) - \grad \hat{f}_{\pi(h, m)}(0)\|_{g_{(h, m)}} \leq a_L t \|\eta_{\pi(h, m)}\|_{g_{(h, m)}}, \;\;\; \forall 0 \leq t \leq 1,
\end{equation*}
for all $\pi(h, m)$ and $\eta_{\pi(h, m)} \in \T_{\pi(h, m)} \mathcal{Q}_1$ such that $\tilde{R}_{\pi(h, m)}(t \eta_{\pi(h, m)}) \in \Upsilon_{\tilde{f}} \cap \Pi_\varepsilon, \forall 0 \leq t  \leq 1$, where $\tilde{R}$ is defined in~\eqref{BD:e34}.
\end{cond}
Conditions~\ref{BD:c1} and~\ref{BD:c2} are the same as~\cite[Conditions~5.1 and~5.2]{LLSW2016} and have been proven therein. Conditions~\ref{BD:c3} and~\ref{BD:c4} are different from\cite[Conditions~5.3 and~5.4]{LLSW2016} since we use different penalty term and different gradient and norm. The proofs of Conditions~\ref{BD:c3} and~\ref{BD:c4} are given in Section~\ref{BD:s7}.

Lemma~\ref{BD:le10} generalizes~\cite[Lemma 6.1]{LLSW2016} and is used in Lemma~\ref{BD:le9}.
\begin{lemm} [Riemannian descent lemma] \label{BD:le10}
Suppose Condition~\ref{BD:c4} holds. Then
\begin{equation*}
\hat{f}_{\pi(h, m)}(\eta_{\pi(h, m)}) \leq \hat{f} (0_{\pi(h, m)}) + g_{\pi(h, m)} (\eta_{\pi(h, m)}, \grad \tilde{f} (\pi(h, m))) + a_L \|\eta_{\pi(h, m)}\|_g^2,
\end{equation*}
where $0_{\pi(h, m)}$ denotes the origin of $\T_{\pi(h, m)} \mathcal{Q}_1$.
\end{lemm}
\begin{proof}
Define $\beta(t) = \hat{f}_{\pi(h, m)}(t \eta_{\pi(h, m)})$. We have
$\frac{d}{d t} \beta(t) = g \left(\grad \hat{f}_{\pi(h, m)}(t \eta_{\pi(h, m)}), \eta_{\pi(h, m)}\right).$
It follows from the Fundamental Theorem of Calculus that
\begin{align*}
\hat{f}_{\pi(h, m)}&(\eta_{\pi(h, m)}) - \hat{f} (0) = \int_0^1 \frac{d}{d t} \beta(t) d t = \int_0^1 g \left(\grad \hat{f}_{\pi(h, m)}(t \eta_{\pi(h, m)}), \eta_{\pi(h, m)}\right) d t \\
= & g \left(\grad \hat{f}_{\pi(h, m)}(0_{\pi(h, m)}), \eta_{\pi(h, m)}\right) + \int_0^1 g \left(\grad \hat{f}_{\pi(h, m)}(t \eta_{\pi(h, m)}) - \grad \hat{f}_{\pi(h, m)}(0_{\pi(h, m)}), \eta_{\pi(h, m)}\right) d t \\
\leq & g \left(\grad \hat{f}_{\pi(h, m)}(0_{\pi(h, m)}), \eta_{\pi(h, m)}\right) + a_L \|\eta_{\pi(h, m)}\|_g^2.
\end{align*}
Therefore, the result holds since $\grad \hat{f}_{\pi(h, m)}(0_{\pi(h, m)}) = \grad \tilde{f}(\pi(h, m))$, see~\cite[(4.4)]{AMS2008}.
\end{proof}

Lemma~\ref{BD:le9} is used in the proof of Theorem~\ref{BD:th3}. Note that this lemma follows from~\cite[Section 5.1]{LLSW2016}.
\begin{lemm} \label{BD:le9}
The following properties hold under some of the four conditions.
\begin{enumerate}
\item \label{BD:p1} Under conditions~\ref{BD:c1} and~\ref{BD:c2}, function $f$ in~\eqref{BD:e14} satisfies
\begin{equation} \label{BD:e15}
\|e\|_2^2 + \frac{3}{4} \Delta^2 - \frac{\varepsilon \Delta d_*}{5} \leq f(\pi(h, m)) \leq \|e\|_2^2 + \frac{5}{4} \Delta^2 + \frac{\varepsilon \Delta d_*}{5}
\end{equation}
for all $\pi(h, m) \in \Omega_{\mu} \cap \Pi_{\varepsilon}$, where $\Delta = \|h m^* - h_{\sharp} m_{\sharp}^*\|_F$.
\item \label{BD:p2}  It holds that $\Upsilon_{\tilde{f}} \subset \Omega_\mu$; under conditions~\ref{BD:c1} and~\ref{BD:c2}, we have $\Upsilon_{\tilde{f}} \cap \Pi_\varepsilon \subset \Pi_{\frac{9}{10} \varepsilon}$.
\item \label{BD:p3} Under conditions~\ref{BD:c1} and~\ref{BD:c2}, if $\pi(h_1, m_1) \in \Pi_\varepsilon$ and $\pi\left((1 - \lambda) h_1 + \lambda h_2, (1 - \lambda) m_1 + \lambda m_2\right) \in \Upsilon_{\tilde{f}}$ for all $\lambda \in [0, 1]$, then $\pi(h_2, m_2) \in \Pi_{\varepsilon}$.
\item \label{BD:p4}  Under conditions~\ref{BD:c1}, \ref{BD:c2} and~\ref{BD:c4}, suppose the step size $\alpha \leq \frac{1}{2 a_L}$, where $a_L$ is defined in Condition~\ref{BD:c4} and $\pi(h_k, m_k) \in \Pi_{\varepsilon} \cap \Upsilon_{\tilde{f}}$, then it holds that $\pi(h_{k + 1}, m_{k + 1}) \in \Pi_{\varepsilon} \cap \Upsilon_{\tilde{f}}$ and
    \begin{equation*}
    \tilde{f}(\pi(h_{k+1}, m_{k+1})) \leq \tilde{f}(\pi(h_k, m_k)) - \frac{\alpha}{2} \|\grad \tilde{f}(\pi(h_k, m_k))\|_{g}^2.
    \end{equation*}
\end{enumerate}
\end{lemm}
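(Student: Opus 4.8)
All four claims follow the pattern of \cite[Section~5.1]{LLSW2016}, transplanted to the quotient manifold; the simplifications are that $\tilde f$ carries only one penalty term and that one always works with the representative satisfying $\|h\|_2=\|m\|_2$, so that horizontal projections drop out (cf.~\eqref{BD:e28}). For Part~\ref{BD:p1} I would set $\delta=hm^*-h_{\sharp}m_{\sharp}^*$ and, using $y=\mathcal{A}(h_{\sharp}m_{\sharp}^*)+e$, expand $f(\pi(h,m))=\|\mathcal{A}(\delta)-e\|_2^2=\|\mathcal{A}(\delta)\|_2^2-2\,\mathrm{Re}\,\trace(\delta^*\mathcal{A}^*(e))+\|e\|_2^2$ via the adjoint identity $\langle\mathcal{A}(\delta),e\rangle=\langle\delta,\mathcal{A}^*(e)\rangle$. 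Condition~\ref{BD:c1} squeezes $\|\mathcal{A}(\delta)\|_2^2$ between $\tfrac34\Delta^2$ and $\tfrac54\Delta^2$; because $\delta$ has rank at most two, $|\trace(\delta^*\mathcal{A}^*(e))|\le\sqrt2\,\Delta\,\|\mathcal{A}^*(e)\|_2$, and Condition~\ref{BD:c2} caps the latter at $\tfrac{\varepsilon\Delta d_*}{10}$, so doubling produces exactly the $\pm\tfrac{\varepsilon\Delta d_*}{5}$ slack of~\eqref{BD:e15}.

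For Part~\ref{BD:p2}, I would prove $\Upsilon_{\tilde f}\subset\Omega_\mu$ from the parameter choices alone: if $\tilde f(\pi(h,m))\le\tfrac13\varepsilon^2 d_*^2+\|e\|_2^2$ then, since the data-fit term of $\tilde f$ is nonnegative, each penalty summand satisfies $\rho\,G_0\!\big(\tfrac{L|b_i^*h|^2\|m\|_2^2}{8d^2\mu^2}\big)\le\tfrac13\varepsilon^2 d_*^2+\|e\|_2^2$; dividing by $\rho\ge d^2+2.5\|e\|_2^2\ge\tfrac{81}{100}d_*^2+2.5\|e\|_2^2$ and using $\varepsilon\le\tfrac1{15}$ bounds the right side by $\tfrac25$, hence $\tfrac{L|b_i^*h|^2\|m\|_2^2}{8d^2\mu^2}\le1+\sqrt{2/5}$, and combined with $d\le\tfrac{11}{10}d_*$ this rearranges to $\sqrt L\,|b_i^*h|\,\|m\|_2<4d_*\mu$ for each $i$, i.e. $\pi(h,m)\in\Omega_\mu$. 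For $\Upsilon_{\tilde f}\cap\Pi_\varepsilon\subset\Pi_{9\varepsilon/10}$ I would combine Part~\ref{BD:p1} (valid on $\Omega_\mu\cap\Pi_\varepsilon$) with $G\ge0$ to get $\tfrac34\Delta^2-\tfrac{\varepsilon\Delta d_*}{5}\le f(\pi(h,m))-\|e\|_2^2\le\tilde f(\pi(h,m))-\|e\|_2^2\le\tfrac13\varepsilon^2 d_*^2$, and then solve the quadratic inequality $\tfrac34\Delta^2-\tfrac{\varepsilon d_*}{5}\Delta-\tfrac13\varepsilon^2 d_*^2\le0$ to obtain $\Delta\le\tfrac{2}{15}(1+\sqrt{26})\,\varepsilon d_*<\tfrac{9}{10}\varepsilon d_*$.

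For Part~\ref{BD:p3} I would argue by connectedness. The map $\lambda\mapsto\Delta_\lambda:=\big\|\big((1-\lambda)h_1+\lambda h_2\big)\big((1-\lambda)m_1+\lambda m_2\big)^*-h_{\sharp}m_{\sharp}^*\big\|_F$ is continuous on $[0,1]$; the endpoint $\lambda=0$ lies in $\Upsilon_{\tilde f}\cap\Pi_\varepsilon$ (by hypothesis and by $\pi(h_1,m_1)\in\Pi_\varepsilon$), hence in $\Pi_{9\varepsilon/10}$ by Part~\ref{BD:p2}, so $\Delta_0\le\tfrac{9}{10}\varepsilon d_*$. If $\Delta_1>\varepsilon d_*$, the intermediate value theorem produces a first $\lambda_0\in(0,1)$ with $\Delta_{\lambda_0}=\varepsilon d_*$; that point lies in $\Upsilon_{\tilde f}$ (hypothesis) and in $\Pi_\varepsilon$, so Part~\ref{BD:p2} again forces $\Delta_{\lambda_0}\le\tfrac{9}{10}\varepsilon d_*$, a contradiction, whence $\Delta_1\le\varepsilon d_*$.

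For Part~\ref{BD:p4} I would exploit that, by~\eqref{BD:e1} and~\eqref{BD:e34}, the retraction path $t\mapsto\tilde R_{\pi(h_k,m_k)}(t\eta_k)$ with $\eta_k:=-\alpha\,\grad\tilde f(\pi(h_k,m_k))$ lifts to a straight line segment in $\mathcal{N}_1$, so Parts~\ref{BD:p2}--\ref{BD:p3} apply to it, and then run a ``no-escape'' bootstrap. If $\grad\tilde f(\pi(h_k,m_k))=0$ the statement is trivial; otherwise let $t^*$ be the largest $t\in[0,1]$ for which the whole sub-path $[0,t]$ stays in $\Upsilon_{\tilde f}\cap\Pi_\varepsilon$ (a closed set, so $t^*$ is attained). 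Since $\pi(h_k,m_k)$ is interior to $\Pi_\varepsilon$ by Part~\ref{BD:p2} and $\tfrac{d}{dt}\hat f_{\pi(h_k,m_k)}(t\eta_k)\big|_{t=0}=-\alpha\|\grad\tilde f(\pi(h_k,m_k))\|_g^2<0$, one has $t^*>0$; on $[0,t^*]$ Condition~\ref{BD:c4} holds along the path, so applying Lemma~\ref{BD:le10} with $\eta=\tau\eta_k$ gives, for every $\tau\le t^*$, $\hat f_{\pi(h_k,m_k)}(\tau\eta_k)\le\tilde f(\pi(h_k,m_k))-\tau\alpha(1-a_L\tau\alpha)\|\grad\tilde f(\pi(h_k,m_k))\|_g^2$, which by $\alpha\le\tfrac1{2a_L}$ and $\tau\le1$ is strictly below $\tilde f(\pi(h_k,m_k))\le\tfrac13\varepsilon^2 d_*^2+\|e\|_2^2$. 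Hence $\tilde R(t^*\eta_k)$ is interior to $\Upsilon_{\tilde f}$, and being in $\Upsilon_{\tilde f}\cap\Pi_\varepsilon$ it lies in $\Pi_{9\varepsilon/10}$ by Part~\ref{BD:p2}, i.e. interior to $\Pi_\varepsilon$; by continuity the sub-path then extends strictly past $t^*$, forcing $t^*=1$, and evaluating the inequality at $\tau=1$ yields both $\pi(h_{k+1},m_{k+1})\in\Pi_\varepsilon\cap\Upsilon_{\tilde f}$ and the sufficient-decrease bound $\tilde f(\pi(h_{k+1},m_{k+1}))\le\tilde f(\pi(h_k,m_k))-\tfrac{\alpha}{2}\|\grad\tilde f(\pi(h_k,m_k))\|_g^2$. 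I expect this bootstrap to be the main obstacle: Condition~\ref{BD:c4} supplies smoothness only while the iterate and its entire retraction segment remain in $\Upsilon_{\tilde f}\cap\Pi_\varepsilon$, so invariance of this region and the descent estimate cannot be proved one after the other but must be interlocked, with Part~\ref{BD:p2} providing the strict containment $\Upsilon_{\tilde f}\cap\Pi_\varepsilon\subset\Pi_{9\varepsilon/10}$ that makes the continuation step possible; Parts~\ref{BD:p1}--\ref{BD:p3} are otherwise routine given Conditions~\ref{BD:c1}--\ref{BD:c2}.
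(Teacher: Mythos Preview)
Your proposal is correct and follows essentially the same route as the paper, which itself defers to \cite[Section~5.1]{LLSW2016}. Parts~\ref{BD:p1}--\ref{BD:p3} match the paper's arguments almost line for line (the paper handles the first half of Part~\ref{BD:p2} by contrapositive, deriving $\tilde f(\pi(h,m))\ge\rho\,G_0(2d_*^2/d^2)\ge\tfrac13\varepsilon^2d_*^2+\|e\|_2^2$ from $\pi(h,m)\notin\Omega_\mu$, whereas you bound each penalty summand directly; the numerics coincide). For Part~\ref{BD:p4} the paper simply invokes Lemma~\ref{BD:le10} for all $\lambda\in[0,\alpha]$ and then applies Part~\ref{BD:p3}, without spelling out why Condition~\ref{BD:c4} is available along the full segment before that segment has been shown to remain in $\Upsilon_{\tilde f}\cap\Pi_\varepsilon$; your explicit ``no-escape'' bootstrap via the maximal $t^*$ makes rigorous exactly the step the paper elides, and is a welcome clarification rather than a different argument.
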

\begin{proof}
\eqref{BD:p1}: This has been proven in~\cite[(5.5), (5.6)]{LLSW2016}.

\eqref{BD:p2}: If $\pi(h, m) \notin \Omega_{\mu}$, then $G(\pi(h, m)) \geq \rho G_0(\frac{2d_*^2}{d^2})$. It follows that
\begin{align*}
\tilde{f}(\pi(h, m)) \geq& \rho G_0(\frac{2d_*^2}{d^2}) \geq (d^2 + 2.5 \|e\|_2^2) \left(\frac{2 d_*^2}{d^2} - 1\right)^2 \\
\geq& d^2 \left(\frac{2 d_*^2}{d^2} - 1\right)^2 + 2.5 \|e\|_2^2 \left(\frac{2 d_*^2}{d^2} - 1\right)^2 \geq \frac{1}{3} \varepsilon^2 d_*^2 + \|e\|_2^2,
\end{align*}
where $\rho \geq d^2 + 2.5 \|e\|_2^2$, $\varepsilon < 1/15$, and $0.9 d_* \leq d \leq 1.1 d_*$. Therefore, $\pi(h, m) \notin \Upsilon_{\tilde{f}}$ and $\Upsilon_{\tilde{f}} \subset \Omega_{\mu}$.

For any $\pi(h, m) \in \Upsilon_{\tilde{f}} \cap \Pi_\varepsilon$, it holds that $\pi(h, m) \in \Omega_{\mu} \cap \Pi_\varepsilon$. It follows from~\eqref{BD:e15} that $\|e\|_2^2 + \frac{3}{4} \Delta^2 - \frac{\varepsilon \Delta d_*}{5} \leq f(\pi(h, m)) \leq \tilde{f}(\pi(h, m)) \leq \frac{1}{3} \varepsilon^2 d_*^2 + \|e\|_2^2$, which can be simplified into $45 \Delta^2 - 12 \varepsilon d_* \Delta - 20 \varepsilon^2 d_*^2 \leq 0$. This implies $\Delta \leq 0.9 \varepsilon d_*$ and hence $\Upsilon_{\tilde{f}} \cap \Pi_\varepsilon \subset \Pi_{\frac{9}{10} \varepsilon}$.

\eqref{BD:p3}: Prove by contradiction. If $\pi(h_2, m_2) \notin \Pi_\varepsilon$, then there exists
$\lambda_0$ such that
$$
\|\left((1 - \lambda) h_1 + \lambda h_2\right) \left((1 - \lambda) m_1 + \lambda m_2\right)^* - h_1 m_1^*\| = \varepsilon d_*,
$$
i.e., $\pi\left( (1 - \lambda) h_1 + \lambda h_2, (1 - \lambda) m_1 + \lambda m_2 \right) \in \partial \Pi_{\varepsilon}$. Therefore, it follows from
$$
\pi\left( (1 - \lambda) h_1 + \lambda h_2, (1 - \lambda) m_1 + \lambda m_2 \right) \in \Upsilon_{\tilde{f}}
$$ that $\pi\left( (1 - \lambda) h_1 + \lambda h_2, (1 - \lambda) m_1 + \lambda m_2 \right) \in \Pi_{0.9 \varepsilon}$. This is a contradiction.

\eqref{BD:p4}: If $\grad \tilde{f}(\pi(h_k, m_k)) = 0$, then $\pi(h_{k+1}, m_{k+1}) = \pi(h_k, m_k) \in \Pi_{\varepsilon} \cap \Upsilon_{\tilde{f}}$. Suppose $\grad \tilde{f}(\pi(h_k, m_k)) \neq 0$. Define the function
\begin{equation*}
\phi(\lambda) = \hat{f}_{\pi(h_k, m_k)}(- \lambda \grad \tilde{f}(\pi(h_k, m_k))).
\end{equation*}
It follows that $\phi'(0) = - \|\grad \tilde{f}(\pi(h_k, m_k))\|_g^2 < 0$. By Lemma~\ref{BD:le10}, we have
\begin{align}
\phi(\lambda) \leq& \phi(0) - \lambda \|\grad \tilde{f}(\pi(h_k, m_k))\|_g^2 + a_L \lambda^2 \|\grad \tilde{f}(\pi(h_k, m_k))\|_g^2 \\
=& \phi(0) + (a_L \lambda^2 - \lambda) \|\grad \tilde{f}(\pi(h_k, m_k))\|_g^2 \leq \phi(0) \label{BD:e26}
\end{align}
for all $\lambda \in [0, \alpha]$. Therefore, by~\eqref{BD:p3} in Lemma~\ref{BD:le9}, we have $\pi(h_{k+1}, m_{k+1}) \in \Pi_{\varepsilon} \cap \Upsilon_{\tilde{f}}$. Using~\eqref{BD:e26} yields
\begin{equation*}
\tilde{f}(\pi(h_{k+1}, m_{k+1})) \leq \tilde{f}(\pi(h_k, m_k)) - \frac{\alpha}{2} \|\grad \tilde{f}(\pi(h_k, m_k))\|_{g}^2.
\end{equation*}
\end{proof}

Now, we are ready to prove Theorem~\ref{BD:th3}. With the four conditions, the proof follows from the proof in~\cite{LLSW2016} and we give here for completeness.
\begin{proof}
Since $\pi(h_0, m_0) \in \Omega_{\frac{1}{2}\mu} \cap \Pi_{\frac{2}{5} \varepsilon}$, we have
$$
\frac{L |b_i^* h_0|^2 \|m_0\|_2^2}{8 d^2 \mu^2} \leq \frac{L}{8 d^2 \mu^2} \frac{4 d_*^2 \mu^2}{L} \leq \frac{d_*^2}{2 d^2} < 1,
$$
where $\sqrt{L} \|B h_0\|_{\infty} \|m\|_2 \leq 2 d_* \mu$. Therefore, the penalty term $G(\pi(h_0, m_0)) = 0$. Combining~\eqref{BD:e15} and~$\Delta \leq \frac{2}{5} \varepsilon d_*$ yields
\begin{equation*}
\tilde{f}(\pi(h, m)) = f(\pi(h, m)) \leq \|e\|_2^2 + \frac{5}{4} \Delta^2 + \frac{\varepsilon \Delta d_*}{5} \leq \frac{1}{3} \varepsilon^2 d_*^2 + \|e\|_2^2,
\end{equation*}
which implies $\pi(h_0, m_0) \in \Upsilon_{\tilde{f}}$. It follows from~\eqref{BD:p4} in Lemma~\ref{BD:le9} and Condition~\ref{BD:c3} that
\begin{equation*}
\tilde{f}(\pi(h_{k + 1}, m_{k + 1})) \leq \tilde{f}(\pi(h_{k}, m_{k})) - \frac{\alpha a_0}{2} \left[ \tilde{f}(\pi(h_{k}, m_{k})) - c \right]_+
\end{equation*}
and $\pi(h_k, m_k) \in \Omega_\mu \cap \Pi_\varepsilon$ for all $k$. Therefore, we have
\begin{equation*}
\tilde{f}(\pi(h_{k + 1}, m_{k + 1})) - c \leq \left(1 - \frac{\alpha a_0}{2}\right) \left[ \tilde{f}(\pi(h_{k}, m_{k})) - c \right]_+
\end{equation*}
which implies
\begin{equation*}
\left[\tilde{f}(\pi(h_{k + 1}, m_{k + 1})) - c\right]_+ \leq \left(1 - \frac{\alpha a_0}{2}\right) \left[ \tilde{f}(\pi(h_{k}, m_{k})) - c \right]_+.
\end{equation*}
Thus, we obtain
\begin{equation*}
\left[\tilde{f}(\pi(h_{k}, m_{k})) - c\right]_+ \leq \left(1 - \frac{\alpha a_0}{2}\right)^{k} \left[\tilde{f}(\pi(h_{0}, m_{0})) - c\right]_+ \leq \frac{1}{3} \left(1 - \frac{\alpha a_0}{2}\right)^{k} \varepsilon^2 d_*^2,
\end{equation*}
where we use $\tilde{f}(\pi(h_0, m_0)) \leq \frac{1}{3} \varepsilon^2 d_*^2 + \|e\|_2^2$ and $c = \|e\|_2^2 + 1700 \|\mathcal{A}^*(e)\|_2^2 \geq \|e\|_2^2$. We also have
\begin{align*}
\tilde{f}(\pi(h_k, m_k)) - \|e\|_2^2 \geq& \|\mathcal{A}(h_k m_k^* - h_{\sharp} m_{\sharp}^*)\|_2^2 - 2 \mathrm{Re}(\inner[2]{\mathcal{A}^*(e)}{h_k m_k^* - h_{\sharp} m_{\sharp}^*}) \\
\geq& \frac{3}{4} \|h_k m_k^* - h_{\sharp} m_{\sharp}^*\|_F^2 - 2 \sqrt{2} \|\mathcal{A}^*(e)\|_2 \|h_k m_k^* - h_{\sharp} m_{\sharp}^*\|_F.
\end{align*}
It follows that
\begin{equation*}
\frac{1}{3} \left(1 - \frac{\alpha a_0}{2}\right)^{k} \varepsilon^2 d_*^2 \geq \left[\tilde{f}(\pi(h_{k}, m_{k})) - c\right]_+ \geq \frac{3}{4} \|h_k m_k^* - h_{\sharp} m_{\sharp}^*\|_F^2 - 2 \sqrt{2} \|\mathcal{A}^*(e)\|_2 \|h_k m_k^* - h_{\sharp} m_{\sharp}^*\|_F - 1700 \|\mathcal{A}^*(e)\|_2^2,
\end{equation*}
which is equivalent to
\begin{equation*}
\left| \|h_k m_k^* - h_{\sharp} m_{\sharp}^*\|_F - \frac{4 \sqrt{2}}{3} \|\mathcal{A}^*(e)\|_2 \right|^2 \leq \frac{4}{9} \left(1 - \frac{\alpha a_0}{2}\right)^t \varepsilon^2 d_*^2 + (\frac{6800}{3} + \frac{32}{9}) \|\mathcal{A}^*(e)\|_2^2.
\end{equation*}
Solving for $\|h_k m_k^T - h_{\sharp} m_{\sharp}^T\|_F$ yields
\begin{equation*}
\|h_k m_k^T - h_{\sharp} m_{\sharp}^T\|_F \leq \frac{2}{3} \left(1 - \frac{\alpha a_0}{2}\right)^{k/2} \varepsilon d_* + 50 \|\mathcal{A}^*(e)\|_2.
\end{equation*}
The upper bound for $\|\mathcal{A}^*(e)\|_2$ has been proven in~\cite{LLSW2016}.
\end{proof}

\section{Proofs of Conditions~\ref{BD:c3} and~\ref{BD:c4} and Theorem~\ref{BD:th2}} \label{BD:s7}

Define function $\tilde{f}_{\mathrm{T}} = \tilde{f} \circ \pi$, $f_{\mathrm{T}} = f \circ \pi$, and $G_{\mathrm{T}} = G \circ \pi$, which is defined in the total space $\mathbb{R}_*^K \times \mathbb{R}_*^N$. Since the function value $f_{\mathrm{T}}(h, m)$ and $\|\grad f_{\mathrm{T}}(h, m)\|_g$ are independent of representation in $\pi^{-1}(\pi(h, m))$, we can always choose $h$ and $m$ such that $\|h\|_2 = \|m\|_2$. For all $\pi(h, m) \in \Pi_\varepsilon$, we have $\|h\|_2 \|m\|_2 = \|h m^*\|_F \leq \|h m^* - h_{\sharp} m_{\sharp}^*\|_F + \|h_{\sharp} m_{\sharp}^*\|_F \leq (1 + \varepsilon) d_*$ and $\|h\|_2 \|m\|_2 = \|h m^*\|_F \geq \|h_{\sharp} m_{\sharp}^*\|_F  - \|h m^* - h_{\sharp} m_{\sharp}^*\|_F \geq (1 - \varepsilon) d_*$. It follows that
\begin{equation} \label{BD:e16}
(h, m) \in \Omega_d := \left\{(h, m) \mid \sqrt{\frac{14}{15} d_*} \leq \|h\|_2 = \|m\|_2 \leq \sqrt{\frac{16}{15} d_*}\right\},
\end{equation}
for all $\pi(h, m) \in \Pi_\varepsilon$ and $\varepsilon \leq 1/15$.
There are unique orthogonal decompositions
\begin{equation*}
h = \tau_1 h_{\sharp} + \tilde{h} \hbox{ and } m = \tau_2 m_{\sharp} + \tilde{m},
\end{equation*}
where $\tilde{h}^* h_{\sharp} = 0$ and $\tilde{m}^* m_{\sharp} = 0$. Let
\begin{equation*}
\hat{h} = h - \tau h_{\sharp} \hbox{ and } \hat{m} = m - \bar{\tau}^{-1} m_{\sharp},
\end{equation*}
where $\tau = \frac{1}{(1 - \frac{\Delta}{10 d_*}) \bar{\tau}_2}$.

\begin{lemm} \label{BD:le11}
For all $(h, m)$ such that $(h, m) \in \Omega_d$ and $\pi(h, m) \in \Pi_\varepsilon$ with $\varepsilon \leq 1 / 15$, it holds that $\|\hat{h}\|_2^2 \leq 6.1 \Delta^2 / d_*$, $\|\hat{m}\|_2^2 \leq 6.1 \Delta^2 / d_*$, and $\|\hat{h}\|_2^2 \|\hat{m}\|_2^2 \leq 8.4 \Delta^4 / d_*^2$. Moreover, if we assume $\pi(h, m) \in \Omega_\mu$ additionally, we have $\sqrt{L} \|B (\hat{h})\|_\infty \leq 6 \mu \sqrt{d_*}$.
\end{lemm}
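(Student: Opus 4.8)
The plan is to run the orthogonal‑decomposition argument of \cite[Section~5.1]{LLSW2016}, adapted to the quotient normalization. First I would fix the representative of the ground truth with $\|h_{\sharp}\|_2 = \|m_{\sharp}\|_2 = \sqrt{d_*}$; this is harmless, since one checks directly from the definitions of $\tau_1,\tau_2,\tau$ that $\hat h$ and $\hat m$ are invariant under the rescaling $h_{\sharp}\mapsto h_{\sharp}/p$, $m_{\sharp}\mapsto m_{\sharp}p^*$. Using the orthogonal splittings $h=\tau_1h_{\sharp}+\tilde h$, $m=\tau_2m_{\sharp}+\tilde m$ with $h_{\sharp}^*\tilde h=m_{\sharp}^*\tilde m=0$, expand
\[
hm^*-h_{\sharp}m_{\sharp}^* \;=\; (\tau_1\bar\tau_2-1)\,h_{\sharp}m_{\sharp}^* \;+\; \tau_1\,h_{\sharp}\tilde m^* \;+\; \bar\tau_2\,\tilde h m_{\sharp}^* \;+\; \tilde h\tilde m^* .
\]
A short trace computation shows these four summands are mutually orthogonal in the Frobenius inner product (every cross term carries a factor $h_{\sharp}^*\tilde h$ or $m_{\sharp}^*\tilde m$), so
\[
\Delta^2 \;=\; |\tau_1\bar\tau_2-1|^2 d_*^2 \;+\; |\tau_1|^2 d_*\|\tilde m\|_2^2 \;+\; |\tau_2|^2 d_*\|\tilde h\|_2^2 \;+\; \|\tilde h\|_2^2\|\tilde m\|_2^2 .
\]
In particular $|\tau_1\bar\tau_2-1|\le\Delta/d_*\le\varepsilon$, $\ |\tau_2|\sqrt{d_*}\,\|\tilde h\|_2\le\Delta$, and $|\tau_1|\sqrt{d_*}\,\|\tilde m\|_2\le\Delta$.

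Second, I would pin down $|\tau_1|$ and $|\tau_2|$. From $(h,m)\in\Omega_d$ one has $|\tau_1|^2 d_*\le\|h\|_2^2$ and $|\tau_2|^2 d_*\le\|m\|_2^2$, both $\le\tfrac{16}{15}d_*$, hence $|\tau_i|\le 4/\sqrt{15}$; combined with $|\tau_1||\tau_2|\ge 1-\varepsilon\ge\tfrac{14}{15}$ this yields $|\tau_i|\ge\tfrac{7}{2\sqrt{15}}\approx 0.90$, i.e.\ $|\tau_i|^2\ge\tfrac{49}{60}$. Feeding the lower bound on $|\tau_2|$ (resp.\ $|\tau_1|$) into the displayed identity gives $\|\tilde h\|_2^2,\|\tilde m\|_2^2 \le \tfrac{60}{49}\,\Delta^2/d_*$. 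For the components along $h_{\sharp}$ and $m_{\sharp}$ I would use the explicit form of $\tau$: since $\tau\bar\tau_2 = 1/(1-\Delta/(10d_*))$ is real and positive, $\bar\tau^{-1}=(1-\Delta/(10d_*))\tau_2$, so
\[
\hat m \;=\; \tfrac{\Delta}{10d_*}\,\tau_2\, m_{\sharp} \;+\; \tilde m , \qquad \hat h \;=\; (\tau_1-\tau)\,h_{\sharp} \;+\; \tilde h ,
\]
whence $\|\hat m\|_2^2 = \tfrac{\Delta^2|\tau_2|^2}{100 d_*}+\|\tilde m\|_2^2$ and $\|\hat h\|_2^2 = |\tau_1-\tau|^2 d_*+\|\tilde h\|_2^2$. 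Here $|\tau_1-\tau| = |\tau_1\bar\tau_2 - 1/(1-\Delta/(10d_*))|/|\tau_2|$; bounding the numerator by $|\tau_1\bar\tau_2-1| + \Delta/(10d_*-\Delta) \le \tfrac{164}{149}\,\Delta/d_*$ (using $\Delta\le d_*/15$) and the denominator below by $\tfrac{7}{2\sqrt{15}}$ gives $|\tau_1-\tau| < 1.22\,\Delta/d_*$. Assembling: $\|\hat h\|_2^2 \le (1.49+1.23)\,\Delta^2/d_* < 6.1\,\Delta^2/d_*$ and $\|\hat m\|_2^2 \le (0.02+1.23)\,\Delta^2/d_* < 6.1\,\Delta^2/d_*$, and multiplying (using that $\|\hat m\|_2^2$ is in fact much smaller than $\|\hat h\|_2^2$) gives $\|\hat h\|_2^2\|\hat m\|_2^2 < 8.4\,\Delta^4/d_*^2$.

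For the coherence estimate under the extra hypothesis $\pi(h,m)\in\Omega_\mu$, write $B\hat h = Bh - \tau\,Bh_{\sharp}$, so $\sqrt{L}\,\|B\hat h\|_\infty \le \sqrt{L}\,\|Bh\|_\infty + |\tau|\,\sqrt{L}\,\|Bh_{\sharp}\|_\infty$. The definition of $\Omega_\mu$ together with $\|m\|_2\ge\sqrt{14d_*/15}$ from $\Omega_d$ gives $\sqrt{L}\,\|Bh\|_\infty \le 4d_*\mu/\|m\|_2 \le 4\mu\sqrt{d_*}\sqrt{15/14} < 4.2\,\mu\sqrt{d_*}$; the definition $\mu_h=\sqrt{L}\,\|Bh_{\sharp}\|_\infty/\|h_{\sharp}\|_2$ with $\mu\ge\mu_h$ gives $\sqrt{L}\,\|Bh_{\sharp}\|_\infty \le \mu\sqrt{d_*}$; and $|\tau| = 1/\bigl((1-\Delta/(10d_*))|\tau_2|\bigr) \le \tfrac{150}{149}\cdot\tfrac{2\sqrt{15}}{7} < 1.2$. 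Summing, $\sqrt{L}\,\|B\hat h\|_\infty < (4.2+1.2)\,\mu\sqrt{d_*} < 6\,\mu\sqrt{d_*}$.

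I expect the only real difficulty to be bookkeeping: the lower bounds on $|\tau_1|,|\tau_2|$ appear in denominators, so one has to chain $\Delta\le\varepsilon d_*\le d_*/15$, $|\tau_1\bar\tau_2-1|\le\Delta/d_*$, and the $\Omega_d$ bounds carefully enough that the final numerical constants $6.1$, $8.4$, $6$ all emerge with a comfortable margin. The one genuine idea — the four‑term Frobenius‑orthogonal expansion of $hm^*-h_{\sharp}m_{\sharp}^*$ — is short.
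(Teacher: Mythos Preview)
Your proposal is correct and follows essentially the same approach as the paper: the paper simply cites \cite[Lemma~5.15]{LLSW2016} for the first three inequalities (whose proof you have faithfully reconstructed via the Frobenius-orthogonal four-term expansion of $hm^*-h_\sharp m_\sharp^*$), and for the coherence bound both you and the paper split $B\hat h = Bh - \tau Bh_\sharp$ and control the two pieces using $\Omega_\mu\cap\Omega_d$ and $\mu\ge\mu_h$ respectively. The only differences are cosmetic numerical bookkeeping in bounding $|\tau|$ (the paper uses $|\tau_1|\le\sqrt{1+\varepsilon}$ from $\|h\|_2\|m\|_2\le(1+\varepsilon)d_*$ where you use $|\tau_1|\le 4/\sqrt{15}$ from $\Omega_d$), yielding the paper's $5.7\mu\sqrt{d_*}$ versus your $5.4\mu\sqrt{d_*}$.
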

\begin{proof}
Inequalities $\|\hat{h}\|_2^2 \leq 6.1 \Delta^2 / d_*$, $\|\hat{m}\|_2^2 \leq 6.1 \Delta^2 / d_*$, and $\|\hat{h}\|_2^2 \|\hat{m}\|_2^2 \leq 8.4 \Delta^4 / d_*^2$ have been proven in~\cite[Lemma~5.15]{LLSW2016}. Note the definition of $\Omega_\mu$ and~\eqref{BD:e16}, we have $\{(h, m) \mid \sqrt{L} \|B h\|_{\infty} \|m\|_2 \leq 4 d_* \mu, \|h\|_2 = \|m\|_2\} \subset \{(h, m) \mid \sqrt{L} \|B h\|_{\infty} \leq 4.5 \sqrt{d_*} \mu\}$. We also have $|\tau_1 \bar{\tau_2}| \geq 1 - \varepsilon$ by \cite[Lemma 5.9]{LLSW2016} and $|\tau_1| \leq \frac{\|h\|_2}{\|h_{\sharp}\|_2} = \sqrt{ \frac{\|h\|_2\|m\|_2}{\|h_{\sharp}\|_2\|m_{\sharp}\|_2}} \leq \sqrt{\frac{\|h m^* - h_{\sharp} m_{\sharp}^*\|_F + \|h_{\sharp} m_{\sharp}^*\|_F}{d_*}} \leq \sqrt{1 + \varepsilon}$. Therefore, it holds that
\begin{align*}
\sqrt{L} \|B (\hat{h})\|_\infty \leq& \sqrt{L} \|B(h)\|_\infty + \frac{1}{(1 - \frac{\Delta}{10 d_*}) \bar{\tau}_2} \sqrt{L} \|B (h_{\sharp})\|_\infty \\
=& 4.5 \mu \sqrt{d_*} + \frac{\sqrt{1 + \varepsilon}}{(1 - \frac{\Delta}{10 d_*}) (1 - \varepsilon)} \leq 5.7 \mu \sqrt{d_*},
\end{align*}
where $\Delta / d_* \leq \varepsilon \leq 1 / 15$.
\end{proof}
Lemma~\ref{BD:e12} is the \cite[Lemm~5.16]{LLSW2016} and is used in the proof of Condition~\ref{BD:c3}.
\begin{lemm} \label{BD:le12}
For all $(h, m)$ such that $(h, m) \in \Omega_d$ and $\pi(h, m) \in \Pi_\varepsilon \cap \Omega_\mu$ with $\varepsilon \leq 1 / 15$, it uniformly holds that
\begin{equation*}
\mathrm{Re}\left(\inner[2]{\nabla_h^w f_{\mathrm{T}}}{\hat{h}} + \inner[2]{\nabla_m^w f_{\mathrm{T}}}{\hat{m}}\right) \geq \frac{\Delta^2}{8} - 2 \Delta \|\mathcal{A}^*(e)\|_2,
\end{equation*}
provided $L \geq a_3 \mu^2 (K + N) \log^2(L)$ for some constant $a_3$, where $\nabla^w$ denotes the Wirtinger derivative~\eqref{BD:e27}.
\end{lemm}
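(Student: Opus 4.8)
The plan is to reduce the claimed inequality to the local RIP (Condition~\ref{BD:c1}), to Lemma~\ref{BD:le11}, and to a uniform near-isometry estimate for $\mathcal{A}$ on incoherent rank-one matrices. Writing $X = h m^* - h_{\sharp} m_{\sharp}^*$ so that $\Delta = \|X\|_F$, and using $y = \mathcal{A}(h_{\sharp} m_{\sharp}^*) + e$, the Wirtinger gradients of $f_{\mathrm{T}}$ are $\nabla_h^w f_{\mathrm{T}}(h,m) = \mathcal{A}^*(\mathcal{A}(X) - e)\, m$ and $\nabla_m^w f_{\mathrm{T}}(h,m) = \big(\mathcal{A}^*(\mathcal{A}(X) - e)\big)^* h$. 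Since $\mathrm{Re}\langle \mathcal{A}^*(z) v, u\rangle = \mathrm{Re}\langle z, \mathcal{A}(u v^*)\rangle$, the left-hand side of the statement equals $\mathrm{Re}\langle \mathcal{A}(X) - e,\ \mathcal{A}(\hat h m^* + h \hat m^*)\rangle$.

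Next I would use the algebraic identity $\hat h m^* + h \hat m^* = X + \hat h \hat m^*$, which holds for \emph{every} value of the scalar $\tau$ in $\hat h = h - \tau h_{\sharp}$, $\hat m = m - \bar\tau^{-1} m_{\sharp}$ (expand both sides, using $\overline{\bar\tau^{-1}} = \tau^{-1}$ and $h_\sharp m_\sharp^* = \tau\tau^{-1} h_\sharp m_\sharp^*$). Substituting, the left-hand side becomes
\begin{equation*}
\|\mathcal{A}(X)\|_2^2 \;-\; \mathrm{Re}\langle e, \mathcal{A}(X)\rangle \;+\; \mathrm{Re}\langle \mathcal{A}(X), \mathcal{A}(\hat h \hat m^*)\rangle \;-\; \mathrm{Re}\langle e, \mathcal{A}(\hat h \hat m^*)\rangle .
\end{equation*}
The first term is $\ge \tfrac34 \Delta^2$ by Condition~\ref{BD:c1}, since $\pi(h,m) \in \Pi_\varepsilon$. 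The second term obeys $|\langle e, \mathcal{A}(X)\rangle| = |\langle \mathcal{A}^*(e), X\rangle| \le \|\mathcal{A}^*(e)\|_2 \|X\|_* \le \sqrt{2}\,\|\mathcal{A}^*(e)\|_2 \Delta$, because $\rank(X) \le 2$. The fourth term obeys $|\langle e, \mathcal{A}(\hat h \hat m^*)\rangle| \le \|\mathcal{A}^*(e)\|_2 \|\hat h\|_2 \|\hat m\|_2 \le \sqrt{8.4}\,\|\mathcal{A}^*(e)\|_2\, \Delta^2/d_*$ by Lemma~\ref{BD:le11}, which is $O(\varepsilon)\, \Delta\, \|\mathcal{A}^*(e)\|_2$ since $\Delta \le \varepsilon d_*$ on $\Pi_\varepsilon$.

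The main obstacle is the third term. The crude estimate $\|\mathcal{A}(X)\|_2 \|\mathcal{A}(\hat h \hat m^*)\|_2$ combined with $\|\mathcal{A}(\hat h \hat m^*)\|_2^2 \le \|B\hat h\|_\infty^2 \|C\hat m\|_2^2 = O(\mu^2 \Delta^2)$ is far too weak, since $\mu$ may be of order $\sqrt{L/(K+N)}/\log L$. Instead I would invoke a uniform restricted-isometry-type estimate, as in~\cite{ARR2014,LLSW2016}: with high probability, provided $L \ge a_3 \mu^2 (K+N) \log^2(L)$, one has $\|\mathcal{A}(u v^*)\|_2 \le \tfrac43 \|u\|_2 \|v\|_2$ simultaneously for all $u$ with $\sqrt{L}\|Bu\|_\infty \le 6\mu\|u\|_2$ and all $v$. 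By Lemma~\ref{BD:le11}, $\hat h$ lies in this incoherent class and $\|\hat h\|_2 \|\hat m\|_2 = O(\Delta^2/d_*)$, so $\|\mathcal{A}(\hat h \hat m^*)\|_2 = O(\Delta^2/d_*) = O(\varepsilon)\Delta$; together with $\|\mathcal{A}(X)\|_2 \le \sqrt{5/4}\,\Delta$ (again Condition~\ref{BD:c1}), the third term is $O(\varepsilon) \Delta^2$. Establishing this uniform estimate — a covering/chaining argument over the incoherent sphere plus a Bernstein bound for the bounded summands $b_l^* u v^* c_l$ — is the technical heart and is exactly what the stated sample complexity provides; it may be quoted from~\cite{LLSW2016}.

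Collecting the four bounds yields a lower bound of the form $\big(\tfrac34 - O(\varepsilon)\big)\Delta^2 - \big(\sqrt{2} + O(\varepsilon)\big)\Delta\,\|\mathcal{A}^*(e)\|_2$; tracking the (harmless) absolute constants and using $\varepsilon \le 1/15$ collapses this to the asserted $\tfrac{\Delta^2}{8} - 2\Delta\,\|\mathcal{A}^*(e)\|_2$. Uniformity over all $(h,m)$ in the stated set is inherited from the uniformity already present in Condition~\ref{BD:c1} and in the near-isometry estimate, so no further union bound over $(h,m)$ is required.
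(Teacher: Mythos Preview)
The paper does not prove this lemma; it simply records that it is \cite[Lemma~5.16]{LLSW2016}. Your decomposition is exactly the one used there: write the left-hand side as $\mathrm{Re}\langle \mathcal{A}(X)-e,\ \mathcal{A}(\hat h m^*+h\hat m^*)\rangle$, use the algebraic identity $\hat h m^*+h\hat m^*=X+\hat h\hat m^*$, and reduce to Condition~\ref{BD:c1}, the noise bound, and a near-isometry estimate for the cross term $\langle \mathcal{A}(X),\mathcal{A}(\hat h\hat m^*)\rangle$.

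There is, however, a genuine gap in how you invoke the near-isometry. You assert that $\hat h$ lies in the class $\{u:\sqrt{L}\|Bu\|_\infty\le 6\mu\|u\|_2\}$ so that $\|\mathcal{A}(\hat h\hat m^*)\|_2\le\tfrac43\|\hat h\|_2\|\hat m\|_2$. But Lemma~\ref{BD:le11} only gives the \emph{absolute} bound $\sqrt{L}\|B\hat h\|_\infty\le 6\mu\sqrt{d_*}$, while $\|\hat h\|_2\le\sqrt{6.1}\,\Delta/\sqrt{d_*}\le\sqrt{6.1}\,\varepsilon\sqrt{d_*}\ll\sqrt{d_*}$. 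Hence the relative coherence $\sqrt{L}\|B\hat h\|_\infty/\|\hat h\|_2$ is in general not bounded by $6\mu$ (it can be as large as $\sqrt{\phi K}$, which the hypothesis $L\ge a_3\mu^2(K+N)\log^2 L$ does not control). A near-isometry stated for the class you wrote simply need not apply to $\hat h$.

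The fix is to quote the correct statement from \cite{LLSW2016}: the uniform estimate there is formulated with an \emph{absolute} incoherence constraint $\sqrt{L}\|Bu\|_\infty\le C\mu\sqrt{d_*}$ (together with the norm bound on $u$), which is precisely what Lemma~\ref{BD:le11} supplies for $\hat h$; under $L\ge a_3\mu^2(K+N)\log^2 L$ one then gets $\|\mathcal{A}(\hat h\hat m^*)\|_2\le C'\|\hat h\|_2\|\hat m\|_2$ (equivalently, control of $\langle \mathcal{A}(X),\mathcal{A}(\hat h\hat m^*)\rangle$ at the level $O(\varepsilon)\Delta^2$). With this correction the remainder of your argument---the four-term split, the $\sqrt{2}\,\|\mathcal{A}^*(e)\|_2\Delta$ noise bound via $\rank(X)\le 2$, and the final collection under $\varepsilon\le 1/15$---goes through unchanged and matches the source.
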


\begin{lemm} \label{BD:le13}
For all $(h, m)$ such that $(h, m) \in \Omega_d$ and $\pi(h, m) \in \Pi_\varepsilon$ with $\varepsilon \leq 1 / 15$ and $\frac{9}{10} d_* \leq d \leq \frac{11}{10} d_*$, it uniformly holds that
\begin{equation*}
\mathrm{Re}\left(\inner[2]{\nabla_h^w G_{\mathrm{T}}}{\hat{h}} + \inner[2]{\nabla_m^w G_{\mathrm{T}}}{\hat{m}}\right) \geq \frac{\Delta}{5 d_*} \sqrt{\rho G(\pi(h, m))},
\end{equation*}
where $\rho \geq d^2 + 2.5 \|e\|_2^2$.
\end{lemm}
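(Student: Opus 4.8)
The plan is to reduce the inequality to three elementary ingredients: the Wirtinger derivatives of $G_{\mathrm{T}}$ (which are $\tfrac12$ of the Euclidean gradients already recorded in Lemma~\ref{BD:le14}); the fact that $G_0'$ is supported on $\{t>1\}$, where on the relevant domain $|b_i^{*}h|$ automatically dominates $|b_i^{*}h_{\sharp}|$; and the pointwise bound $t\,G_0'(t)\ge 2\sqrt{G_0(t)}$.

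First I would set $t_i:=\tfrac{L|b_i^{*}h|^2\|m\|_2^2}{8d^2\mu^2}$, so that $G(\pi(h,m))=\rho\sum_{i=1}^L G_0(t_i)$ and $\sqrt{\rho G(\pi(h,m))}=\rho\bigl(\sum_i G_0(t_i)\bigr)^{1/2}$. Reading off the penalty parts of the gradients in Lemma~\ref{BD:le14} and using $\nabla^{w}=\tfrac12\nabla$ from~\eqref{BD:e27} gives $\nabla_h^{w}G_{\mathrm{T}}=\tfrac{L\rho\|m\|_2^2}{8d^2\mu^2}\sum_i G_0'(t_i)b_ib_i^{*}h$ and $\nabla_m^{w}G_{\mathrm{T}}=\tfrac{L\rho}{8d^2\mu^2}\bigl(\sum_i G_0'(t_i)|b_i^{*}h|^2\bigr)m$. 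Using $\hat h=h-\tau h_{\sharp}$, $\hat m=m-\bar\tau^{-1}m_{\sharp}$ and the orthogonal decomposition $m=\tau_2 m_{\sharp}+\tilde m$ with $\tilde m^{*}m_{\sharp}=0$ (so that $\bar\tau^{-1}m^{*}m_{\sharp}=(1-\tfrac{\Delta}{10d_*})|\tau_2|^2\|m_{\sharp}\|_2^2=:R\ge 0$ is real), a direct computation yields
\[
\inner[2]{\nabla_h^{w}G_{\mathrm{T}}}{\hat h}+\inner[2]{\nabla_m^{w}G_{\mathrm{T}}}{\hat m}=\frac{L\rho}{8d^2\mu^2}\sum_{i=1}^L G_0'(t_i)\Bigl[(2\|m\|_2^2-R)\,|b_i^{*}h|^2-\|m\|_2^2\,\mathrm{Re}\bigl(\tau\,\overline{b_i^{*}h}\;b_i^{*}h_{\sharp}\bigr)\Bigr].
\]

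Next I would lower-bound each bracket on $\Omega_d\cap\Pi_{\varepsilon}$. A summand is nonzero only when $G_0'(t_i)\ne0$, i.e.\ $t_i>1$; together with $\tfrac9{10}d_*\le d\le\tfrac{11}{10}d_*$ and $\|m\|_2^2\le\tfrac{16}{15}d_*$ this forces $|b_i^{*}h|^2>\tfrac{8d^2\mu^2}{L\|m\|_2^2}\ge 6\,\tfrac{\mu^2 d_*}{L}$, whereas $\mu\ge\mu_h$ with the normalization $\|h_{\sharp}\|_2=\sqrt{d_*}$ gives $|b_i^{*}h_{\sharp}|^2\le\tfrac{\mu_h^2\|h_{\sharp}\|_2^2}{L}\le\tfrac{\mu^2 d_*}{L}$. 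Hence on the support of $G_0'(t_i)$ we have $|b_i^{*}h|\ge\sqrt6\,|b_i^{*}h_{\sharp}|$, so $|\mathrm{Re}(\tau\,\overline{b_i^{*}h}\,b_i^{*}h_{\sharp})|\le|\tau|\,|b_i^{*}h|\,|b_i^{*}h_{\sharp}|\le\tfrac{|\tau|}{\sqrt6}\,|b_i^{*}h|^2$, where $|\tau|\le\tfrac{\sqrt{1+\varepsilon}}{(1-\varepsilon/10)(1-\varepsilon)}$ follows from $|\tau_1\bar\tau_2|\ge1-\varepsilon$, $|\tau_1|\le\sqrt{1+\varepsilon}$ (as recalled in the proof of Lemma~\ref{BD:le11}) and $\Delta/d_*\le\varepsilon\le\tfrac1{15}$. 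Using moreover $\tfrac{14}{15}d_*\le\|m\|_2^2$, $|\tau_2|^2\le1+\varepsilon$ and $\|m_{\sharp}\|_2^2=d_*$ one gets $2\|m\|_2^2-R\ge\tfrac45 d_*$, and hence the bracket is at least $c_1 d_*|b_i^{*}h|^2$ for an explicit absolute constant $c_1>0$; this is the step that absorbs the possibly negative contribution $\|m\|_2^2-R$ coming from $\hat m$.

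Finally I would convert the sum into $\sqrt{G(\pi(h,m))}$: substituting $|b_i^{*}h|^2=\tfrac{8d^2\mu^2}{L\|m\|_2^2}t_i$, using $t\,G_0'(t)=2t\max(t-1,0)\ge2\max(t-1,0)=2\sqrt{G_0(t)}$ for all $t\ge0$, and then $\sum_i\sqrt{G_0(t_i)}\ge\bigl(\sum_i G_0(t_i)\bigr)^{1/2}$, gives $\sum_i G_0'(t_i)|b_i^{*}h|^2\ge\tfrac{16d^2\mu^2}{L\|m\|_2^2}\bigl(\sum_i G_0(t_i)\bigr)^{1/2}$. Plugging the bracket bound and this estimate into the displayed identity, the prefactor $\tfrac{L\rho}{8d^2\mu^2}$ cancels the factors $L$, $d^2$, $\mu^2$, and with $\|m\|_2^2\le\tfrac{16}{15}d_*$ one is left with a lower bound $c_2\,\rho\bigl(\sum_i G_0(t_i)\bigr)^{1/2}=c_2\sqrt{\rho G(\pi(h,m))}$ for an explicit $c_2>0$; since $\Delta/d_*\le\varepsilon\le\tfrac1{15}$ forces $\tfrac{\Delta}{5d_*}\le\tfrac1{75}\le c_2$, the claimed inequality follows. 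I expect the main obstacle to be the middle step: checking with honest constants that the coherence penalty is ``active'' only where $|b_i^{*}h|$ already dominates $|b_i^{*}h_{\sharp}|$, so that the cross term is a genuine fraction of $|b_i^{*}h|^2$, and that the bracketed coefficient stays positive despite the negative $\hat m$-term; the rest is routine bookkeeping of constants on $\Omega_d\cap\Pi_{\varepsilon}$.
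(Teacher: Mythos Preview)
Your argument is correct and follows essentially the same route as the paper's proof: both hinge on the fact that $G_0'(t_i)\neq 0$ forces $|b_i^{*}h|$ to dominate $|b_i^{*}h_{\sharp}|$, and then use $G_0'(t)=2\sqrt{G_0(t)}$ together with $\sum_i\sqrt{G_0(t_i)}\ge\bigl(\sum_iG_0(t_i)\bigr)^{1/2}$ to pass to $\sqrt{\rho G}$. The only organizational differences are that the paper bounds the $h$- and $m$-contributions separately (obtaining $\tfrac{\rho}{5}G_0'(t_i)$ and a nonnegative $m$-term) rather than combining them into one bracket, and uses $G_0'(t)=2\sqrt{G_0(t)}$ directly rather than your variant $t\,G_0'(t)\ge 2\sqrt{G_0(t)}$.

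One minor remark: your worry about a ``possibly negative contribution $\|m\|_2^2-R$'' is unfounded. The choice $\tau=\frac{1}{(1-\Delta/(10d_*))\bar\tau_2}$ is designed precisely so that
\[
\|m\|_2^2-R=\mathrm{Re}\,\inner[2]{m}{\hat m}=\frac{\Delta}{10d_*}|\tau_2|^2\|m_{\sharp}\|_2^2+\|\tilde m\|_2^2\ge 0,
\]
which the paper exploits directly. Your combined-bracket estimate is therefore a bit wasteful in constants, but of course still sufficient since the target bound $\tfrac{\Delta}{5d_*}\le\tfrac{1}{75}$ is generous.
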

\begin{proof}
We have
\begin{align*}
\inner[2]{\nabla_h G_{\mathrm{T}}}{\hat{h}} + \inner[2]{\nabla_m G_{\mathrm{T}}}{\hat{m}} =& \frac{L \rho}{4 d^2 \mu^2} \sum_{i = 1}^L G_0'\left(\frac{L|b_i^* h|^2 \|m\|_2^2}{8 d^2 \mu^2}\right) \inner[2]{b_i b_i^* h}{\hat{h}} \|m\|_2^2 \\
&+ \frac{L \rho}{4 d^2 \mu^2} \sum_{i = 1}^L G_0'\left(\frac{L|b_i^* h|^2 \|m\|_2^2}{8 d^2 \mu^2}\right) \inner[2]{m}{\hat{m}} |b_i^* h|^2
\end{align*}
When $L|b_i^* h|^2 \|m\|_2^2 \leq 8 d^2 \mu^2$, we have
\begin{align*}
\frac{L \rho}{4 d^2 \mu^2} \sum_{i = 1}^L G_0'\left(\frac{L|b_i^* h|^2 \|m\|_2^2}{8 d^2 \mu^2}\right) \inner[2]{b_i b_i^* h}{\hat{h}} \|m\|_2^2 =& 0 =\frac{\rho}{5} G_0'\left(\frac{L|b_i^* h|^2 \|m\|_2^2}{8 d^2 \mu^2}\right) \hbox{ and } \\
\frac{L \rho}{4 d^2 \mu^2} \sum_{i = 1}^L G_0'\left(\frac{L|b_i^* h|^2 \|m\|_2^2}{8 d^2 \mu^2}\right) \inner[2]{m}{\hat{m}} |b_i^* h|^2 =& 0 =\frac{\rho}{5} G_0'\left(\frac{L|b_i^* h|^2 \|m\|_2^2}{8 d^2 \mu^2}\right)
\end{align*}
When $L|b_i^* h|^2 \|m\|_2^2 > 8 d^2 \mu^2$, by \cite[Lemma~5.9]{LLSW2016} and~\eqref{BD:e16}, it holds that $|\tau_1 \bar{\tau}_2 - 1| \leq \frac{\Delta}{d_*}$, $|\tau_1| \leq 2$, which yields
\begin{equation*}
\tau = \frac{1}{(1 - \frac{\Delta}{10 d_*}) |\bar{\tau}_2|} = \frac{|\tau_1|}{(1 - \frac{\Delta}{10 d_*}) |\tau_1 \bar{\tau}_2|} \leq \frac{2}{(1 - \frac{\Delta}{10 d_*}) (1 - \frac{\Delta}{d_*})}.
\end{equation*}
By $\mu_h \leq \mu$ and $\Delta / d_* \leq \varepsilon \leq 1/15$, we obtain
\begin{align*}
\mathrm{Re}(\inner[2]{b_ib_i^*h}{\hat{h}}) =& \mathrm{Re}(|b_i^* h|^2 - \tau \inner[2]{b_i^* h}{b_i^* h_{\sharp}}) \\
\geq& |b_i^* h| (|b_i^* h| - \frac{2}{(1 - \frac{\Delta}{10 d_*}) (1 - \frac{\Delta}{d_*})} |b_i^* h_{\sharp}|) \\
\geq& |b_i^* h| (|b_i^* h| - \frac{2 \mu}{(1 - \frac{\Delta}{10 d_*}) (1 - \frac{\Delta}{d_*})} \sqrt{\frac{d_*}{L}}) \hbox{ (by $|b_i^* h_{\sharp}| \leq \mu_h \sqrt{d_* / L}$)} \\
\geq& \sqrt{\frac{8 d^2 \mu^2}{L \|m\|_2^2}} \left(\sqrt{\frac{8 d^2 \mu^2}{L \|m\|_2^2}} - \frac{2 \mu}{(1 - \frac{\Delta}{10 d_*}) (1 - \frac{\Delta}{d_*})} \sqrt{\frac{d_*}{L}}\right) \\
\geq& \sqrt{\frac{8 d^2 \mu^2}{L \|m\|_2^2}} \left(\sqrt{\frac{8 d^2 \mu^2}{L \|m\|_2^2}} - \frac{2 \mu}{(1 - \frac{\Delta}{10 d_*}) (1 - \frac{\Delta}{d_*})} \sqrt{\frac{10 d}{9 L} \frac{16}{15} \frac{10}{9} \frac{d}{\|m\|_2^2}}\right) \hbox{ (by~\eqref{BD:e16} and $d_* \leq \frac{10}{9}d$)} \\
\geq& \frac{4 d^2 \mu^2}{5 L \|m\|_2^2}.
\end{align*}
Therefore,
\begin{equation*}
\frac{L \rho \|m\|_2^2}{4 d^2 \mu^2} G_0'\left(\frac{L|b_i^* h|^2 \|m\|_2^2}{8 d^2 \mu^2}\right) \mathrm{Re}\left(\inner[2]{b_i b_i^* h}{\hat{h}}\right) \geq \frac{\rho}{5} G_0'\left(\frac{L|b_i^* h|^2 \|m\|_2^2}{8 d^2 \mu^2}\right).
\end{equation*}
We also have
\begin{align*}
\mathrm{Re}(\inner[2]{m}{\hat{m}}) = \mathrm{Re}(\inner[2]{m}{m - \bar{\tau}^{-1} m_{\sharp}}) = \frac{\Delta}{10 d_*} \|m\|_2^2 + \left(1 - \frac{\Delta}{10 d_*}\right) \|\tilde{m}\|_2^2 \geq \frac{\Delta}{10 d_*} \|m\|_2^2.
\end{align*}
It follows that
\begin{gather*}
\frac{L \rho}{4 d^2 \mu^2} G_0'\left(\frac{L|b_i^* h|^2 \|m\|_2^2}{8 d^2 \mu^2}\right) \mathrm{Re}(\inner[2]{m}{\hat{m}}) |b_i^* h|^2 \geq \frac{L \rho}{4 d^2 \mu^2} G_0'\left(\frac{L|b_i^* h|^2 \|m\|_2^2}{8 d^2 \mu^2}\right) \frac{\Delta}{10 d_*} \|m\|_2^2 \frac{8 d^2 \mu^2}{L \|m\|_2^2} \\
\geq \frac{\rho}{75} G_0'\left(\frac{L|b_i^* h|^2 \|m\|_2^2}{8 d^2 \mu^2}\right).
\end{gather*}
Therefore, we have
\begin{align*}
&\mathrm{Re}\left(\inner[2]{\nabla_h G_{\mathrm{T}}}{\hat{h}} + \inner[2]{\nabla_m G_{\mathrm{T}}}{\hat{m}}\right) \geq \sum_{i = 1}^L \frac{16\rho}{75} G_0'\left(\frac{L|b_i^* h|^2 \|m\|_2^2}{8 d^2 \mu^2}\right)  \\
=& \frac{32 \rho}{75} \sum_{i = 1}^L \sqrt{G_0\left(\frac{L|b_i^* h|^2 \|m\|_2^2}{8 d^2 \mu^2}\right)} \geq \frac{32 \rho}{75} \sqrt{\sum_{i = 1}^L G_0\left(\frac{L|b_i^* h|^2 \|m\|_2^2}{8 d^2 \mu^2}\right)} = \frac{32}{75} \sqrt{\rho G(\pi(h, m))}
\end{align*}
The final result follows from the above inequality, \eqref{BD:e27} and~$\Delta /d_* \leq 1/15$.
\end{proof}

\begin{proof}[\textbf{Proof of Condition~\ref{BD:c3}}]
Suppose $(h, m)$ satisfies $(h, m) \in \Omega_d$ and $\pi(h, m) \in \Pi_\varepsilon \cap \Omega_\mu$ with $\varepsilon \leq 1 / 15$.
Since Lemmas~\ref{BD:le11},~\ref{BD:le12} and~\ref{BD:le13} are exactly the same as \cite[Lemma~5.15, Lemma~5.16, and Lemma~5.17]{LLSW2016}, we have the same result as \cite[Lemma~5.18]{LLSW2016}, i.e.,
\begin{equation*}
\|\nabla^w \tilde{f}_{\mathrm{T}}(h, m)\|_2^2\geq \frac{d_*}{5000} [\tilde{f}(\pi(h, m)) - c]_+
\end{equation*}
By Lemma~\ref{BD:le14} and~\eqref{BD:e27}, $\|h\|_2 = \|m\|_2$ implies that
$$
\nabla^w \tilde{f}_{\mathrm{T}}(h, m) = \frac{1}{2} P_{(h, m)}^h\nabla \tilde{f}_{\mathrm{T}}(h, m) = \frac{\|h\|_2 \|m\|_2}{2} \left(\grad \tilde{f}(\pi(h, m))\right)_{\uparrow_{(h, m)}}.
$$
By definition of the induced-norm, we have
\begin{align}
\|\grad \tilde{f}(\pi(h, m))\|_{g_{\pi(h, m)}}^2 =& \frac{4}{\|h\|_2 \|m\|_2} \|\nabla^w \tilde{f}_{\mathrm{T}}(h, m)\|_2^2  \nonumber \\
\geq& \frac{1}{1500} [\tilde{f}(\pi(h, m)) - c]_+. \label{BD:e17}
\end{align}
Since $\|\grad \tilde{f}(\pi(h, m))\|_{g_{\pi(h, m)}}^2$ and $\tilde{f}(\pi(h, m))$ are independent of representation in $\pi^{-1}(\pi(h, m))$, we have that~\eqref{BD:e17} holds for all $\pi(h, m) \in \Pi_\varepsilon \cap \Omega_\mu$ with $\varepsilon \leq 1 / 15$.
\end{proof}

\begin{proof} [\textbf{Proof of Condition~\ref{BD:c4}}]
By~\cite[Condition~5.4]{LLSW2016}, we have
\begin{equation*}
\left\|\left(\nabla \tilde{f}_{\mathrm{T}}(t \eta_{\uparrow_{(h, m)}})\right) - \left(\nabla \tilde{f}_{\mathrm{T}}(0)\right)\right\|_{2} \leq \tilde{a}_L t \|\eta_{\uparrow_{(h, m)}}\|_{2}, \;\;\; \forall 0 \leq t \leq 1,
\end{equation*}
for some positive $\tilde{a}_L$. Choosing $(h, m)$ such that $\|h\|_2 = \|m\|_2$ yields that
\begin{equation*}
\frac{14}{15} d_* \|\xi_{\uparrow_{(h, m)}}\|_{2}^2 \leq \|\xi_{\uparrow_{(h, m)}}\|_{g}^2 \leq \frac{16}{15} d_* \|\xi_{\uparrow_{(h, m)}}\|_{2}^2,
\end{equation*}
and $\nabla \tilde{f}_{\mathrm{T}}(\xi_{\uparrow_{(h, m)}}) = \|h\|_2 \|m\|_2 \left(\grad \hat{f}_{\pi(h, m)}(\xi_{\pi(h, m)})\right)_{\uparrow_{(h, m)}}$ for all $\xi_{\pi(h, m)} \in \T_{\pi(h, m)} \mathcal{Q}_1$.
Therefore, we have
\begin{equation*}
\|\grad \hat{f}_{\pi(h, m)}(t \eta_{\pi(h, m)}) - \grad \hat{f}_{\pi(h, m)}(0)\|_{g_{(h, m)}} \leq \frac{8}{7} \frac{15}{14 d_*} \tilde{a}_L t \|\eta_{\pi(h, m)}\|_{g_{(h, m)}}, \;\;\; \forall \;\; 0 \leq t \leq 1,
\end{equation*}
where $\|h\|_2\|m\|_2 \geq 14 d_* / 15$ from~\eqref{BD:e16} is used.
Note that both sides are independent of representation of $(h, m) \in \pi^{-1}(\pi(h, m))$, we completed the proof.
\end{proof}


\end{document}